\renewcommand\@formatdoi[1]{\ignorespaces}
\let\@authorsaddresses\@empty
\renewcommand\footnotetextcopyrightpermission[1]{} 
\newtheorem{theorem}{Theorem}[section]
\newtheorem{lemma}[theorem]{Lemma}
\newcommand{\partdiff}[2]{\frac{\partial {#1}}{\partial {#2}}}
\newcommand{\mixdiff}[3]{\frac{\partial^2 {#1}}{{\partial {#2}}{\partial {#3}}}}
\newcommand{\iprod}[2]{\left\langle {#1}, 
{#2} \right \rangle}
\newcommand{\eps}{\varepsilon}
\newcommand{\E}{\mbox{\bf E}}
\newcommand{\cE}{{\mathcal E}}
\newcommand{\cQ}{{\mathcal Q}}
\newcommand{\RR}{{\mathbb R}}
\newcommand{\NN}{{\mathbb N}}
\newcommand{\vx}{\mathbf{x}}
\newcommand{\vy}{\mathbf{y}}
\newcommand{\xdot}{\dot \vx}
\newcommand{\ydot}{\dot \vy}
\newcommand{\dt}{\mathrm{d}t}
\def\b1{{\bf 1}}
\def\be{{\bf e}}
\def\bx{{\bf x}}
\def\by{{\bf y}}
\newcommand{\prob}[1]{\textnormal{Pr}(#1)}
\title{A Polynomial Lower Bound on Adaptive Complexity of Submodular Maximization}
\author{Wenzheng Li}
\affiliation{%
  \institution{Stanford University}
  \city{Stanford}
  \state{CA}
  \country{USA}
}
\email{wzli@stanford.edu}
\author{Paul Liu}
\affiliation{%
  \institution{Stanford University}
  \city{Stanford}
  \state{CA}
  \country{USA}
}
\email{paul.liu@stanford.edu}
\author{Jan Vondr\'{a}k}
\affiliation{%
  \institution{Stanford University}
  \city{Stanford}
  \state{CA}
  \country{USA}
}
\email{jvondrak@stanford.edu}
\keywords{submodular, adaptive model, optimization, lower bound, symmetry gap, double greedy}
\begin{document}

\begin{abstract}
In large-data applications, it is desirable to design algorithms with a high degree of parallelization.  In the context of submodular optimization, adaptive complexity has become a widely-used measure of an algorithm's ``sequentiality". Algorithms in the adaptive model proceed in rounds, and can issue polynomially many queries to a function $f$ in each round. The queries in each round must be independent, produced by a computation that depends only on query results obtained in previous rounds.

In this work, we examine two fundamental variants of submodular maximization in the adaptive complexity model:  cardinality-constrained monotone maximization, and unconstrained non-mono-tone maximization. Our main result is that an $r$-round algorithm for cardinality-constrained monotone maximization cannot achieve an approximation factor better than $1 - 1/e - \Omega(\min \{ \frac{1}{r}, \frac{\log^2 n}{r^3} \})$, for any $r < n^c$ (where $c>0$ is some constant).
This is the first result showing that the number of rounds must blow up polynomially large as we approach the optimal factor of $1-1/e$.

For the unconstrained non-monotone maximization problem, we show a positive result: For every instance, and every $\delta>0$, either we obtain a $(1/2-\delta)$-approximation in $1$ round, or a $(1/2+\Omega(\delta^2))$-approximation in $O(1/\delta^2)$ rounds. In particular (and in contrast to the cardinality-constrained case), there cannot be an instance where (i) it is impossible to achieve an approximation factor better than $1/2$ regardless of the number of rounds, and (ii) it takes $r$ rounds to achieve a factor of $1/2-O(1/r)$.
\end{abstract}
\maketitle

\section{Introduction}
Let $E$ be a set of size $n$, and $f\,:\,2^E \rightarrow \mathbb{R}_+$ a function satisfying $f(S \cup \{e\})-f(S)\geq f(T \cup \{e\}) - f(T)$ for all $S \subseteq T \subseteq E \setminus \{e\}$. Such a function is called \emph{submodular}. When $f(S) \leq f(T)$ for all $S \subseteq T$, $f$ is called \emph{monotone}. 
Submodular functions capture a notion of \emph{diminishing returns}, where the gain $f(S \cup \{e\})-f(S)$ gets smaller as $S$ gets larger. This notion arises naturally in combinatorial optimization, algorithmic game theory, and machine learning, among other domains (see ~\cite{KG14} and the references contained therein). As such, there has been a wealth of research on submodularity over the past few decades. 

As datasets grow larger in size however, there has been renewed attention examining submodular optimization in several computing models for large-scale data. These models typically assume oracle access to a submodular function $f$ and restrict the computation in certain ways. Such models include streaming~\cite{KMZLK19, BMKK14, MV17}, MapReduce~\cite{BENW16, LV19, MZ15, KMVV15, BENW15, MKSK13}, and more recently the adaptive complexity model~\cite{BS18, BS18b, BRS19b, BBS19a, CK19, FMZ19a, EN19,  CQ19, ENV19b, KMZLK19}. Algorithms in the streaming model examine elements of $E$ one at a time, with the only constraint being limited memory. Algorithms in the MapReduce model partition the dataset among many machines with smaller memory, and run local computations on each machine independently. In both models, algorithms are typically memory-efficient, with algorithms running sequentially once the data to be processed is small enough. However, in many applications oracle queries to $f$ are the dominant computational bottleneck. Thus, long chains of sequential queries drastically slow down an algorithm.
For these applications, the adaptive model of Balkanski and Singer~\cite{BS18} offers more relevant constraints. Algorithms in the adaptive model proceed in \emph{rounds}, and can issue polynomially many queries to $f$ in each round. The queries in each round must be independent, and can be generated by the algorithm based on query results obtained in previous rounds. Informally, the adaptive model measures complexity by the longest chain of sequentially dependent calls to $f$ in the algorithm. Consequently, standard greedy algorithms which examine one element at a time have essentially worst possible adaptive complexity. 

Over the past two years, there has been a burst of work in adaptivity-efficient algorithms for maximizing submodular functions~\cite{BBS19a, CK19, FMZ19a, EN19, BS18b, BRS19b, CQ19, ENV19b, KMZLK19, BS18}. For maximizing a monotone submodular function under a cardinality constraint, several independently developed algorithms are known~\cite{BRS19, CK19, FMZ19a, EN19} which surprisingly all achieve a close-to-optimal $1-1/e-\eps$ approximation ratio using $O(\frac{1}{\epsilon^2} \log n)$ adaptive rounds. Moreover, Breuer et al.~\cite{BBS19a} have developed an $O(\frac{1}{\eps^2} \log n \log^2 \log k)$-adaptive algorithm that outperforms (in practice) all current theoretically state-of-the-art algorithms in practice. In the case of a matroid constraint, the best theoretical results achieve a $(1-1/e-\eps)$-approximation with $O(\frac{1}{\epsilon^3} \log n \log k)$ adaptivity~\cite{BRS19b, CQ19}, where $k$ is the rank of the underlying matroid. 
The known results are somewhat weaker in the cardinality-constrained non-monotone setting. Mirrokni et al. \cite{FMZ19b} developed a $(0.039-\eps)$-approximation in $O(\frac{1}{\eps} \log n)$ adaptive rounds, Balkanski et al. \cite{BBS18} designed a $1/2e$-approximation in $O(\log^2 n)$ rounds, and Ene et al.~\cite{ENV19b} achieved a $(1/e-\eps)$-approximation in $O(\frac{1}{\eps^2} \log^2 n)$ rounds. 

In the unconstrained non-monotone case, two independent works developed a $(1/2-\eps)$-approximation in $\tilde{O}(1 / \eps)$ rounds \cite{CFK19,ENV19}. This is achieved through a low-adaptivity version of the \emph{double greedy} algorithm of Buchbinder et al~\cite{BFNS15}. Interestingly, the unconstrained non-monotone case seems unique in that the number of rounds is {\em independent of $n$}.
From hardness results in the sequential model, it is known that $1/2$ is the optimal factor for unconstrained submodular maximization \cite{FMV11}, and $1-1/e$ is optimal for cardinality constrained monotone submodular maximization~\cite{Feige98}.

Considering that so many different algorithms are exhibiting a similar behavior --- adaptive complexity blowing up polynomially as we approach the optimal approximation factor, a natural question is whether this is necessary. The only non-trivial lower bound that we are aware of appears in the initial work by Balkanski and Singer~\cite{BS18}: $\Omega\left(\frac{\log n }{\log \log n}\right)$ rounds are necessary to achieve a $\frac{1}{\log n}$-approximation for the cardinality-constrained monotone maximization problem. No stronger lower bounds were known for achieving constant factors, even close to $1-1/e$.

\subsection{Our results}

In this work, we prove two main results. 

\paragraph{Monotone submodular maximization.}
Our first result is a {\em polynomial} lower bound on the adaptive complexity of algorithms for cardinality-constrained monotone submodular maximization.

\begingroup
\def\thetheorem{\ref{thm:monotone-log-lb}}
\begin{theorem}
For any $r, n \in \NN$, $r < n^c$, where $c>0$ is some absolute constant, there is no algorithm using $r$ rounds of queries and achieving better than a $\left(1-1/e-\Omega\left(\min \{ \frac{1}{r}, \frac{\log^2 n}{r^3} \}\right)\right)$-approximation for monotone submodular maximization subject to a cardinality constraint (on a ground set of size $n$). 
\end{theorem}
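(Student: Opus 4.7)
The plan is to construct, for each pair $(r, n)$, a family of hard instances via the symmetry gap framework combined with a multi-layer ``hiding'' mechanism that forces any $r$-round algorithm to uncover one layer per round. Concretely, I would build two monotone submodular functions $f_0, f_H$ on a ground set of size $n$ such that: (i) $f_0$ is invariant under a suitable symmetry group acting on a partition of $E$ into blocks, and the best $k$-subset for $f_0$ achieves only $(1 - 1/e - \Omega(\min\{1/r,\, \log^2 n / r^3\})) \cdot \mathrm{OPT}$; (ii) $f_H$ coincides with $f_0$ except that it additionally rewards intersection with a planted nested sequence $H_1 \supset H_2 \supset \cdots \supset H_r$ drawn by a random permutation, and it has optimum essentially $\mathrm{OPT}$; and (iii) for any $r$-round algorithm making at most $n^c$ queries per round, the query transcripts on $f_0$ and on $f_H$ (averaged over the random planting) are statistically close.

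The reduction to the theorem statement is then standard. The symmetry gap theorem of Vondr\'{a}k implies that an algorithm whose queries carry essentially no information about the planting cannot, in expectation, exceed the symmetric value of $f_0$ at its best $k$-set. Combined with the indistinguishability of part (iii), any $r$-round algorithm with $n^c$ queries per round achieves, in expectation over the planting, at most the same $(1 - 1/e - \Omega(\cdot))$ factor against $f_H$. Fixing a planting that realizes the average and invoking Yao's principle promotes this to a worst-case lower bound against randomized algorithms.

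The main technical obstacle is the indistinguishability claim, which I would prove by induction on rounds. After round $i$, the algorithm has essentially only learned about the outer layers $H_1, \ldots, H_i$; conditionally on this partial information, the locations of $H_{i+1}, \ldots, H_r$ remain close to uniform inside the appropriate block. For each of the $n^c$ queries $Q$ in round $i+1$, Chernoff/Azuma tail bounds on the random placement show that $|Q \cap H_j|$ concentrates tightly around its expectation for $j > i$, so $f_H(Q)$ and $f_0(Q)$ differ by at most $1/\mathrm{poly}(n)$; a union bound over the polynomially many queries in the round preserves conditional indistinguishability and closes the inductive step.

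The delicate calibration, and the place where the bound $\min\{1/r, \log^2 n / r^3\}$ is forced, is the choice of layer sizes and per-layer marginal contributions. Each $H_j$ must be large enough within its block that $n^c$ queries cannot detect it (roughly, a $\log n$ factor above the detection threshold imposed by a Chernoff bound against $n^c$ random-looking probes), yet small enough that the combined bonus contributed to $f_H$ sums across the $r$ layers to the target gap without overshooting $\mathrm{OPT}$. For $r \leq \log n$ the detection constraint is loose, and a roughly uniform choice of layer sizes yields a per-layer improvement of $\Theta(1/r^2)$ summing to the $\Theta(1/r)$ regime; for $r > \log n$ the detection threshold tightens and the per-layer contribution shrinks to $\Theta(\log n / r^2)$, summing to $\log^2 n / r^3$ across the $r$ layers. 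Getting this balance right — simultaneously maintaining submodularity, monotonicity, the $1-1/e$ symmetry gap of $f_0$, and sharp indistinguishability against adaptive algorithms — is where I expect the bulk of the technical work to lie.
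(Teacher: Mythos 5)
Your high-level framework (a layered partition learned one layer per round, combined with a symmetry-gap-style objective) matches the paper's, but the proposal is missing the one idea the theorem actually hinges on: how to keep deep layers invisible when they shrink only by constant factors (or by $1+\Theta(\frac{\log n}{r})$ in the poly-round regime). Your indistinguishability rests on a detection-threshold calibration --- each hidden layer sized ``a $\log n$ factor above the detection threshold'' against $n^c$ probes --- which is exactly the Balkanski--Singer style argument; it forces consecutive layers to differ by polylogarithmic factors and therefore caps the number of layers, hence rounds, at $O(\log n/\log\log n)$. It cannot reach $r=n^{c}$, because with constant shrinkage a typical query \emph{does} contain many elements of every deep layer. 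The paper's fix is structural rather than a calibration: the objective is built from pairwise gadgets $h(x_i,x_{i+1})$ that on the high-probability ``balanced'' region depend only on $x_i+x_{i+1}$, so typical queries reveal nothing about the split between consecutive unknown layers even though they hit them heavily, and deviations from balance are \emph{penalized} rather than rewarded. Your reward-based pair $(f_0,f_H)$ --- where $f_H-f_0$ is $1/\mathrm{poly}(n)$ on every typical query yet the bonuses sum to a macroscopic gap --- is never constructed (and monotonicity/submodularity of such a reward is not addressed); moreover claim (iii), full transcript closeness over all $r$ rounds, contradicts your own inductive picture in which the algorithm genuinely learns $H_1,\ldots,H_i$ after $i$ rounds, and if (iii) held verbatim the resulting gap would be independent of $r$, which is impossible given the known $O(\frac{1}{\eps^2}\log n)$-round $(1-1/e-\eps)$ algorithms.

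The second gap is that nothing in your argument bounds the value of the best solution the algorithm can output \emph{after} it has legitimately learned its $r$ layers; invoking the symmetry gap theorem as a black box does not do this, since that theorem concerns algorithms that never detect any asymmetry, not algorithms that uncover one layer per round. In the paper this is the core of the analysis: after $r$ rounds every returned set is, with high probability, forced to be balanced across all unknown coordinates, and then a purely deterministic optimization --- lower-bounding the accumulated asymmetry penalty $4x_1^2+\sum_{i\ge 2}(2x_i-x_{i-1})^2$ by Cauchy--Schwarz subject to the cardinality constraint --- shows an unavoidable loss of $\Omega(1/r)$ (respectively $\Omega(\delta^2/r)=\Omega(\log^2 n/r^3)$ with shrinkage $1+\delta$, $\delta=\Theta(\frac{\log n}{r})$) on top of the $1-1/e$ baseline, which is itself supplied by a separate hard instance embedded on the $Y$-blocks with only polynomially small error (a sharpened $1-1/e+n^{-\Omega(1)}$ hardness, itself nontrivial). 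Your calibration paragraph signals that this bookkeeping is absent: $r$ layers each contributing $\Theta(\log n/r^2)$ sum to $\Theta(\log n/r)$, not $\log^2 n/r^3$; the correct accounting is a \emph{quadratic} per-layer penalty of order $(\log n/r^2)^2$, whose sum over $r$ layers gives the stated $\Theta(\log^2 n/r^3)$.
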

\addtocounter{theorem}{-1}
\endgroup

This is the first result showing that if we want to approach the optimal factor of $1-1/e$, the adaptive complexity must blow up to polynomially large factors.
(The hard instances of \cite{BS18} are unrelated to the factor of $1-1/e$; they allow one to compute the optimal solution in $O\left(\frac{\log n}{\log \log n}\right)$ rounds.)

As the statement of the result suggests, we consider two regimes of $\eps := \Omega(\min \{ \frac{1}{r}, \frac{\log^2 n}{r^3} \})$: For $\eps > c' / \log n$, our lower bound on the number of rounds to achieve a $(1-1/e-\eps)$-approximation is $\Omega(1/\eps)$. For $\eps < c' / \log n$, our lower bound is $\Omega\left(\frac{\log^{2/3} n}{\eps^{1/3}}\right)$. Apart from this quantitative difference, the first regime has another feature: We provide a single (randomized) instance for a given $n$ such that achieving a $(1-1/e-\eps)$-approximation requires $\Omega(1/\eps)$ rounds for every $\eps > c' / \log n$. The instances in the second regime are different depending on the value of $\eps$.

As a building block for our result, we design a simple hard instance for (sequential) monotone submodular maximization. It implies the following result, which could be of independent interest.

\begingroup
\def\thetheorem{\ref{thm:1-1/e}}
\begin{theorem}
For the monotone submodular maximization problem subject to a cardinality constraint, $\max \{f(S): |S| \leq k \}$, any $(1-1/e+\Omega(n^{-1/4}))$-approximation algorithm on instances with $n$ elements would require exponentially many value queries.
\end{theorem}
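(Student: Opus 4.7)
The plan is to combine the symmetry-gap framework with a quantitative concentration argument on a randomly planted instance. I would fix $k = \lfloor \sqrt{n} \rfloor$ as the cardinality bound and construct a random family of monotone submodular functions $\{f_R : R \subseteq [n],\ |R| = k\}$ indexed by a uniformly random planted set $R$, with the properties that $f_R(R) = k$ and that the symmetrized function $\tilde f(S) := \E_R[f_R(S)]$ (depending only on $|S|$) satisfies $\max_{|S|\le k} \tilde f(S) = (1-1/e)k + o(k)$. A concrete candidate is a coverage-type function $f_R(S) = \left|\bigcup_{i \in S} A_i\right|$ on an auxiliary universe $V$ of size $k$, where $\{A_i\}_{i \in R}$ is a re-randomized exact cover of $V$ and $\{A_i\}_{i \notin R}$ are independent random subsets of $V$ of matched average size; the resulting instance has the desired symmetry gap of $1-1/e$ together with monotonicity and submodularity for free.

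Next I would prove indistinguishability. For any fixed query $Q \subseteq [n]$, $|Q \cap R|$ is hypergeometric with mean $k|Q|/n$ and standard deviation $O(\sqrt{k}) = O(n^{1/4})$; by Chernoff--Hoeffding together with a union bound over the $n^{O(1)}$ queries made by any deterministic algorithm $\mathcal{A}$, with probability $1 - n^{-\omega(1)}$ over $R$ every query value $f_R(Q)$ lies within $O(n^{1/4}\sqrt{\log n})$ of the symmetric expectation $\tilde f(Q)$. Hence $\mathcal{A}$'s execution on $f_R$ matches its execution on $\tilde f$ up to negligible value perturbations, so its output $S^*$ is essentially $R$-independent. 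A subsequent application of concentration to $S^*$ (or, equivalently, an information-theoretic bound on how many elements of $R$ the algorithm can identify from $n^{O(1)}$ noisy ``counting'' queries) then yields $|S^* \cap R| = O(n^{1/4}\sqrt{\log n})$ with high probability.

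Combining the two gives $f_R(S^*) \le \tilde f(S^*) + O(n^{1/4}\sqrt{\log n}) \le (1-1/e)k + O(n^{1/4}\sqrt{\log n})$ w.h.p., while $f_R(R) = k$. Dividing by $k = \sqrt{n}$ yields an approximation ratio of at most $1 - 1/e + O(n^{-1/4}\sqrt{\log n})$; a mild reduction in $k$ (or inflation of $n$ by a polylog factor) absorbs the $\sqrt{\log n}$ into the $\Omega(n^{-1/4})$, and Yao's minimax converts the distributional bound into worst-case hardness for randomized algorithms. The main obstacle is the explicit construction of $f_R$: a naive linear attempt such as $f_R(S) = |S \cap R| + (1-1/e)|S \setminus R|$ is modular, and singleton queries $f_R(\{i\})$ immediately reveal membership in $R$. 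The dependence on $R$ must therefore be ``dispersed'' over an auxiliary universe so that small queries already return values sharply concentrated around their symmetric expectation, while simultaneously preserving submodularity, monotonicity, and the $1-1/e$ symmetry gap; this is the technical heart of the argument, and the choice $k = \sqrt{n}$ is what balances the $O(\sqrt{k})$ variance against the optimum $k$ to produce precisely the $n^{-1/4}$ rate in the conclusion.
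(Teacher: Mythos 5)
There is a genuine gap, and it sits exactly where you flag ``the technical heart'': your concrete candidate does not have the hardness property, and your indistinguishability step is not valid for adaptive algorithms. For the coverage candidate ($|V|=k$, $\{A_i\}_{i\in R}$ an exact cover, hence essentially singletons, and $\{A_i\}_{i\notin R}$ i.i.d.\ random singletons), the plain greedy algorithm solves the instance exactly with $O(nk)$ value queries: the marginal $f_R(S\cup\{i\})-f_R(S)\in\{0,1\}$ reveals whether $A_i$ is already covered, and as long as coverage is below $k$ there are $\Omega(n/k)$ elements of marginal value $1$, so greedy reaches $f_R(S)=k$ after $k$ rounds without ever learning $R$. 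This also shows why the step ``every query value lies within $O(n^{1/4}\sqrt{\log n})$ of $\tilde f(Q)$, hence the execution matches the symmetric one and $S^*$ is essentially $R$-independent'' fails: with only approximate concentration the adaptive computation path is not fixed, and $O(1)$-size fluctuations (well inside your error band) are precisely the information an algorithm exploits. Relatedly, your final chain needs $f_R(S^*)\le\tilde f(S^*)+O(n^{1/4}\sqrt{\log n})$ for the \emph{output} set, which does not follow from $|S^*\cap R|$ being small: in the coverage candidate there are many sets of non-$R$ elements with value $k$, far above the symmetrized value $(1-1/e)k$. What is really needed is a function whose value on every ``balanced'' query is \emph{exactly} a function of $|Q|$ (so the adaptive path is literally determined and the union bound over a fixed query set applies), and for which \emph{every} balanced feasible set has value at most $1-1/e$ while the planted sets have value $\approx 1$.

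The paper supplies exactly this missing ingredient. It partitions $[n]$ into $\ell'=\sqrt n$ random blocks $Y_1,\ldots,Y_{\ell'}$ of size $k=\sqrt n$ and sets $f(S)=g\bigl(|S\cap Y_1|/k,\ldots,|S\cap Y_{\ell'}|/k\bigr)$, where $g$ is an explicit smooth monotone submodular function (built from a truncated single-variable function $\gamma$ and a product, capped by $\min\{\cdot,1-\eps\}$ with $\eps=2n^{-1/4}$) satisfying: $g(1,0,\ldots,0)=1-\eps$, and on the balanced region $|y_i-\bar y|\le\eps/2$ one has $g(\by)=\min\{1-e^{-\sum_j y_j},\,1-\eps\}$, i.e.\ a value depending only on $|Q|$. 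Hypergeometric concentration at scale $n^{1/4}$ (which is where your $k=\sqrt n$, $n^{-1/4}$ numerology reappears) then shows all poly many queries stay balanced w.h.p., the computation path is fixed, and any balanced set of size at most $k$ has value at most $1-1/e$, versus $\mathrm{OPT}=1-\eps$ attained by a block $Y_i$. So your high-level plan (random hidden structure at scale $\sqrt n$, symmetrized value $1-1/e$, Chernoff--Hoeffding plus union bound, Yao) matches the paper, but the construction you propose is refutable and the approximate-symmetry argument must be replaced by exact symmetry on the typical region, which is the actual content of the paper's Lemma on $g$.
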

\addtocounter{theorem}{-1}
\endgroup

As far as we know, the lower bounds known so far \cite{Feige98,V09,V13} only showed the hardness of $(1-1/e+\Omega(1/\log n))$-approximations. 

\paragraph{Unconstrained submodular maximization.}
Following our hardness result, it is natural to ask whether a similar lower bound holds for unconstrained submodular maximization. Here, the optimal approximation factor is $1/2$ and it is known that a $(1/2-\eps)$-approximation can be achieved in $\tilde{O}(1/\eps)$ adaptive rounds~\cite{CFK19, ENV19}. Hence, a lower bound analogous to our first result, where it takes $\Omega(1/\epsilon)$ rounds to approach the optimal factor within $\eps$, would be {\em optimal} here.
Nevertheless, we show that the situation here is substantially different.

\begingroup
\def\thetheorem{\ref{thm:dg-gains}}
\begin{theorem}
Let $f:2^E \rightarrow \RR_+$ be a non-monotone submodular function with maximum value $\textrm{OPT}$, and let $R$ denote a uniformly random subset of $E$. If $\E[f(R)] \leq (1/2 - \delta) OPT$, then there is an algorithm using $O(1/\delta^2)$ adaptive rounds that achieves value $(1/2 + \Omega(\delta^2))OPT$.
\end{theorem}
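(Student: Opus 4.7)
The plan is to run a low-adaptivity implementation of the randomized double greedy algorithm of Buchbinder \etal \cite{BFNS15} as in \cite{CFK19, ENV19}, and to refine its analysis so that the hypothesis $F(\mathbf{1}/2) = E[f(R)] \leq (1/2-\delta)\mathrm{OPT}$ is converted into an additive improvement of order $\delta^2 \cdot \mathrm{OPT}$ over the classical $1/2$-approximation guarantee. Double greedy is the natural candidate here because its only known tight instance is max-cut, for which $F(\mathbf{1}/2) = \mathrm{OPT}/2$ exactly -- in other words, the $1/2$ barrier is only tight when there is no ``symmetry gap'' at the center of the cube.

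The first step is to isolate the slack in the standard analysis. With truncated marginals $a_i = [f(X\cup\{i\})-f(X)]^+$ and $b_i=[f(Y\setminus\{i\})-f(Y)]^+$, the expected one-step gain in $f(X)+f(Y)$ equals $(a_i^2+b_i^2)/(a_i+b_i)$ while the expected decrease of the OPT-shadow $f(\mathrm{OPT}_i)$ is at most $a_i b_i/(a_i+b_i)$; using the identity $(a_i^2+b_i^2)/(a_i+b_i) - 2a_ib_i/(a_i+b_i) = (a_i-b_i)^2/(a_i+b_i)$ and telescoping gives
\[
E[f(\mathrm{ALG})] \;\geq\; \mathrm{OPT}/2 \;+\; \tfrac{1}{4}\sum_i E\!\left[\frac{(a_i-b_i)^2}{a_i+b_i}\right].
\]
So the theorem reduces to showing that $\sigma := \sum_i E[(a_i-b_i)^2/(a_i+b_i)] \geq \Omega(\delta^2)\mathrm{OPT}$ whenever $F(\mathbf{1}/2) \leq (1/2-\delta)\mathrm{OPT}$.

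To lower bound $\sigma$, I would compare randomized double greedy with the symmetric ``coin-flip'' variant that adds or removes each element with probability exactly $1/2$, whose output is a uniformly random subset $R$ with expected value $F(\mathbf{1}/2)$. The bias $|p_i-1/2| = |a_i-b_i|/(2(a_i+b_i))$ at step $i$ vanishes precisely when the slack term does, so small $\sigma$ corresponds to double greedy behaving approximately like a coin-flip walk. A variance/martingale bound of the form $|E[f(\mathrm{ALG})] - F(\mathbf{1}/2)| \leq O(\sqrt{\sigma \cdot \mathrm{OPT}})$ should then combine with the identity above and the hypothesis $F(\mathbf{1}/2) \leq (1/2-\delta)\mathrm{OPT}$ to give, after solving a quadratic in $\sqrt{\sigma}$, the lower bound $\sigma \geq \Omega(\delta^2)\mathrm{OPT}$, from which the claimed approximation ratio follows.

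The main obstacle is making the coupling/martingale argument rigorous, since $a_i, b_i$ are themselves random and depend on prior decisions; the natural martingale one wants to write is not a sum of independent increments, and its quadratic variation must be controlled via submodular-specific telescoping identities (in particular the bound $\sum_i (a_i+b_i) = O(\mathrm{OPT})$, which comes from the fact that the total gain of the process cannot exceed a multiple of $\mathrm{OPT}$). A secondary issue is to implement the algorithm in $O(1/\delta^2)$ adaptive rounds by inheriting the batching strategy of \cite{CFK19, ENV19} with step size $\delta^2$, and to ensure that the additional estimation error introduced by sampling marginals in parallel does not swamp the $\Omega(\delta^2)\mathrm{OPT}$ improvement.
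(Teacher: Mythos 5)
Your proposal follows essentially the same strategy as the paper's proof: (i) the double-greedy slack over $OPT/2$, namely $\tfrac14\sum_i \E\bigl[(a_i-b_i)^2/(a_i+b_i)\bigr]$, (ii) a bound on the gap between the double-greedy value and $\E[f(R)]$ in terms of the total bias $\sum_i|a_i-b_i|$, and (iii) Cauchy--Schwarz combined with $\sum_i(a_i+b_i)=O(OPT)$, followed by solving a quadratic in the slack. These are exactly the three ingredients in the paper's argument for this theorem. The difference is the instantiation: you run the discrete randomized double greedy and need a martingale/coupling estimate for step (ii), which you correctly flag as the unproved core of your plan, whereas the paper runs the continuous (multilinear-extension) double greedy, for which step (ii) is a deterministic path integral and no coupling is needed: the midpoint $(\vx(t)+\vy(t))/2$ starts at $\tfrac12\mathds{1}$ (value $\E[f(R)]$) and ends at the common output, its $i$-th velocity is $\tfrac12\bigl(\nabla_i f(\vx)_+ + \nabla_i f(\vy)_-\bigr)/\bigl(\nabla_i f(\vx)_+ - \nabla_i f(\vy)_-\bigr)$ (your bias $p_i-\tfrac12$), and submodularity bounds the gradient of $F$ at the midpoint coordinatewise by $\nabla_i f(\vx)_+ - \nabla_i f(\vy)_-$, giving precisely your desired inequality $DG-\E[f(R)]\le O(\sqrt{\sigma\cdot OPT})$; the normalization $\sum_s \eta_s\sum_i\bigl(\nabla_i f(\vx(t_s))_+ - \nabla_i f(\vy(t_s))_-\bigr)\le (4+\gamma)\,OPT$ plays the role of your $\sum_i(a_i+b_i)=O(OPT)$, and taking the accuracy parameter $\gamma=\Theta(\delta^2)$ gives the $O(1/\delta^2)$ rounds. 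Your discrete route looks salvageable via a hybrid argument (replace the biased coin by a fair coin one position at a time; conditioned on the prefix, submodularity bounds the change by $|p_i-\tfrac12|(a_i+b_i)=|a_i-b_i|/2$ in expectation), but as written that central coupling inequality is asserted rather than proved, so to complete the argument you would either carry out that estimate or pass to the continuous formulation as the paper does.
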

\addtocounter{theorem}{-1}
\endgroup

In other words, a $1/2-\eps$ approximation to unconstrained non-monotone submodular optimization takes at most $O\left(\min\{\frac{1}{\delta^2}, \frac{1}{\eps}\}\right)$ rounds, where $\E[f(R)] = (\frac12 - \delta) OPT$. Unlike the monotone case, there is no blowup when $\eps \rightarrow 0$, as long as $\E[f(R)]$ is bounded away from $\frac12 OPT$. 

Also, this means that the hardest instances for many rounds are in some sense the easiest ones for 1 round. There are no instances of unconstrained submodular maximization such that (i) it is impossible to achieve a factor better than $1/2$ regardless of the number of rounds, and (ii) it takes $\Omega(1/\eps)$ rounds to achieve a factor of $1/2-\eps$. Either it takes a constant number of rounds to achieve a factor better than $1/2$, or a random set is already very close to $\frac12 OPT$.


\subsection{Our techniques} 

\paragraph{Lower bound for cardinality-constrained maximization.}
Our construction consists of a sequence of ``layers" of elements $X_1,\ldots,X_\ell,Y$ such that one can learn only one layer in one round of queries. At this level, our construction is analogous to that of Balkanski and Singer~\cite{BS18}.  However, a key difference is that the number of layers in \cite{BS18} is limited to $\Theta\left({\log n} / {\log \log n}\right)$ since the construction forces the size of each layer to shrink by a \emph{minimum} factor of $\Theta(\log^c n)$ compared to the previous one. In our construction, the layers shrink by constant factors. Depending on the shrinkage rate of the layers, we obtain a trade-off between the approximation ratio and the adaptive complexity. When the layers shrink by a constant factor, there are $\Theta(\log n)$ layers and the best solution obtained after peeling away $r$ layers is at most $1-1/e-\Theta(1/r)$. To increase the number of layers to $r > \log n$, the shrinkage factor will instead be $1+\Theta\left(\frac{\log n}{r}\right)$ and the lower bound will be $1-1/e - \Omega\left(\frac{\log^2 n}{r^3}\right)$.

The main difficulty is how to design the instance so that (i) the layers shrink by constant factors, and (ii) given $X_1,\ldots,X_r$, we can ``see" only $X_{r+1}$ but not the further layers. In \cite{BS18}, this is achieved by arguing that for non-trivial queries, the number of elements we can possibly catch in $X_{r+2}, X_{r+3}$, etc. is so small that these layers do not make any difference. However, this argument doesn't work when the layers shrink by a constant factor; in this setting it is easy to catch many elements in the layers beyond $r+1$ ($X_{r+2}$, $X_{r+3}$, etc.).


We resolve this issue by an adaption of the {\em symmetry gap} technique: We design a monotone submodular function $f$ as a composition of functions $h_k$ inspired by the symmetry gap construction of \cite{FMV11,V13}; $h_k$ treats a pair of layers $(X_k, X_{k+1})$ in a {\em symmetric way} so that a typical query cannot distinguish the elements of $X_{k}$ from the elements of $X_{k+1}$. 
However, given $X_k$, it is possible to use $h_k$ to determine $X_{k+1}$. This leads to the desired effect of not being able to determine $X_{k+1}$ until one round after we have learned $X_k$.
The indistinguishability argument does not rely on the inability to find many elements in a certain layer, but on the inability to find sets which are significantly {\em asymmetric}, or unbalanced with respect to  different layers.

In addition, we need to design the function so that solutions found after a limited number of rounds are worse than $(1-1/e) OPT$. We achieve this by applying the symmetrization in a way different from previous applications: asymmetric solutions are {\em penalized} here compared to symmetric solutions (except for the optimal solution on $Y$). In addition, an initial penalty is imposed on the set $X_1$, which makes it disadvantageous to pick a uniformly random solution (and this also enables an adaptive algorithm to get started, by distinguishing $X_1$ from the other sets). Finally, $Y$ contains a hard instance showing the optimality of the factor $1-1/e$. In other words, even if we learn all the layers, we still cannot achieve a factor better than $1-1/e$.

This construction can be extended to the setting where the shrinkage factor is closer to $1$,  $|X_{i+1}| = |X_i| /  \left(1 + \frac{\log n}{r}\right)$, with some additional technicalities. The construction in this case is a separate one for each value of $r$; the number of layers is $r+1$ and we argue about the quality of solutions that can be found in $r-1$ rounds. For technical reasons, the hardness factor here approaches $1-1/e$ proportionally to $\log^2 n/r^3$ rather than $1/r$. 

\paragraph{Improved approximation for unconstrained maximization.}
In the non-monotone case, it is known that a random set gets at least $1/4$ of the optimal solution in expectation~\cite{FMV11}. Furthermore, a $1/2$-approximation is best possible \cite{FMV11,V13}, no matter how many rounds of queries are allowed. Thus generally, the value of a random set is expected to be between $\frac14 OPT$ and $\frac12 OPT$. However, for the known cases (e.g. certain bipartite directed cut instances) where a random set achieves exactly $\frac14 OPT$, it is actually easy to find the optimal solution. Conversely, for known hard instances, where double greedy is close to $\frac12 OPT$, a random set also gets roughly $\frac12 OPT$. We prove that there is a trade-off between the performance of the random set and a variant of the double greedy algorithm, by relating them to integrals that can be compared using the Cauchy-Schwarz inequality. Consequently, we prove that the gain of the double greedy algorithm over $\frac12 OPT$ grows at least quadratically in $\delta = \frac12 -  \frac{\E[f(R)]}{OPT}$.

\subsection{Paper organization}
The rest of the paper is organized in the following way. In Section~\ref{sec:log-lower-bound} we give a detailed construction of our lower bound for $r = O(\log n)$ adaptive rounds. In Section~\ref{sec:poly-lower-bound}, we outline a similar construction that extends to $r = O(n^c)$. Section~\ref{sec:non-monotone-alg} presents our improved result for unconstrained submodular maximization. Finally, the appendix contains some basic results as well as a new hardness instance for cardinality-constrained submodular maximization that may be of independent interest.
\section{Log-round lower bound for monotone submodular maximization}
\label{sec:log-lower-bound}

In this section we describe our hard instance which proves the following result.

\begin{theorem}
\label{thm:monotone-log-lb}
For any $n$, there is a family of instances of monotone submodular maximization subject to a cardinality constraint on a ground set of size $n$, such that for any $r < \frac13 \log_2 n$, any algorithm using $r$ rounds of queries achieves at most a $(1-1/e-\Theta(1/r))$-approximation.
\end{theorem}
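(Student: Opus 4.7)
The plan is to build a randomized instance consisting of a nested sequence of layers $X_1 \supset X_2 \supset \cdots \supset X_\ell$ together with a ``target'' set $Y$, where $\ell = \Theta(\log n)$ and $|X_{i+1}| = |X_i|/c$ for a constant $c>1$; the cardinality budget $k$ will be on the order of $|Y|$. The layers will be chosen by a uniformly random partition of the ground set, and the function $f$ will be built so that, conditional on having identified $X_1,\ldots,X_i$, one additional round of queries can reveal $X_{i+1}$ but nothing deeper. The hard $(1-1/e)$-tight sub-instance on $Y$ (which may be imported from Theorem~\ref{thm:1-1/e} or from the classical coverage-gadget lower bounds) sits on top of this layered skeleton so that even an algorithm that learns all the layers cannot do better than $1-1/e$; the layered structure then adds an extra $\Theta(1/r)$ penalty for each layer the algorithm fails to identify.

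First I would define $f$ as a sum $f(S) = \sum_{k=1}^{\ell-1} h_k(S) + g(S\cap Y)$, where each $h_k$ depends only on $S \cap (X_k \cup X_{k+1} \setminus X_{k+2})$ and is designed to be \emph{symmetric} under the action swapping $X_k\setminus X_{k+1}$ with $X_{k+1}\setminus X_{k+2}$; the second summand $g$ is the $(1-1/e)$-hard coverage instance supported on $Y$. The $h_k$ will be shaped so that (a) on symmetric sets they take a small ``baseline'' value, (b) on sets that are strongly biased toward $X_k\setminus X_{k+1}$ they incur a penalty, and (c) on sets that are biased toward $X_{k+1}$ they reveal (via the gradient of $h_k$) which elements belong to $X_{k+1}$. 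An initial penalty on $X_1$ makes the uniform-random strategy strictly suboptimal and gives the algorithm an initial ``seed'' to start distinguishing $X_1$ from the rest.

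The core step is an \emph{indistinguishability lemma}: conditional on having identified $X_1,\ldots,X_i$, any polynomial-size set of value queries $S_1,\ldots,S_m$ in a single round fails, with high probability over the random partition, to distinguish the true instance from one in which $X_{i+2},\ldots,X_\ell$ are re-randomized inside $X_{i+1}$. I would prove this via the symmetry-gap style argument: for any fixed query $S$, the imbalance $||S\cap (X_k\setminus X_{k+1})| - |S\cap (X_{k+1}\setminus X_{k+2})||$ for $k\ge i+1$ is, by a Chernoff/martingale bound on the random partition, at most $O(\sqrt{|S| \log n})$, which is too small (given the shrinkage factor $c$) to trigger the penalty or reveal branch of $h_k$. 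A union bound over all $\mathrm{poly}(n)$ queries and over the $O(\log n)$ layers gives the result. This symmetry-based concentration is where the analysis of \cite{BS18} relied on raw sparsity of deeper layers; replacing that by a symmetry gap is the key technical move and is what lets the layers shrink by only a constant factor.

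Granted this indistinguishability, after $r$ rounds the algorithm's queried sets are, for the purposes of $f$-values, a function of $X_1,\ldots,X_{r+1}$ only, so the output $S^\star$ of any $r$-round algorithm is essentially determined before the layers $X_{r+2},\ldots,X_\ell$ and the structure on $Y$ are revealed. I would then upper-bound $f(S^\star)/\mathrm{OPT}$ by splitting the loss into two parts: (i) a $(1-1/e)$ loss coming from the coverage sub-instance on $Y$ (which even the all-knowing algorithm cannot beat), and (ii) an additional $\Theta(1/\ell) = \Theta(1/\log n) = \Theta(1/r)$ loss coming from the $\ell - r - 1$ unidentified layers, each of which contributes a fixed multiplicative deficit because the algorithm must commit to a roughly symmetric allocation inside $X_{r+1}$. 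Combining these, and optimizing constants against the budget $k$ and the shrinkage rate $c$, delivers the claimed $1 - 1/e - \Theta(1/r)$ bound, valid as long as $r < \tfrac13 \log_2 n$ so that $|X_r|$ is still polynomial in $n$ and the concentration bounds go through. The main obstacle I anticipate is the \emph{simultaneous} calibration of $h_k$: it must leak enough gradient signal to let an honest algorithm peel one layer per round (so that the hardness is genuinely $r$-round hardness and not static hardness), while leaking so little aggregate signal that a round of $\mathrm{poly}(n)$ queries cannot ``shortcut'' past more than one layer via clever non-monotone combinations of $h_k$'s.
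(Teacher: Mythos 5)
Your high-level architecture matches the paper's: a random layered partition with constant shrinkage, symmetry-gap gadgets pairing consecutive layers so that a round of $\mathrm{poly}(n)$ queries stays in the symmetric region w.h.p.\ (concentration plus a union bound), an initial penalty at $X_1$ to seed the peeling, and a $(1-1/e)$-hard instance on $Y$ so that even full knowledge caps at $1-1/e$. The indistinguishability lemma you sketch is essentially the paper's Lemma~\ref{lemma:r-round-solution}. However, there is a genuine gap in the final quantitative step, which is where the theorem actually lives. You attribute the extra loss to the $\ell-r-1$ \emph{unidentified} layers, ``each contributing a fixed multiplicative deficit,'' totalling $\Theta(1/\ell)=\Theta(1/\log n)$, and you then equate this with $\Theta(1/r)$. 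That identification only holds when $r=\Theta(\log n)$; the theorem requires a loss of $\Theta(1/r)$ for \emph{every} $r<\frac13\log_2 n$, which for small $r$ is far larger than $1/\log n$, and a ``fixed deficit per unknown layer'' cannot produce a bound that improves as $r$ grows. Moreover the causal story is backwards relative to what makes the construction work: in the paper, symmetric allocations on the unknown layers cost essentially nothing; the loss comes from the \emph{known} coordinates. Because the deep coordinates are locked to the symmetric geometric pattern tied to $x_r$, at least about half the cardinality budget must be carried by $x_1,\dots,x_r$, and any such mass profile must deviate from the symmetric chain anchored at $x_0=0$. The paper converts this into a concrete bound via Lemma~\ref{lemma:gain-per-round} (asymmetry of size $\delta$ in a pair costs roughly $\delta^2/16$ in the exponent) together with the quadratic program of Lemma~\ref{lemma:quadr-opt}, where Cauchy--Schwarz shows $4x_1^2+\sum_{i=2}^r(2x_i-x_{i-1})^2\ge \frac{1}{4r}$ under $\sum_{i<r}x_i+2x_r\ge\frac12$; this is exactly the $\Theta(1/r)$ trade-off (more rounds let you ramp up the mass more gradually and pay a smaller total quadratic penalty). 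Your proposal contains no analogue of this step, and without it the claimed $1-1/e-\Theta(1/r)$ bound does not follow.

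A secondary concern is your additive decomposition $f=\sum_k h_k + g$. The paper's multiplicative/exponential composition is what pins the value of every (near-)symmetric solution at exactly $1-e^{-\text{total mass}}$, so that spreading a unit budget symmetrically yields $1-1/e$ while the optimum on a single $Y_i$ is $1-\eps\approx 1$, and so that per-pair asymmetry penalties accumulate in the exponent. With a sum of gadgets, the baseline values of the $h_k$ on symmetric sets inflate every solution including OPT, and it is not clear how to simultaneously keep the symmetric cap at $(1-1/e)\,\mathrm{OPT}$, keep the penalties large enough to matter, and preserve monotonicity and submodularity; you would need to specify the $h_k$ concretely and verify these properties, which the paper does via smooth functions of the block fractions with nonincreasing partial derivatives.
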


We note that this construction works only for $r = O(\log n)$; we provide a more general construction later in the paper, which works for $r = O(n^c)$ for some constant $c>0$. The two constructions are in fact quite similar. An advantage of the construction for $r = O(\log n)$, apart from ease of exposition, is that it provides a single instance for which it is hard to achieve better than $(1-1/e-\Theta(1/r))$-approximation ir $r$ rounds for any $r = O(\log n)$, and hard to achieve a better than ($1-1/e)$-approximation in any number of rounds. We discussed in the introduction why this is interesting vis-a-vis the unconstrained optimization problem.

\subsection{Construction of the objective function}

Let $E = X_1 \cup \ldots \cup X_\ell \cup Y_1 \cup \ldots Y_{\ell'}$ where $|E| = 2^{3\ell}$, $|X_i| = 2^{3\ell-i}$ and $k = |Y_i| = \frac{1}{\ell'} 2^{2\ell} = n^{5/8}$, $\ell' = 2^{\ell/8} = n^{1/24}$ ($\ell$ is a multiple of $8$). The exact parameters are not so important. The main requirements are that the $X_i$'s are decreasing by constant factors and that the $Y_i$ are polynomially large (i.e. $\Omega(n^c)$ for some $c < 1$). The partition is uniformly random among all partitions with these parameters. 
We define the objective function as a function of real variables $f(x_1,x_2,\ldots,x_\ell,y_1,\ldots,y_{\ell'})$ where the interpretation of $x_i, y_i$ is that for a set $S \subseteq E$, $x_i = \frac{1}{k} |S \cap X_i|$ and $y_i = \frac{1}{k} |S \cap Y_i|$. 
We consider a function in the following form (where $x_0 = 0$):
$$ \tilde{h}(x_1, \ldots, x_\ell) =  (1-h_1(x_\ell)) \prod_{i=0}^{\ell-1}(1-h(x_i,x_{i+1})), $$
$$ f_1(x_1,\ldots,x_\ell,y_1,\ldots,y_{\ell'}) =1 - \tilde{h}(x_1, \ldots, x_\ell)(1-g(y_1,\ldots,y_{\ell'})).$$
We design $g,h,h_1$ to be ``smooth monotone submodular" functions (with non-negative and non-increasing partial derivatives), which means that the same properties are inherited by $f_1$. The actual objective function is going to be obtained by discretization of the function $f = \min \{ f_1, 1-\eps \}$ ($\eps>0$ to be specified later). This defines a monotone submodular function (we refer the reader to Appendix~\ref{app:basics} for details).

Before proceeding to technical details, we wish to make the following points:
\begin{itemize}
\item $h(x,x')$, $h_1(x)$, and $g(y_1,\ldots,y_{\ell'})$ are going to be non-decreas-ing continuous functions with range $[0,1]$ and non-increasing first partial derivatives. It is well-known that this corresponds to monotone submodular functions in the discrete setting. Also, it is easy to verify that $f(x_1,\ldots,x_\ell,y_1,\ldots,y_{\ell'})$ defined as above inherits the same property of its partial derivatives, and for $x_i = \frac{1}{k} |S \cap X_i|$, $y_j = \frac{1}{k} |S \cap Y_j|$, this defines a monotone submodular function of $S$. We recap this in Appendix~\ref{app:basics}.

\item We set $h_1(x_\ell) = 1 - e^{-\frac12 x_\ell}$. This is the missing contribution of $x_\ell$ due to the fact that it appears only in one pair.

\item $g(y_1,\ldots,y_{\ell'})$ encodes a hard instance demonstrating the impossibility of beating the factor of $1-1/e$. That is, a solution like $g(\frac{1}{\ell'},\frac{1}{\ell'},\ldots,\frac{1}{\ell'})$ should have value $1-1/e$, while the optimum  should be close to $1$.

\item $h(x_k,x_{k+1})$ is a {\em symmetry gap} instance which makes it hard to distinguish the sets associated with the variables $x_k,x_{k+1}$. More specifically, it should hold that for any solution that roughly satisfies $x_k = 2 x_{k+1}$, we have $h(x_k,x_{k+1}) = 1 - e^{-(x_k+x_{k+1})/2}$. I.e., the contributing factor depends only on $x_k+x_{k+1}$ which means that it does not distinguish between the elements of $X_k$ and $X_{k+1}$. In particular, this is true for a random solution and the types of solutions that the algorithm will be able to find with any significant probability.

\item For solutions deviating from $x_k = 2 x_{k+1}$, the value \emph{decreases}. Hence, the solutions using only the variables $x_1,\ldots,x_r$ that achieve a value of $1-1/e$ would need to satisfy $x_k = 2 x_{k+1}$ (roughly). However, this would force $x_1$ to be a non-trivially large value, which makes $f$ suffer from the $1-h(0,x_1)$ factor. This is essentially the reason why good solutions cannot be found in a limited number of rounds, although the precise analysis is a bit more complicated.
\end{itemize}

Next, we specify the function $h(x,x')$ in more detail. In the following, we work with a general parameter $\eps>0$, which will be eventually chosen to be $\eps = 2 n^{-1/24}$.

\begin{lemma}
\label{lemma:h-def}
For $\eps>0$, define $h:[0,\infty) \times [0,\infty) \rightarrow [0,1]$ as
\begin{itemize}
\item $h(x,x') = 1 - e^{-\frac12 (x+x')} $, if $|x - 2x'| \leq \eps$.
\item $h(x,x') = 1 - (\frac23 e^{-\frac34 x + \frac14 \eps} + \frac13 e^{-\frac32 x' - \frac12 \eps})$, if $x - 2x' \geq \eps$,
\item $h(x,x') = 1 - (\frac23 e^{-\frac34 x - \frac14 \eps} + \frac13 e^{-\frac32 x' + \frac 12 \eps})$, if $x - 2x' \leq -\eps$.
\end{itemize}

Then $h(x,x')$ is well-defined (the values coincide for $|x-2x'| = \eps$), its first partial derivatives are continuous, positive and decreasing in both variables.
\end{lemma}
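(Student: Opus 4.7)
The plan is to verify each of the four claims---well-definedness, continuity of the first partials, positivity, and monotonicity in each variable---directly from the piecewise formulas, with everything reducing to matching exponents at the two hyperplanes $x - 2x' = \pm\eps$.

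For well-definedness, I substitute $x = 2x' + \eps$ into both the symmetric formula and the $x - 2x' \geq \eps$ formula: the symmetric branch gives $1 - e^{-\frac{1}{2}(3x' + \eps)} = 1 - e^{-\frac{3}{2}x' - \frac{1}{2}\eps}$, while the exponent of the $\frac{2}{3}$-summand in the other branch computes to $-\frac{3}{4}(2x'+\eps) + \frac{1}{4}\eps = -\frac{3}{2}x' - \frac{1}{2}\eps$, matching the exponent of the $\frac{1}{3}$-summand, so the two pieces collapse to the single exponential $e^{-\frac{3}{2}x' - \frac{1}{2}\eps}$. The boundary $x = 2x' - \eps$ is handled identically and yields $1 - e^{-\frac{3}{2}x' + \frac{1}{2}\eps}$ on both sides.

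Next, I compute the partial derivatives region by region. In the symmetric region both partials equal $\frac{1}{2}e^{-\frac{1}{2}(x+x')}$; in the region $x - 2x' \geq \eps$ one finds $\partial h/\partial x = \frac{1}{2}e^{-\frac{3}{4}x + \frac{1}{4}\eps}$ and $\partial h/\partial x' = \frac{1}{2}e^{-\frac{3}{2}x' - \frac{1}{2}\eps}$ (the coefficients $\frac{2}{3}, \frac{1}{3}$ are absorbed by the chain-rule factors $\frac{3}{4}, \frac{3}{2}$); the region $x - 2x' \leq -\eps$ produces the analogous expressions with the signs of the $\eps$-terms flipped. Substituting $x = 2x' \pm \eps$ into these formulas, exactly as in the well-definedness step, shows that both partials agree across each boundary hyperplane, so both are continuous on $[0,\infty)^2$.

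Positivity is immediate because each partial is a positive constant times an exponential. For the monotonicity claim, I fix one variable at a time and track the corresponding partial through the three regions. Fixing $x'$ and letting $x$ increase, the three expressions for $\partial h/\partial x$ are $\frac{1}{2}e^{-\frac{3}{4}x - \frac{1}{4}\eps}$, then $\frac{1}{2}e^{-\frac{1}{2}(x+x')}$, then $\frac{1}{2}e^{-\frac{3}{4}x + \frac{1}{4}\eps}$, each strictly decreasing in $x$, and continuity gives strict decrease globally. Fixing $x$ and letting $x'$ increase, $\partial h/\partial x$ is a constant on each outer region (with the high-$x'$ constant smaller than the low-$x'$ one by the factor $e^{-\eps/2}$) and strictly decreasing on the symmetric bridge, so it is non-increasing in $x'$. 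The two statements for $\partial h/\partial x'$ follow from the same argument with the roles of the variables exchanged. The entire proof is mechanical exponent-matching plus the observation that each of the six pieces is either constant or a decreasing exponential in the variable of interest, so I do not anticipate any real obstacle.
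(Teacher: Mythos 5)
Your proposal is correct and follows essentially the same route as the paper's proof: check that values and first partials match along the two hyperplanes $x-2x'=\pm\eps$ by exponent matching, then note positivity and monotonicity of the piecewise exponential partials. You are in fact somewhat more careful than the paper on the last point (the paper dismisses "positive and decreasing" as obvious), correctly noting that the cross-variable monotonicity is only non-strict on the outer regions, which is all that is needed.
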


(Note: the asymmetry between $x$ and $x'$ in the expressions comes from the fact that for a random solution we expect $x = 2x'$, which defines the ``symmetric region" here.)

\begin{proof}
For $x-2x' = \eps$, the first definition gives $h(x,x') = 1 - e^{-\frac12 (x+x')} = 1 - e^{-\frac32 x'- \frac12 \eps}$, while the second definition gives $h(x,x') = 1 - (\frac23 e^{-\frac34 x + \frac14 \eps} + \frac13 e^{-\frac32 x'- \frac12 \eps}) = 1 - e^{-\frac32 x'- \frac12 \eps}$, so the two definitions are consistent. 
Let us verify the partial derivatives now. The first definition gives
$ \partdiff{h}{x} = \frac12 e^{-\frac12(x+x')} $
while the second definition gives
$ \partdiff{h}{x} = \frac12 e^{-\frac34 x + \frac14 \eps}.$
Just like above, it is easy to verify that the two expressions coincide for $x-2x' = \eps$, and hence $\partdiff{h}{x}$ is continuous there.
Similarly, we can verify that for $x-2x' = -\eps$ the two definitions give continuous partial derivatives.
Finally, the partial derivatives are obviously positive and decreasing in each of the two variables.
\end{proof}

Next, we state the conditions that we require for the function $g(y_1,\ldots,y_{\ell'})$. We prove the following lemma in Appendix~\ref{sec:1-1/e}.

\begin{lemma}
\label{lemma:g-properties}
For any $0<\eps<1$ and $\ell' \geq 2/\eps^2$, there is a function $g:[0,1]^{\ell'} \rightarrow [0,1]$ such that
\begin{itemize}
    \item $g$ is continuous, non-decreasing in each coordinate, and it has  first partial derivatives $\partdiff{g}{y_i}$ almost everywhere\footnote{To be more precise, the partial derivatives are defined almost everywhere on any axis-aligned line, in the sense of Lemma~\ref{lem:discretize}.}
    which are non-increasing in every coordinate.
    \item We have $g(1,0,0,\ldots,0) = 1-\eps$.
    \item If $|y_i- \frac{1}{\ell'} \sum_{i=1}^{\ell'} y_i| \leq \eps/2$ for all $i \in [\ell']$,
     then $g(\by) = \min \{ 1 - e^{-\sum_{j=1}^{\ell'} y_j},  1-\eps \}.$
\end{itemize}
\end{lemma}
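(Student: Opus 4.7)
The plan is to realize $g$ as (the multilinear extension of) an appropriately rescaled coverage-style function that realizes the $1-1/e$ hardness gap, then cap the resulting function at $1-\eps$ via an outer $\min$.

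I would begin with a discrete instance exhibiting the $1-1/e$ gap for monotone submodular maximization. Concretely, let $Y = Y_1 \cup \ldots \cup Y_{\ell'}$ with $|Y_j|=k$, fix a universe $U$ with uniform measure $\mu$, and assign a subset $C_e \subseteq U$ to every element $e \in Y_j$ so that (i) $\mu\bigl(\bigcup_{e \in Y_j} C_e\bigr) = 1-\eps$ for every block (so selecting a single block $Y_j$ achieves value $1-\eps$), and (ii) a uniformly random subset of size $k$ is expected to cover a $(1-1/e)$-fraction of $U$. This is a standard Feige-style max-coverage instance, rescaled so that the single-block optimum equals $1-\eps$ rather than $1$. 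The discrete function $F(S) = \mu\bigl(\bigcup_{e \in S} C_e\bigr)$ is monotone submodular by construction.

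Next I would define $g(\by) := \min\{1-\eps,\,F^*(\by)\}$, where $F^*(\by) := \E[F(R(\by))]$ is the multilinear extension of $F$ evaluated on the block-uniform product distribution (each $e \in Y_j$ is included in $R(\by)$ independently with probability $y_j$). Continuity is immediate, and monotonicity follows from monotonicity of $F$. The submodularity condition on partial derivatives reduces to the standard fact that the multilinear extension of a monotone submodular function has non-negative partials that are non-increasing in every coordinate; taking $\min$ with the constant $1-\eps$ preserves both properties. For the two specific values, note that when $\by = (1, 0, \ldots, 0)$ the set $R$ is deterministically $Y_1$, so $F^*(\by) = F(Y_1) = 1-\eps$ and hence $g(\by) = 1-\eps$. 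For $\by$ in the symmetric region, each coordinate is within $\eps/2$ of $\bar{y}$, and the hypothesis $\ell' \geq 2/\eps^2$ lets a Chernoff-type concentration bound reduce $F^*(\by)$ to a function depending only on $\sum_j y_j$; by the designed covering profile this reduced function evaluates exactly to $1-e^{-\sum_j y_j}$, which the outer $\min$ caps at $1-\eps$.

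The main obstacle, as I see it, is forcing the symmetric-region identity to hold \emph{exactly} rather than only approximately. The natural fix is to average the coverage structure over independent uniformly random permutations acting within each block $Y_j$; by symmetry, $F^*(\by)$ then depends only on the multiset $\{y_1,\ldots,y_{\ell'}\}$, and a careful choice of the per-element covering probabilities collapses this dependence to $1-e^{-\sum_j y_j}$ on the entire symmetric region. An alternative route, which sidesteps coverage functions altogether, is to define $g$ by a closed-form interpolation that equals $\min\{1-\eps,\,1-e^{-\sum_j y_j}\}$ on the symmetric region, equals $1-\eps$ at the ``block corners,'' and is extended smoothly in between; one then verifies monotone submodularity by directly computing the partial derivatives and checking that they are non-negative and coordinate-wise non-increasing, using the fact that along any axis $y_j$ the function starts with slope $1$ inside the symmetric region and integrates to at most $1-\eps$, which is consistent with coordinate-wise concavity.
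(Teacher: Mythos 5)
There is a genuine gap in both routes you sketch. Your primary route (the multilinear extension of a finite, rescaled coverage instance, capped at $1-\eps$) cannot satisfy the third bullet of the lemma \emph{exactly}: with finitely many elements per block, $F^*(\by)$ is a polynomial in $(y_1,\ldots,y_{\ell'})$, and the symmetric region contains an open subset of $[0,1]^{\ell'}$ on which the cap is not binding; a polynomial cannot agree with $1-e^{-\sum_j y_j}$ on an open set. Symmetrizing over permutations within blocks or tuning per-element covering probabilities does not help, because the obstruction is algebraic (polynomial versus genuinely exponential), not a matter of asymmetry or of choosing the right covering profile. Likewise, invoking Chernoff-type concentration inside this lemma conflates it with the downstream hardness argument: here $g$ must be a deterministic function with an exact closed form on the symmetric region, and concentration can only give approximate equality. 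Your fallback route --- ``define $g$ by a closed-form interpolation that equals $\min\{1-\eps,\,1-e^{-\sum_j y_j}\}$ on the symmetric region, equals $1-\eps$ at the block corners, and is extended smoothly in between, then verify the derivative conditions'' --- is a restatement of what the lemma asks for rather than a proof; the entire content of the lemma is exhibiting such an interpolation and actually verifying monotonicity and coordinate-wise non-increasing partials, and the heuristic about ``slope $1$ integrating to at most $1-\eps$'' does not do this.

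For comparison, the paper's proof is a short explicit construction: define a one-variable function $\gamma$ with $\gamma(x)=1-e^{-x}$ for $x\in[0,\eps]$ and $\gamma(x)=1-e^{-\eps}(1-x+\eps)$ for $x\in[\eps,1]$ (continuous, non-decreasing, with non-increasing derivative), and set $g(\by)=\min\bigl\{\,1-\prod_{i=1}^{\ell'}(1-\gamma(y_i)),\,1-\eps\,\bigr\}$. The product form makes the submodularity check immediate ($\partdiff{g}{y_i}=\gamma'(y_i)\prod_{j\neq i}(1-\gamma(y_j))$ away from the cap, zero past it, with the discontinuity occurring at a single point on any axis line), the corner value follows from $\gamma(1)=1-e^{-\eps}\eps$, and on the symmetric region one splits into two cases: if $\bar{y}\leq\eps/2$ then every $y_i\leq\eps$ and the product telescopes exactly to $e^{-\sum_i y_i}$, while if $\bar{y}>\eps/2$ the AM--GM inequality together with $\gamma(y)\geq y(1-\eps)$ and $\ell'\geq 2/\eps^2$ forces the value to exceed $1-\eps$, so the cap binds consistently with the claimed formula. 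If you want to salvage your write-up, you should replace both sketches with a concrete per-coordinate interpolation of this kind and carry out the case analysis; the coverage/multilinear framing, while good intuition for where the $1-1/e$ gap comes from, cannot deliver the exact identities the lemma demands.
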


Recall that we use $g(y_1,\ldots,y_{\ell'})$ and $h(x,x')$ to define the function $f_1$ and $f = \min \{f_1, 1-\eps \}$ as above. The conditions stated in Lemma~\ref{lemma:h-def} and Lemma~\ref{lemma:g-properties} imply that a function obtained from $f$ by discretization is monotone submodular (see e.g. \cite{V13} for more details).

Next, we prove a quantitative bound on how much the value of $h$ (and henceforth $f$) decreases when the variables deviate from the symmetric region $x_k = 2 x_{k+1}$.

\begin{lemma}
\label{lemma:gain-per-round}
Suppose $|x - 2 x'| = \delta \geq \eps$. Then for $h(x,x')$ defined as in Lemma~\ref{lemma:h-def},
$$ h(x,x') \leq 1 - e^{-\frac12 (x+x')+\frac{1}{16}(\delta-\eps)^2}.$$
\end{lemma}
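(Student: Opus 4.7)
The plan is to exploit the structure of Lemma~\ref{lemma:h-def}: in both asymmetric regimes, $1 - h(x, x')$ is a convex combination $\tfrac23 e^{A} + \tfrac13 e^{B}$ of two exponentials whose weighted arithmetic mean of exponents already equals $-\tfrac12(x+x')$. Weighted AM--GM therefore recovers $1 - h(x,x') \geq e^{-(x+x')/2}$ for free, and the real task is to quantify the gap in AM--GM as an extra factor of roughly $e^{(\delta-\eps)^2/16}$.

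First I would read off $A$, $B$, and $u := A - B$ in both cases. For $x - 2x' = \delta \geq \eps$ we have $A = -\tfrac34 x + \tfrac14 \eps$, $B = -\tfrac32 x' - \tfrac12 \eps$, and $u = -\tfrac34(\delta - \eps)$. For $x - 2x' = -\delta \leq -\eps$ we have $A = -\tfrac34 x - \tfrac14 \eps$, $B = -\tfrac32 x' + \tfrac12 \eps$, and $u = +\tfrac34(\delta - \eps)$. In either case $\tfrac23 A + \tfrac13 B = -\tfrac12(x+x')$ and $|u| = \tfrac34(\delta - \eps)$. Writing $c := -\tfrac12(x+x')$, so that $A = c + u/3$ and $B = c - 2u/3$, one gets
\[
1 - h(x, x') \;=\; e^{c}\!\left(\tfrac23 e^{u/3} + \tfrac13 e^{-2u/3}\right),
\]
and since $u^2/9 = (\delta - \eps)^2/16$, the lemma reduces to the one-variable inequality
\[
\Phi(u) \;:=\; \tfrac23 e^{u/3} + \tfrac13 e^{-2u/3} \;\geq\; e^{u^2/9}.
\]

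To establish this scalar inequality I would set $F(u) := \log \Phi(u) - u^2/9$ and check $F(0) = F'(0) = F''(0) = 0$, confirming that the bound is already tight to second order. A direct Taylor expansion shows that the cubic coefficient of $\log \Phi$ equals $-\tfrac{1}{81}$, which provides a built-in buffer when $u \leq 0$ (i.e., the case $x - 2x' \geq \eps$) but is the main obstacle for $u > 0$. In this unfavorable regime I would use that $x, x' \in [0,1]$ forces $|u| \leq 3/2$ and argue that the cubic is dominated by higher-order terms throughout this compact interval, either via a direct sign analysis of $F'$ (noting that $F'(u)$ and $u$ have opposite signs near the origin once the quartic kicks in) or by allowing a slightly smaller absolute constant in place of $1/16$. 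With the scalar inequality in hand, exponentiating and multiplying through by $e^c$ gives $1 - h(x, x') \geq e^{-(x+x')/2 + (\delta-\eps)^2/16}$, which rearranges to the claimed upper bound on $h(x, x')$.
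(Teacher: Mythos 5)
Your reduction is correct, and it is in fact the same decomposition the paper uses: the paper's proof also factors $1-h(x,x') = e^{-\frac12(x+x')}\bigl(\tfrac23 e^{-\frac14(\delta-\eps)} + \tfrac13 e^{\frac12(\delta-\eps)}\bigr)$ and then lower-bounds the bracket, so your scalar inequality $\Phi(u)\ge e^{u^2/9}$ with $|u|=\tfrac34(\delta-\eps)$ is exactly what is at stake. Your bookkeeping is also right: the weighted mean of exponents is $-\tfrac12(x+x')$, the bound is tight to second order, the cubic coefficient of $\log\Phi$ is $-\tfrac{1}{81}$, and the sign of $u$ matches the cases as you say ($u\le 0$ for $x-2x'\ge\eps$, $u\ge 0$ for $x-2x'\le -\eps$).

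The gap is in your plan for $u>0$, i.e.\ the case $x-2x'\le-\eps$: there the inequality with the constant $\tfrac1{16}$ is not merely hard to prove, it is false, so no sign analysis of $F'$ can rescue it. Writing $s=\delta-\eps$, one has $\tfrac23 e^{s/4}+\tfrac13 e^{-s/2} = 1+\tfrac{s^2}{16}-\tfrac{s^3}{192}+O(s^4) < e^{s^2/16}$ for small $s>0$, the quartic cumulant of $\log\Phi$ is negative as well ($\log\Phi(u)=\tfrac{u^2}{9}-\tfrac{u^3}{81}-\tfrac{u^4}{324}-\cdots$), and numerically $\Phi(u)<e^{u^2/9}$ on all of $(0,\tfrac32]$; concretely, with $\eps$ negligible, $x=0$, $x'=0.05$ one gets $h(x,x')\approx 0.0240855$ while the claimed bound is $1-e^{-0.025+0.01/16}\approx 0.0240803$. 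Hence only your fallback branch is viable: replace $\tfrac1{16}$ by a smaller constant (e.g.\ $\tfrac1{20}$ suffices for $\delta-\eps\le 2$, which is the range used in Theorem~\ref{thm:main-log-monotone}, where only $\Omega((\delta-\eps)^2)$ matters). To be fair, you have located a genuine imprecision rather than missed a trick: the paper's own proof writes out only the case $x-2x'\ge\eps$ (and even there the intermediate term-by-term estimate, which amounts to $\tfrac23 e^{-s/4}+\tfrac13 e^{s/2}\ge 1+\tfrac{s^2}{8}$, fails for small $s$, although the final inequality $\ge e^{s^2/16}$ does hold in the relevant range) and handles the other case with ``similarly,'' which does not go through with the stated constant. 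Still, as a proof of the lemma as written your proposal is incomplete, and even in the favorable case $u\le 0$ the cubic ``buffer'' is asserted rather than verified over the whole interval, which needs a short explicit estimate or a direct check on the compact range.
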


\begin{proof}
For $x-2x' = \delta \geq \eps$, we have
\begin{equation*}
\begin{split}
h(x,x') &= 1 - \left(\frac23 e^{-\frac34 x + \frac14 \eps} + \frac13 e^{-\frac32 x' - \frac12 \eps}\right) \\
 &= 1 - e^{-\frac12(x+x')} \left(\frac23 e^{-\frac14 (\delta-\eps)} + \frac13 e^{\frac12 (\delta - \eps)} \right).
\end{split}
\end{equation*}
We use two elementary bounds: $e^t \leq 1 + t + t^2$ for $|t| \leq 1$, and $e^t \leq 1+2t$ for $0 \leq t \leq 1$. Hence,
\begin{eqnarray*}
 h(x,x') & \leq & 1 - \frac23 \left(1-\frac14(\delta-\eps) + \frac{1}{16} (\delta-\eps)^2 \right)e^{-\frac12(x+x')} \\  
 & & - \frac13 \left(1 + \frac12 (\delta - \eps) + \frac14 (\delta-\eps)^2 \right) e^{-\frac12(x+x')} \\
& \leq & 1 - e^{-\frac12(x+x')+\frac{1}{16} (\delta-\eps)^2}.
\end{eqnarray*}
Similarly, we get the same bound for $x-2x'=-\delta \leq -\eps$.
\end{proof}

\begin{lemma}
\label{lemma:r-round-solution}
Suppose that an algorithm uses $r-1$ rounds of adaptivity, $r < \ell$. Then with high probability, the only solutions it can find are, up to additive error $O\left(n^{-1/24}\right)$ in each coordinate, in the form $(x_1,\ldots,x_r,\frac12 x_r,\frac14 x_r,\ldots,\frac{1}{2^{\ell-r}} x_r,y_1,\ldots,y_{\ell'})$, where $y_i = \frac{1}{\ell' 2^{\ell-r}} x_r$.
\end{lemma}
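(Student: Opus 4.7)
The plan is to prove this by induction on the number of rounds with the following hypothesis: with probability $1 - e^{-n^{\Omega(1)}}$ over the random partition and the algorithm's coins, after $t$ rounds the transcript depends on the partition only through the identities of $X_1, \ldots, X_t$ and the aggregate intersection sizes with the ``unknown block'' $U_t := X_{t+1} \cup \cdots \cup X_\ell \cup Y_1 \cup \cdots \cup Y_{\ell'}$. Setting $t = r-1$ then immediately yields the claimed form: any output solution must treat $U_{r-1}$ uniformly with some blind fraction $p$, giving $x_r = p|X_r|/k$, $x_{r+j} = x_r / 2^j$ (since $|X_{r+j}|/|X_r| = 2^{-j}$), and $y_i = p = x_r k/|X_r| = x_r/(\ell' 2^{\ell-r})$ (using $k/|X_r| = 2^{r - 9\ell/8} = 1/(\ell' 2^{\ell-r})$). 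This gives $r$ free parameters $x_1, \ldots, x_r$ in total, matching the statement.

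The inductive step combines hypergeometric concentration with the symmetry-gap Lemmas~\ref{lemma:h-def} and~\ref{lemma:g-properties}. Conditional on a round-$t$ transcript satisfying the hypothesis, the partition of $U_t$ is uniform over partitions of the prescribed shape, so any query $S$ in round $t+1$ is independent of this residual randomness. Hence for $j \geq t+1$ and $i \in [\ell']$ the intersections $|S \cap X_j|, |S \cap Y_i|$ are hypergeometric with means $p|X_j|, p|Y_i|$, where $p := |S \cap U_t|/|U_t|$. A Chernoff--Hoeffding bound gives
\[
\Pr\!\left[\,\bigl|\,|S \cap X_j| - p|X_j|\,\bigr| > \tfrac{\eps k}{4}\,\right] \leq 2 \exp\!\bigl(-\Omega(\eps^2 k^2 / n)\bigr) = 2 e^{-n^{\Omega(1)}},
\]
since $\eps^2 k^2 / n = \Omega(n^{1/6})$ for $\eps = 2 n^{-1/24}$ and $k = n^{5/8}$; the same bound holds for $|S \cap Y_i|$. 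Union-bounding over the $\mathrm{poly}(n)$ queries and the $O(\log n)$ coordinates keeps the failure probability at $n^{-\omega(1)}$. On the good event each pair $(x_j, x_{j+1})$ with $j \geq t+1$ satisfies $|x_j - 2 x_{j+1}| \leq \eps$ and each $y_i$ satisfies $|y_i - \tfrac{1}{\ell'} \sum_j y_j| \leq \eps/2$. The two lemmas then collapse $h$ and $g$ into functions of aggregates that depend on the partition of $U_t$ only through the scalar $|S \cap U_t|$, so $f(S)$ conveys no new partition information beyond what is already known.

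What a round \emph{can} resolve is exactly one new layer $X_{t+1}$: by driving $x_t - 2 x_{t+1} > \eps$ using the now-known $X_t$ (or using $x_0 = 0$ in the base case $t = 0$), the algorithm activates the asymmetric regime of $h(x_t, x_{t+1})$ in Lemma~\ref{lemma:h-def}, in which the value depends on $x_{t+1}$ separately from $x_t$; querying singletons appended to such a ``seed'' set identifies which elements of $U_t$ lie in $X_{t+1}$. This restores the invariant at level $t+1$. The main obstacle I anticipate is making the coupling precise: one must argue that the conditional distribution of the partition of $U_t$ given the full transcript remains uniform, which I would handle via the standard symmetry-gap technique of comparing the true partition to an adversary-symmetrized one that fuses $U_t$ and showing that both transcripts coincide on the symmetric-regime event established above. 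After $r-1$ rounds we retain $X_1, \ldots, X_{r-1}$ as known blocks and a coarsened $U_{r-1}$, forcing every output solution into the stated proportional form with additive slack $O(n^{-1/24})$ inherited from the concentration bounds.
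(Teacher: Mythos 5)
Your proposal is correct and follows essentially the same route as the paper's proof: induction on rounds, Hoeffding-type (hypergeometric) concentration of each query's intersections with the unrevealed blocks, a union bound over the polynomially many queries, and the collapse of $h$ and $g$ to their symmetric forms via Lemmas~\ref{lemma:h-def} and~\ref{lemma:g-properties}, so that the transcript—and hence the output—is a function of $X_1,\ldots,X_{r-1}$ alone, which forces the stated proportional form with $O(n^{-1/24})$ slack. The coupling subtlety you flag is handled in the paper more simply than you anticipate: one fixes the algorithm's coins, conditions only on the identities of $X_1,\ldots,X_{t}$ (on the event that no earlier atypical event $\cE_i$ occurred, the round-$(t+1)$ queries are a deterministic function of these, and the residual partition is still uniform), bounds $\Pr[\cE_{t+1}\setminus\bigcup_{i\le t}\cE_i]$, and sums over rounds, so no conditioning on the full transcript is ever needed.
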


\begin{proof}
We prove the following by induction: With high probability, the computation path of the algorithm and the queries it issues in the $r$-th round are determined by $X_1,\ldots,X_{r-1}$ (and do not depend on the way that $E\setminus \cup_{i=1}^{r-1} X_i$ is partitioned into the $X_{i}$ and $Y_i$). 

As a first step, we assume the algorithm is deterministic by fixing its random bits and choose the partition of $E$ into $X_i$ and $Y_i$ uniformly at random. 

To prove the inductive claim, let $\cE_r$ denote the ``atypical event" that the algorithm issues any query $Q$ in round $r$ such that the answer is {\em not} in the form $f = \min \{\tilde{f_1}, 1-\eps \}$,
\begin{equation}
    \label{eq:sym-form}
\tilde{f_1} = 1 - e^{-\frac12 x_r - \sum_{i=r+1}^{\ell} x_i - \sum_{j=1}^{\ell'} y_j} \prod_{\ell=1}^r (1-h(x_{\ell-1},x_{\ell})), 
\end{equation}
where $x_0 = 0$, $x_i = \frac{1}{k} |Q \cap X_i|$, and $y_j = \frac{1}{k} |Q \cap Y_j|$.
Assuming that $\cE_r$ does not occur, all answers to queries in round $r$ are in this form, and in particular they depend only on $Q$ and the sets $X_1,\ldots,X_r$. (The summation $\sum_{i=r+1}^{\ell} x_i + \sum_{j=1}^{\ell'} y_j$ is determined by $|Q \setminus (X_1 \cup \ldots \cup X_r)|$.) If the queries in round $r$ depend only on $X_1,\ldots,X_{r-1}$, and $\cE_r$ does not occur, this means that the entire computation path in round $r$ is determined by $X_1,\ldots,X_r$. By induction, we conclude that if none of $\cE_1,\ldots,\cE_r$ occurs, the computation path in round $r$ is determined by $X_1,\ldots,X_r$.

In the following we focus on the analysis of the event $\cE_r$.
Let $\cQ_r$ denote the queries in round $r$, assuming that none of $\cE_1,\ldots,\cE_{r-1}$ occurred so far. $\cQ_r$ is determined by $X_1,\ldots,X_{r-1}$. Conditioned on $X_1,\ldots,X_{r-1}$, the partitioning of $E \setminus \cup_{i=1}^{r-1} X_{i}$ is uniformly random.
This implies that for each query $Q$, the set $Q \setminus (X_1 \cup \ldots \cup X_{r-1})$ is partitioned randomly into $Q \cap X_r, \ldots, Q \cap X_\ell, Q \cap Y_1, \ldots, Q \cap Y_{\ell'}$ and the cardinalities $|Q \cap X_i|, |Q \cap Y_j|$ are concentrated around their expectations. We have $\E[|Q \cap X_{i+1}|] = \frac12 \E[ |Q \cap X_{i}|]$ and $\E[ |Q \cap Y_j| ] = \frac{1}{\ell'} \E[|Q \cap Y|] = \frac{1}{\ell'} \E[|Q \cap X_\ell|]$ for any $r \leq i < \ell$ and $1 \leq j \leq \ell'$.
By Hoeffding's bound\footnote{Technically, Hoeffding's bound does not apply directly, since elements do not appear in $X_1,X_2,\ldots$ independently. However, due to the cardinality constraints, the appearances of elements are negatively correlated, so Hoeffding's bound still applies; see \cite{PS97}.},
for $x_i = \frac{1}{k} |Q \cap X_i|$, $i \geq r$,  and conditioned on the choice of $X_1,\ldots,X_{r-1}$,
\begin{equation*}
\begin{split}
    \prob{\left| x_i - \E x_i \right| > \alpha} & \leq 2 \exp\left(-\alpha^2 k^2 / |X_i| \right)  \leq 2 \exp\left(-\alpha^2 n^{1/4}\right),
\end{split}
\end{equation*}
where we used $k = n^{2/3} / \ell' = n^{5/8}$ and $|X_i| \leq n/2$.
Similarly, $|y_j - \E y_j| > \alpha$ with probability at most $2 e^{-\alpha^2 n^{1/4}}$.
We set $\alpha = \frac12 \eps = n^{-1/24}$ to obtain a high probability bound in the form $1 - e^{-\Omega(n^{1/6})}$.

If $|x_i - \E x_i| \leq \eps/2$ and $|y_j - \E y_j| \leq \eps/2$ for all $i \geq r$ and $j \leq \ell$, this means that the query $Q$ is in the symmetric region for all the relevant evaluations of $h(x_i,x_{i+1})$ and $g(y_1,\ldots,y_{\ell'})$. Therefore, by construction the answer will be in the form of Equation~\ref{eq:sym-form}, which only depends on $Q$ and $X_1,\ldots,X_r$. 

Let us bound the probability of $\cE_r \setminus ({\cE_1} \cup {\cE_2} \cup \ldots \cup {\cE_{r-1}})$. If we condition on $X_1,\ldots,X_{r-1}$, assuming that none of $\cE_1,\ldots,\cE_{r-1}$ occurred, the query set $\cQ_r$ in round $r$ is fixed. By a union bound over $\cQ_r$, the probability that any of them violates Equation~\ref{eq:sym-form} is $e^{-\Omega(-n^{1/6})}$.
Hence, 
$$ \Pr[\cE_r \setminus ({\cE_1} \cup \ldots \cup {\cE_{r-1}}) \mid X_1,\ldots,X_{r-1}] \leq \text{poly}(n) \ e^{-\Omega(n^{-1/6})}.$$
Now we can average over the choices of $X_1,\ldots,X_{r-1}$ and still obtain
$ \Pr[\cE_r \setminus \cup_{i=1}^{r-1}{\cE_i}] = e^{-\Omega(n^{-1/6})}.$
Therefore, by induction,
\begin{equation*}
\begin{split}
\Pr\left[\bigcup_{i=1}^{r} \cE_i \right] &= \Pr[\cE_1] + \Pr[\cE_2 \setminus \cE_1] + \ldots + \Pr[\cE_r \setminus \cup_{i=1}^{r-1}{\cE_i}] \\
& = r e^{-\Omega(n^{-1/6})} = e^{-\Omega(n^{-1/6})}.
\end{split}
\end{equation*}
This implies that with high probability, the computation path in round $r$ is determined by $X_1,\ldots,X_{r-1}$.

Consequently, a solution returned after $r-1$ rounds is determined by $X_1,\ldots,X_{r-1}$ with high probability.
By the same Chernoff-Hoeffding bounds, the solution is with high probability in the form
$$\left(x_1, \ldots, x_r, \frac12 x_r,\frac14 x_r,\ldots,\frac{1}{2^{\ell}} x_r,y_1,\ldots,y_{\ell'} \right),\ \  \, y_i = \frac{1}{\ell' 2^{\ell-r}} x_r,$$
up to additive error $\pm n^{-1/24}$ in each coordinate.

Finally, we note that by allowing the algorithm to use random bits, the results are a convex combination of the bounds above, so the same high probability bounds are satisfied.
\end{proof}

We remark an adaptive algorithm can indeed learn the identity of $X_1,\ldots,X_r$ in the first $r$ rounds, so in this sense our analysis is tight. In the $r$-th round, we can determine $X_r$ by examining the marginal values of elements with respect to a random subset of $E \setminus (X_1 \cup \ldots \cup X_{r-1})$.  After the $r$ rounds, an adaptive algorithm can completely determine $\{X_i\}_{i \leq r}$ and is free to choose the values of $x_1,\ldots,x_{r}$ in a query, but not the further variables.

\subsection{Analysis of an \texorpdfstring{$r$}{r}-round algorithm}

Here we bound the value that an algorithm can possibly achieve in $r$ rounds.

\begin{lemma}
\label{lemma:quadr-opt}
The optimum of the following optimization problem has value at least $\frac{1}{4r}$:
\begin{equation*}
\begin{split}
    \min_{\mathbf{x}} & \,  4x_1^2 + \sum_{i=2}^{r} \left( 2 x_{i} - x_{i-1} \right)^2: 
     \ \ \ \sum_{i=1}^{r-1} x_i + 2x_r \geq \frac12, \,  x_i \geq 0.
\end{split}
\end{equation*}
\end{lemma}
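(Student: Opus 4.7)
The plan is to reduce the problem to a one-line application of Cauchy--Schwarz by an appropriate change of variables. Introduce
\[ y_1 = 2 x_1, \qquad y_i = 2 x_i - x_{i-1} \ \text{for } 2 \le i \le r, \]
so that the objective is exactly $\sum_{i=1}^r y_i^2$. The key observation is that, after this substitution, the linear constraint $\sum_{i=1}^{r-1} x_i + 2 x_r \ge \tfrac12$ becomes the much cleaner constraint $\sum_{i=1}^{r} y_i \ge \tfrac12$.

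To verify this, I would first invert the substitution. Setting $x_0 = 0$, the recursion $x_i = (y_i + x_{i-1})/2$ unrolls to $x_i = \sum_{j=1}^i 2^{-(i-j+1)} y_j$. Substituting into the left-hand side of the constraint and collecting coefficients of each $y_j$, the geometric series
\[ \sum_{i=j}^{r-1} 2^{-(i-j+1)} = 1 - 2^{-(r-j)} \]
gives the coefficient of $y_j$ from the $\sum_{i=1}^{r-1} x_i$ term, while the $2 x_r$ term contributes a coefficient of $2^{-(r-j)}$ to $y_j$ (for $j < r$) and $1$ to $y_r$. These cancel neatly so that each $y_j$ appears with coefficient exactly $1$, giving $\sum_{i=1}^{r-1} x_i + 2 x_r = \sum_{j=1}^r y_j$. (The non-negativity constraint $x_i \ge 0$ can simply be dropped, since removing constraints only lowers the minimum.)

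Having reduced to minimizing $\sum_{i=1}^r y_i^2$ subject to $\sum_{i=1}^r y_i \ge \tfrac12$, I apply Cauchy--Schwarz:
\[ \sum_{i=1}^r y_i^2 \ \ge \ \frac{1}{r}\Bigl(\sum_{i=1}^r y_i\Bigr)^2 \ \ge \ \frac{1}{r} \cdot \frac{1}{4} \ = \ \frac{1}{4r}, \]
which is the claimed bound. The only potentially delicate step is the telescoping computation that turns the constraint into $\sum_j y_j \ge \tfrac12$; everything else is routine. Note that the bound is in fact tight, attained at $y_j \equiv 1/(2r)$, which corresponds to $x_i = \tfrac{1}{2r}(1 - 2^{-i})$ (up to the factor $1/2$), giving a concrete optimum and confirming that the Cauchy--Schwarz step is not lossy.
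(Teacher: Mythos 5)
Your proof is correct and is essentially the paper's argument: both apply Cauchy--Schwarz to the quantities $2x_i - x_{i-1}$ and use that their sum equals $\sum_{i=1}^{r-1} x_i + 2x_r \geq \tfrac12$. The only difference is cosmetic --- the paper verifies this identity by a one-line telescoping sum rather than inverting the substitution and summing geometric series.
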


\begin{proof}
Denote the objective by $\varphi(\mathbf{x}) := \sum_{i=1}^{r} \left( 2 x_{i} - x_{i-1} \right)^2$ where $x_0 = 0$.
By Cauchy-Schwarz, we have
\begin{equation*}
\varphi(\mathbf{x}) \geq  \frac1r \left( \sum_{i=1}^{r} (2x_i-x_{i-1}) \right)^2
 = \frac1r \left( \sum_{i=1}^{r-1} x_i + 2 x_r \right)^2
\geq \frac{1}{4r}. \qedhere
\end{equation*}
\end{proof}

\begin{theorem}
\label{thm:main-log-monotone}
Any $r$-round adaptive algorithm for monotone submodular optimization can achieve at most a $1 - 1/e - \Omega(1/r)$ approximation, for $r \leq \frac13 \log_2 n$ where $n$ is the number of elements.
\end{theorem}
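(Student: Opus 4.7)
The plan is to combine Lemma~\ref{lemma:r-round-solution} with the pointwise loss estimate of Lemma~\ref{lemma:gain-per-round} and the spread lower bound of Lemma~\ref{lemma:quadr-opt}. After fixing the algorithm's random bits and averaging over the uniformly random partition, Lemma~\ref{lemma:r-round-solution} (applied with parameter $r+1$, since $r$ rounds of queries correspond to one more ``free'' layer) guarantees that with high probability the returned solution $S$ has, up to additive error $O(n^{-1/24})=o(1/r)$ in each coordinate, the structured form $(x_1,\ldots,x_{r+1},x_{r+1}/2,x_{r+1}/4,\ldots,x_{r+1}/2^{\ell-r-1}, y, y,\ldots, y)$ with $y=x_{r+1}/(\ell' 2^{\ell-r-1})$, and the cardinality constraint $|S|\le k$ becomes $\sum_{i=1}^{r} x_i + 2x_{r+1}\le 1+o(1)$. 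Since $\mathrm{OPT}\ge 1-\eps$ (attained by taking any single $Y_j$ in full, which keeps $g$ in its symmetric branch), it then suffices to show that $f(S) \le 1 - 1/e - \Omega(1/r)$ on every such structured solution.

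To evaluate $f = 1 - \tilde h(\mathbf{x})(1-g(\mathbf{y}))$ I take $-\log$ of each factor and expand. The pairs $h(x_i,x_{i+1})$ with $i\ge r+1$ satisfy $x_{i+1}=x_i/2$, hence lie in the symmetric region of Lemma~\ref{lemma:h-def} and contribute exactly $(x_i+x_{i+1})/2$ to $-\log(1-h)$; uniformity of the $y$-variables puts $g$ in the symmetric branch of Lemma~\ref{lemma:g-properties}, contributing the negligible amount $x_{r+1}/2^{\ell-r-1}$. For the remaining pairs $i=0,1,\ldots,r$, writing $\delta_{i+1}:=|x_i-2x_{i+1}|$, Lemma~\ref{lemma:gain-per-round} gives $-\log(1-h(x_i,x_{i+1})) \le (x_i+x_{i+1})/2 - \tfrac{1}{16}(\delta_{i+1}-\eps)_+^2$. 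Summing, together with the $-\log(1-h_1(x_\ell)) = x_\ell/2$ term, and exploiting the geometric decay of the tail, the linear terms collapse cleanly to $\sum_{i=1}^{r}x_i + 2x_{r+1}$, yielding
\begin{equation*}
-\log\bigl(\tilde{h}(\mathbf{x})(1-g(\mathbf{y}))\bigr) \;\le\; \sum_{i=1}^{r} x_i + 2x_{r+1} \;-\; \frac{1}{16}\sum_{i=1}^{r+1} (\delta_i-\eps)_+^2 \;+\; o(1/r).
\end{equation*}

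Combined with the cardinality bound, this gives $\tilde h(\mathbf{x})(1-g(\mathbf{y})) \ge \exp\bigl(-1 + \tfrac{1}{16}\sum(\delta_i-\eps)_+^2 - o(1/r)\bigr)$, so it remains to show $\sum_{i=1}^{r+1}(\delta_i-\eps)_+^2 = \Omega(1/r)$. If $\sum_{i=1}^r x_i + 2x_{r+1} \le 1/2$, then already $\tilde h(1-g) \ge e^{-1/2} \gg 1/e$ and the desired additive gap is trivial. Otherwise, Lemma~\ref{lemma:quadr-opt} (applied to the $r+1$ variables $x_1,\ldots,x_{r+1}$, noting $\delta_1=2x_1$ and $\delta_i=|x_{i-1}-2x_i|$) yields $\sum_{i=1}^{r+1}\delta_i^2 \ge 1/(4(r+1))$.

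The main technical obstacle is converting this lower bound on $\sum \delta_i^2$ into the desired bound on the shifted sum $\sum(\delta_i-\eps)_+^2$, since a priori some $\delta_i$ could be comparable to $\eps$ and contribute nothing after the shift. The resolution relies crucially on $\eps = 2n^{-1/24}$ being polynomially smaller than $1/r$ for $r<\tfrac{1}{3}\log_2 n$: the indices with $\delta_i \le 2\eps$ contribute at most $r\cdot(2\eps)^2 = O(r\eps^2) = o(1/r)$ to $\sum\delta_i^2$, so the remaining indices (where $\delta_i > 2\eps$) must carry at least $\Omega(1/r)$ of the total mass, and on them $(\delta_i-\eps)^2 > \delta_i^2/4$. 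This gives $\tfrac{1}{16}\sum(\delta_i-\eps)_+^2 = \Omega(1/r)$ and hence $\tilde h(1-g) \ge (1/e)\bigl(1+\Omega(1/r)\bigr)$, so $f(S)/\mathrm{OPT} \le (1-1/e-\Omega(1/r))/(1-\eps) = 1-1/e-\Omega(1/r)$, completing the proof.
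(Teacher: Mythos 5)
Your proposal is correct and follows essentially the same route as the paper's proof: Lemma~\ref{lemma:r-round-solution} to pin down the symmetrized form of any solution found in $r$ rounds, Lemma~\ref{lemma:gain-per-round} to turn asymmetry in each pair into an exponential penalty, and Lemma~\ref{lemma:quadr-opt} together with the cardinality constraint to force a total penalty of $\Omega(1/r)$; your explicit split into indices with $\delta_i \le 2\eps$ versus $\delta_i > 2\eps$ is in fact a slightly more careful treatment of the $\eps$-shift than the paper's ``$\eps$ is negligible compared to $1/r$'' remark. One cosmetic slip: the bound $\mathrm{OPT}\ge 1-\eps$ for a full block $Y_j$ follows from the property $g(1,0,\ldots,0)=1-\eps$ in Lemma~\ref{lemma:g-properties}, not from the symmetric branch of $g$ (a full $Y_j$ is maximally asymmetric), but this does not affect the argument.
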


\begin{proof}
By Lemma~\ref{lemma:r-round-solution}, in $r$ rounds we can only find solutions of the form $$(x_1, \ldots, x_r, \frac12 x_r,\frac14 x_r,\ldots,\frac{1}{2^{\ell}} x_r,y_1,\ldots,y_{\ell'}),\ \ \, y_i = \frac{1}{\ell' 2^{\ell-r}} x_r$$ up to $\pm n^{-1/24}$ error in each coordinate. Choosing $\eps = 2n^{-1/24}$, any solution found after $r$ rounds is w.h.p.~such that the objective function $f = \min \{f_1, 1-\eps \}$ has the same value as the function $\tilde{f} = \min \{\tilde{f_1}, 1-\eps\}$, where
$$\tilde{h} = (1 - h(0, x_1))(1 - h(x_1, x_2))\cdots(1 - h(x_{r-2}, x_{r-1})),$$
$$\tilde{s} = \exp\left(- \sum_{i=r}^\ell x_i - \sum_{i=1}^{\ell'} y_i\right),$$
$$\tilde{f_1}= 1 - \tilde{h} \tilde{s} \exp(-x_{r-1}/2).$$
(Note that the contributions of $h(x_{r-1},x_r), h(x_r,x_{r+1}),$ etc. have been replaced by their ``symmetrized variants" here.)
By Lemma~\ref{lemma:gain-per-round}, $h(x_{i-1},x_{i}) \leq 1 - e^{-\frac12 (x_{i-1}+x_{i})+\frac{1}{16}(|2x_i-x_{i-1}|-\eps)^2}$ and so
\begin{equation}
\label{eq:overall-loss}
\begin{split}
\tilde{f_1} & \leq 1 - \tilde{s}\exp\left(\frac{1}{16} (2x_1-\eps)^2+ \frac{1}{16} \sum_{i=2}^{r} (|2x_{i} - x_{i}| - \eps)^2\right) \\
& \leq 1 - \tilde{s}\exp\left(\frac{1}{64r}-\eps \right) \\
& \leq 1 - 1/e - \Omega (1/r).
\end{split}
\end{equation}
The inequality is derived as follows: Since $\tilde{f_1}$ is monotone, we can assume that the solution has maximum possible cardinality, which means $\sum_{i=1}^{\ell} x_i + \sum_{j=1}^{\ell'} y_j = 1$. We then use 
the cardinality constraint to bound $\sum_{i=1}^{r-1} x_i + 2x_r \geq \sum_{i=1}^{\ell} x_i + \sum_{j=1}^{\ell'} y_j - O(\ell' \eps) \geq 1/2$ w.h.p.,
Lemma~\ref{lemma:quadr-opt} to estimate $4x_1^2 + \sum_{i=2}^{r} (2x_i - x_{i-1})^2 \geq \frac{1}{4r}$,
and the fact that $\eps = O(n^{-1/24})$ which is negligible compared to $1/r$.
The optimal solution is any $Y_i$, which gives $f(0,0,0,\ldots,0,1) = 1 - \eps = 1 - O(n^{-1/24})$.  Thus in $r$ rounds we get an approximation factor of $1 - 1/e - \Omega(1/r)$.
\end{proof}

\section{A Poly-round lower bound}
\label{sec:poly-lower-bound}

In this section we show a variation on the lower bound in Section~\ref{sec:log-lower-bound}, choosing blocks decreasing by factors of $1+\Theta\left(\frac{\log n}{r}\right)$ instead of $2$. This will allow us to extend our result to $r=O(n^c)$ for some constant $c$ but with a weaker dependence of the loss in approximation ratio $\Theta\left(\frac{\log^2n}{r^3}\right)$ rather than $\Theta(1/r)$. 

\begin{theorem}
\label{thm:monontone-lower-bound}
For any $n$ and $r$ satisfying $\Omega(\log n) < r < O(n^{c})$, where $c > 0$ is some absolute constant, there is no algorithm using $r$ rounds of queries and achieving better than a $\left(1-1/e-\Omega\left(\frac{\log^2n}{r^3}\right)\right)$-approximation for monotone submodular maximization subject to a cardinality constraint (on a ground set of size $n$).
\end{theorem}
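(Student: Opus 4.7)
The plan is to adapt the construction of Section~\ref{sec:log-lower-bound}, replacing the block shrinkage factor $2$ by $\beta := 1 + \Theta(\log n / r)$ and taking $\ell = \Theta(r)$ layers $X_1,\ldots,X_\ell$ with $|X_{i+1}| = |X_i|/\beta$. Unlike the log-round case, this yields a different hard instance for each value of $r$: one works with $\ell = r+1$ layers and argues about solutions obtainable in $r-1$ rounds. Tuning the exponents so that $\beta^{\ell} = n^{c_1}$ for some $c_1 \in (0,1)$ keeps every block polynomial in $n$, which is what the Hoeffding/union-bound argument needs. The $Y$-part and the inner $(1-1/e)$-hard gadget $g$ from Lemma~\ref{lemma:g-properties} are carried over unchanged. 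For $r < n^c$ with $c$ sufficiently small, all parameters remain polynomial in $n$.

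Next I would redefine the pair function $h(x,x')$ so that its symmetric region is centered at $x = \beta x'$. In that region set $h(x,x') = 1 - e^{-\gamma(x+x')}$ with $\gamma = 1/2$ (so that $\tilde{h}$ telescopes to $e^{-\sum_i x_i}$ just as in Section~\ref{sec:log-lower-bound}), and outside use $1 - h = p\,e^{-\alpha x + c_\eps} + q\,e^{-\alpha' x' - c'_\eps}$ with convex weights $p = \beta/(\beta+1)$, $q = 1/(\beta+1)$ and exponents $\alpha = \gamma(\beta+1)/\beta$, $\alpha' = \gamma(\beta+1)$. These choices are forced by requiring the two definitions to agree with matching first derivatives on the boundary $|x - \beta x'| = \eps$; for $\beta = 2$ they recover Lemma~\ref{lemma:h-def}. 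A short calculation shows $1 - h = e^{-\gamma(x+x')}\,F(\delta - \eps)$ with $F(u) = p\,e^{-\gamma u/\beta} + q\,e^{\gamma u}$, and one checks $F(0) = 1$, $F'(0) = 0$, $F''(0) = \gamma^2/\beta$, yielding the analogue of Lemma~\ref{lemma:gain-per-round}: $h(x,x') \le 1 - e^{-\gamma(x+x') + \kappa(\delta - \eps)^2}$ with $\kappa = \Omega(1)$ uniformly in $\beta \in (1, 2]$. The indistinguishability argument of Lemma~\ref{lemma:r-round-solution} adapts essentially verbatim because $|X_i|/k$ stays polynomially small and Hoeffding on the negatively correlated indicators still gives $e^{-n^{\Omega(1)}}$ failure probability per query.

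The final calculation parallels the proof of Theorem~\ref{thm:main-log-monotone} with a rebalanced Cauchy-Schwarz step. After $r-1$ rounds the solution has the form $(x_1,\ldots,x_{r-1},\,x_r,\,x_r/\beta,\,x_r/\beta^2,\ldots)$ up to $O(\eps)$ error, and the cardinality constraint gives $\sum_{i=1}^{r-1} x_i + C x_r = 1$ with $C = \Theta(\beta/(\beta - 1)) = \Theta(r/\log n)$. The extra loss in the exponent of $\tilde{s}$ is at least $\kappa \sum_{i=1}^{r} (\beta x_i - x_{i-1})^2$ (with $x_0 = 0$), and
\begin{align*}
\sum_{i=1}^r (\beta x_i - x_{i-1})^2 &\ge \frac{1}{r}\left((\beta - 1)\sum_{i=1}^{r-1} x_i + \beta x_r\right)^2 \\
&= \Omega\!\left(\frac{\log^2 n}{r^3}\right),
\end{align*}
where the last step uses that the constraint forces either $\sum_{i < r} x_i = \Omega(1)$ (making the first term $\Omega(\log n/r)$) or $x_r = \Omega(1/C) = \Omega(\log n/r)$ (making the second term $\Omega(\log n/r)$). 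This translates to a loss of $\Omega(\log^2 n/r^3)$ in the approximation ratio.

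The main obstacle is the careful design and verification of $h$ for small $\beta - 1$, where the convex weights $p,q$ approach $1/2,1/2$: one must confirm that $h$ remains a valid smooth monotone submodular function (positive, decreasing first partial derivatives, continuity across the region boundaries) and that the quadratic coefficient $\kappa$ does not degenerate. The identity $F''(0) = \gamma^2/\beta$ is the essential point showing $\kappa = \Omega(1)$. A closely related subtlety is that the Hoeffding slack $\eps$ must dominate the typical deviation $O(\sqrt{|X_i|}/k)$ while being much smaller than the signal $\beta - 1 = \Theta(\log n/r)$; this tension is what ultimately pins down the allowable exponent $c$ in the hypothesis $r < n^c$.
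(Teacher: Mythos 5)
Your proposal is essentially correct, but it takes a genuinely different route from the paper's Section~\ref{sec:poly-lower-bound}. You generalize the pairwise symmetry-gap construction of Section~\ref{sec:log-lower-bound} directly: shrinkage $\beta = 1+\Theta(\log n/r)$, a re-derived two-variable gadget $h(x,x')$ glued $C^1$ across the boundary $|x-\beta x'|=\eps$ with weights $\beta/(\beta+1)$ and $1/(\beta+1)$ (your computation $F''(0)=\gamma^2/\beta$ is correct, and your formulas do recover Lemma~\ref{lemma:h-def} at $\beta=2$), followed by a Cauchy--Schwarz step exactly as in Lemma~\ref{lemma:quadr-opt}. The paper deliberately avoids this route: it drops the product-of-pairs form altogether and uses $q(x_1,\ldots,x_r)=1-\exp\bigl(-\sum_i x_i+\sum_i h((1+\delta)x_{i+1}-x_i)\bigr)$ with a one-dimensional, one-sided, piecewise-quadratic penalty $h$; monotonicity and submodularity are then checked by direct derivative bounds (Lemmas~\ref{lem:h-properties} and~\ref{lem:q-properties}), and the penalty is lower-bounded by convexity/Jensen (Lemma~\ref{lem:polyround-opt}) rather than Cauchy--Schwarz. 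What the paper's construction buys is precisely what you identify as your main obstacle: no three-region case analysis per pair and no constants to track as $\beta\to 1$; what your route buys is that it preserves the symmetry-gap intuition of Section~\ref{sec:log-lower-bound} and makes the connection between the two regimes transparent. Both end with the same dichotomy, either $(\beta-1)\sum_{i<r}x_i=\Omega(\log n/r)$ or $x_r$ large, yielding the loss $\Omega(\log^2 n/r^3)$.

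Two small points to fix. First, your parameters are internally inconsistent: with $\ell=r+1$ layers the unlearned tail after $r-1$ rounds is short, so the coefficient on $x_r$ in the cardinality constraint is $O(1)$, not $\Theta(\beta/(\beta-1))$; either choice of $\ell$ works (a smaller coefficient only strengthens the dichotomy), but pick one and carry it through, and remember the $Y$-part contributes to the residual mass as well. Second, the binding requirement on the slack is not $\eps\ll\beta-1$ but $\eps = o(\log n/r^2)$, since the optimal adversarial spread makes each per-pair deviation of order $\delta/r$ (this is exactly the hypothesis $\omega(\eps)=\delta/r=o(1)$ in Lemma~\ref{lem:polyround-opt}); with $\eps$ polynomially small this is what pins down the admissible exponent $c$ in $r<n^c$.
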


Instead of modifying the presentation of the previous section, we show an alternative construction that -- on the surface -- looks quite different. Our main reason for showing this alternate construction is that it is technically simpler, although less intuitive to derive.

\subsection{Construction of the objective function}

One difference in this construction is that a separate instance is needed each value of $r$, since the shrinkage rate between the blocks depend on $r$.

Let $E = X_1 \cup \ldots \cup X_r \cup Y_1 \cup \ldots Y_{\ell'}$ where $|X_{i+1}| = |X_i| / (1+\delta)$ and $|Y_i| = \frac{1}{\ell'}|X_r|=k$.\footnote{We ignore the issue of rounding number to the nearest integer. It is easy to verify that this does not affect the analysis significantly.} In the construction, we require $k = \Omega(n^{2/3})$, so we choose $\ell'= \Theta(n^{1/5})$ and $\delta = \frac{2}{15} \frac{\log n}{r}$.

For the lower bound, we consider functions in the following form:
$$ f_1=1-(1-q(x_1,\ldots,x_r)) (1-g(y_1,\ldots,y_{\ell'}))$$
where $x_i = \frac{1}{k} |S \cap X_i|$ and $y_i = \frac{1}{k} |S \cap Y_i|$.

By Lemmas proven in Appendix~\ref{app:basics}, $f$ is monotone submodular so long as $q$ and $g$ are monotone submodular. As in the previous section, $g$ will be the $(1-1/e)$-hard instance constructed in Appendix~\ref{sec:1-1/e}. Furthermore, the actual objective function will be $f = \min\{f_1, 1-\epsilon \}$ for a parameter $\eps$ to be specified later. Now we specify the function $q(x_1,\ldots,x_r)$ (we set $x_0 = 0$).
$$q(x_1,\ldots,x_r)= 1-\exp\left(-\sum_{i=1}^{r}x_i+\sum_{i=0}^{r-1}h((1+\delta)x_{i+1}-x_{i})\right),$$
\begin{displaymath}
h(x) = \left\{
\begin{array}{ll}
0 & x\le \epsilon \\
\alpha(x-\epsilon)^2 & \epsilon < x \leq 2+\epsilon \\
4\alpha(x-1-\epsilon) & x > 2+\epsilon \\
\end{array}\right.
\end{displaymath}
where $\alpha$ is a small enough constant and $\epsilon = n^{-1/10}$.

Though somewhat unintuitive, this choice of $q$ yields a hard instance $f$ with proper choice of $\alpha$ and $\eps$. We will show that this instance is both monotone and submodular. Before we explain how to choose the constants, we first explain the connection between the two constructions.

\subsection{Connection between the two constructions} 
Ignoring for now the third case of $h$, we see that $q$ can be more succintly phrased as 
\begin{equation*}
q(x_1,\ldots, x_r) = 1 - \exp\left(-\sum_{i=1}^{r-1}x_i + \alpha\sum_{i=0}^{r-1} ((1+\delta)x_{i+1} - x_i - \eps)_+^2 \right),
\end{equation*}
where $(x)_+ = \max(x, 0)$ and $x_0 = 0$.

Supposing for a moment that we set $\alpha = \frac{1}{16}$ and $\delta = 1$, this closely mimics the bound on the penalty function from Lemma~\ref{lemma:gain-per-round} and Equation~\ref{eq:overall-loss}. The main difference is that only the case $(1+\delta)x_{i+1} - x_i > \eps$ is penalized.

For a query $Q$ with $|Q|=k$ with no knowledge of any of the partitions, we expect for $(1+\delta)x_{i+1} - x_i$ to be less than $\eps$, hiding all but the first term of the penalty function. As we learn more parts, the algorithm can spread the penalty terms among the learned layers $X_i$, thus lessening the penalty (as in Section~\ref{sec:log-lower-bound}).

\subsection{Overview of the lower bound}
Unfortunately, some technicalities remain in the construction, which requires case 3 in the definition of $h$ as well as a judicious choice of $\alpha$. The following properties are true for $h$ and $q$, for some constant $\alpha>0$.
\begin{lemma}{Properties of the function $h$.}
\label{lem:h-properties}
~
\begin{itemize}
    \item $h$ is continuous, non-decreasing and differentiable.
    \item The derivative of $h$ is continuous and at most $4\alpha$.
    \item $h(x)\le 4 \alpha x$ when $x\ge0$.
\end{itemize}
\end{lemma}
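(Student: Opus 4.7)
The plan is a direct case analysis on the three pieces of the piecewise definition of $h$, since each piece is itself smooth and the only nontrivial checks occur at the two gluing points $x=\epsilon$ and $x=2+\epsilon$.

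First I would verify continuity and differentiability at the gluing points. At $x=\epsilon$, the middle piece $\alpha(x-\epsilon)^2$ evaluates to $0$, which agrees with the constant $0$ on the left. At $x=2+\epsilon$, the middle piece gives $\alpha\cdot 2^2 = 4\alpha$, which matches $4\alpha(x-1-\epsilon)$ evaluated at $2+\epsilon$. Differentiating piecewise gives
\[
h'(x) =
\begin{cases}
0 & x<\epsilon,\\
2\alpha(x-\epsilon) & \epsilon<x<2+\epsilon,\\
4\alpha & x>2+\epsilon,
\end{cases}
\]
and the one-sided derivatives agree at the two transition points (both equal $0$ at $x=\epsilon$ and both equal $4\alpha$ at $x=2+\epsilon$). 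Hence $h$ is differentiable everywhere with continuous derivative, and each branch of $h'$ lies in $[0, 4\alpha]$; this simultaneously establishes the first bullet (continuity, non-decreasing, differentiable, since $h'\ge 0$) and the second bullet (continuity of $h'$ and the bound $h'(x)\le 4\alpha$).

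For the third bullet $h(x)\le 4\alpha x$ on $x\ge 0$, I would treat the three pieces separately. For $0\le x\le \epsilon$ the bound is trivial because $h(x)=0$. For $x>2+\epsilon$ we have $h(x)=4\alpha(x-1-\epsilon)\le 4\alpha x$ since $1+\epsilon\ge 0$. On the middle interval $\epsilon<x\le 2+\epsilon$ we use $x-\epsilon\le 2$ to write
\[
\alpha(x-\epsilon)^2 \le 2\alpha(x-\epsilon) \le 2\alpha x \le 4\alpha x,
\]
completing the verification.

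Since every step reduces to elementary piecewise computation, there is no real obstacle here; the lemma is essentially bookkeeping to confirm that the ad hoc definition of $h$ has the smoothness and growth needed downstream. The genuine work will come later, when these properties of $h$ have to be combined to establish that the composite function $q(x_1,\ldots,x_r)$ is monotone submodular and that its penalty term is correctly calibrated against the cardinality constraint.
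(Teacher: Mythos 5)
Your proposal is correct and follows essentially the same route as the paper: a direct, elementary verification from the piecewise definition of $h$, checking the values and one-sided derivatives at the gluing points $x=\epsilon$ and $x=2+\epsilon$. The only cosmetic difference is in the third bullet, where the paper notes that $h(0)=0$ and $h'(x)\le 4\alpha$ (so $h$ is dominated by $4\alpha x$ by comparing derivatives), while you bound each piece of $h$ directly; both arguments are equally valid and trivial.
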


\begin{proof}
The first property is easy to verify. And since the derivative is non-decreasing, it is bounded by the third case, which is $4\alpha$. For the third property, note that $4\alpha x$ coincides with $h(x)$ at $x=0$, but has derivative greater or equal to $h(x)$ for all $x$.
\end{proof}

\begin{figure}[ht]
\begin{center}
    \includegraphics[width=.95\columnwidth]{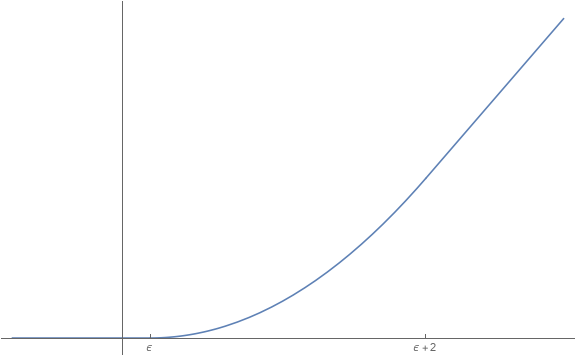}
\end{center}
\caption{Plot of the function $h$.}
\end{figure}
In particular, case 2 comes into play when proving bounds on the derivatives of $q$. This boundedness is required to show that $q$ has non-increasing first partial derivatives.

\begin{lemma}
\label{lem:q-properties}
For $\alpha \in (0,\frac{1}{16})$, $\delta \in (0,1)$,
the function $q$ satisfies
\begin{itemize}
\item $q(x_1, \ldots, x_{r}, y_1, \ldots, y_{\ell'}) \in [0,1]$.
\item $q$ is continuous and it has first partial derivatives which are continuous in every coordinate.
\item $q$ is non-decreasing and its first partial derivative are non-increasing in each coordinate.
\end{itemize}
\end{lemma}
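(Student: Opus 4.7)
The plan is to write $q = 1 - e^u$ where $u(\mathbf{x}) := -\sum_{i=1}^{r} x_i + \sum_{i=0}^{r-1} h(a_{i+1})$ with $a_j := (1+\delta) x_j - x_{j-1}$, and derive every property of $q$ from the structural properties of $h$ recorded in Lemma~\ref{lem:h-properties} (continuity of $h$ and $h'$, the bound $0 \leq h' \leq 4\alpha$, convexity $0 \leq h'' \leq 2\alpha$ a.e., and $h(y) \leq 4\alpha y_+$). First, for the range $q\in[0,1]$: using $h(y) \le 4\alpha\,y_+$ we obtain $h(a_{i+1}) \le 4\alpha(1+\delta)\,x_{i+1}$ in either sign of $a_{i+1}$, so summing gives $\sum_{i=0}^{r-1} h(a_{i+1}) \le 4\alpha(1+\delta) \sum_{i=1}^r x_i \le \tfrac12 \sum_{i=1}^r x_i$ since $4\alpha(1+\delta) < 1/2$ under the hypothesis. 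Hence $u \le 0$ and $q = 1 - e^u \in [0,1)$. Continuity of $q$ and of $\partial q/\partial x_i$ follows immediately from continuity of $h$ and $h'$ and the chain rule.

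For monotonicity I will compute, writing $\phi_i := \partial u/\partial x_i$,
\begin{equation*}
\phi_i = -1 + (1+\delta)\,h'(a_i) - h'(a_{i+1}) \quad (i<r), \qquad \phi_r = -1 + (1+\delta)\,h'(a_r),
\end{equation*}
and use $0 \le h' \le 4\alpha$ to conclude $\phi_i \le -1 + 4\alpha(1+\delta) < -1/2 < 0$. Therefore $\partial q/\partial x_i = -e^u \phi_i > 0$, so $q$ is strictly increasing in every variable. The same inequality yields the lower bound $|\phi_i| > 1/2$, which will be the key quantitative fact for the submodularity step.

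The main obstacle is showing that each $\partial q/\partial x_i$ is non-increasing in every $x_j$, since $q$ is only a twice-continuously-differentiable function where $h''$ exists (everywhere except two breakpoints). Differentiating once more,
\begin{equation*}
\frac{\partial^2 q}{\partial x_j \partial x_i} = -e^u\bigl(\phi_i \phi_j + \partial_j \phi_i\bigr),
\end{equation*}
so I must check $\phi_i\phi_j + \partial_j\phi_i \ge 0$ at every point where $h''$ is defined; continuity of $\partial q/\partial x_i$ then upgrades this to the claimed monotonicity statement. I split into cases on $|i-j|$. If $|i-j|\ge 2$ the cross term vanishes and $\phi_i\phi_j > 0$ since both factors are negative. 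If $i=j$ one computes $\partial_i \phi_i = (1+\delta)^2 h''(a_i) + h''(a_{i+1}) \ge 0$ (with only the first term for $i=r$) and $\phi_i^2 \ge 0$. The interesting case is $|i-j|=1$, where a direct chain-rule computation gives $\partial_j \phi_i = -(1+\delta)\,h''(a_k)$ for the appropriate $k\in\{i,i+1\}$, so $|\partial_j\phi_i| \le 2\alpha(1+\delta) < 1/4$; combined with $\phi_i\phi_j = |\phi_i||\phi_j| > 1/4$ this yields $\phi_i\phi_j + \partial_j \phi_i > 0$.

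The arithmetic at the end boils down to the two strict inequalities $4\alpha(1+\delta) < 1/2$ and $2\alpha(1+\delta) < 1/4$, both of which hold under the hypothesis $\alpha < 1/16$, $\delta < 1$; everything else is the chain rule applied to the explicit definition of $u$. I expect the bookkeeping around the indices $a_i$ and the boundary case $i=r$ (where only the first of the two $h'$ terms appears in $\phi_r$) to require the most care, but no genuinely new idea is needed beyond the lower bound $|\phi_i| > 1/2$ dominating the upper bound $|\partial_j \phi_i| < 1/4$.
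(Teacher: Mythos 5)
Your proposal is correct and takes essentially the same route as the paper's own proof: writing $q=1-e^{-p}$, using $h(x)\le 4\alpha x$ for the range, $h'\le 4\alpha$ for monotonicity, and the comparison of the mixed second partials of the exponent (bounded by $2\alpha(1+\delta)$) against the product of first partials (at least $(1-4\alpha(1+\delta))^2>1/4$) for the non-increasing derivatives. The only differences are cosmetic: your explicit case split on $|i-j|$ and the a.e.\ treatment of $h''$ make the argument slightly more careful than the paper's, and the facts $0\le h''\le 2\alpha$ (convexity) are not literally stated in Lemma~\ref{lem:h-properties} but follow immediately from the definition of $h$, exactly as the paper itself implicitly uses them.
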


\begin{proof}
Let us denote $q(x_1,\ldots,x_r) = 1 - e^{-p(x_1,\ldots,x_r)}$, $$ p(x_1,\ldots,x_r) = \sum_{i=1}^{r}x_i-h((1+\delta)x_1)-\sum_{i=1}^{r-1}h((1+\delta)x_{i+1}-x_{i}).$$
First we prove that when each coordinate is non-negative, $0 \leq q(x_1,\ldots, x_r) \leq 1$. The second inequality is obvious. In order to prove $1-\exp(-p(x_1,\ldots, x_r))$ is non-negative, we only need to prove that $p(x_1, \ldots, x_r)$ is non-negative.  By Lemma~\ref{lem:h-properties}, $h$ is non-decreasing and $h(x)\le 4\alpha x$ when $x\ge0$, so we have 

\begin{align*}
p(x_1,\ldots, x_r)&=\sum_{i=1}^{r}x_i-h((1+\delta)x_1)-\sum_{i=1}^{r-1}h((1+\delta)x_{i+1}-x_{i})\\
&\ge \sum_{i=1}^{r}x_i-h((1+\delta)x_1)-\sum_{i=1}^{r-1}h((1+\delta)x_{i+1}) \\
&\ge \sum_{i=1}^{r}x_i-4\alpha(1+\delta)\sum_{i=1}^rx_i\ge 0
\end{align*}
using $0 < \alpha \leq \frac{1}{16}$ and $0 < \delta \le 1$ in the last inequality.

Next we note that the second property of $q$ is a direct corollary of Lemma~\ref{lem:h-properties}. More specifically, by Lemma~\ref{lem:h-properties}, $h$ is continuous, differentiable and its first order derivative is continuous. Thus, by definition $q(x_1,\ldots, x_r)$ is also continuous and its first partial derivative  is continuous in each coordinate. 

Finally we prove that $q(x_1,\ldots, x_r)$ is non-decreasing and its first partial derivatives are non-increasing in every coordinate. By Lemma~\ref{lem:h-properties}, the first order derivative of $h$ is at most $4\alpha$. So by definition of $p$, we have 
$$\frac{\partial p}{\partial x_i}\ge 1-4\alpha(1+\delta)$$
For the first-order partial derivative of $q$,
$$\frac{\partial q}{\partial x_i}=\frac{\partial p}{\partial x_i} e^{-p(x_1,\ldots, x_r) }\ge(1-4\alpha(1+\delta)) e^{-p(x_1,\ldots, x_r)}\ge0$$ where we use $0 < \alpha \leq \frac{1}{16}$ and $0 < \delta\le 1$ in the last inequality. Thus $q$ is non-decreasing. By definition of $p$, the second partial derivative $\frac{\partial^2 p}{\partial x_i\partial x_j}$ is non-zero if and only if $x_i$, $x_j$ share the same $h$ and that $h$ is quadratic. Therefore $\frac{\partial^2 p}{\partial x_i\partial x_j}=2\alpha(1+\delta)$ (when $j = i+1$) or 0. So for the second partial derivative of $q$, 
$$\frac{\partial^2 q}{\partial x_i\partial x_j}=\left(\frac{\partial^2 p}{\partial x_i\partial x_j}-\frac{\partial p}{\partial x_i}\cdot\frac{\partial p}{\partial x_j}\right)e^{-p(x_1,\ldots, x_r)}$$
$$\le (2\alpha(1+\delta)-(1-4\alpha(1+\delta))^2) e^{-p(x_1,\ldots, x_r)}\le 0$$
for $0 < \alpha \leq \frac{1}{16} \leq \frac{1}{8(1+\delta)}$ (for $0< \delta\le 1$). 
Thus the first partial derivatives of $q$ are non-increasing.
\end{proof}

These two lemmas along with Lemma~\ref{lem:monotone-combine} imply that $f$ is monotone submodular. Next, we show that in $r$ rounds, the best approximation we can achieve is $1-1/e-\Omega\left(\frac{\log^2 n}{r^3}\right)$.
 
\subsection{Analysis of an \texorpdfstring{$r$}{r}-round algorithm}

The following analogue of Lemmas~\ref{lemma:r-round-solution},~\ref{lemma:quadr-opt}, and Theorem~\ref{thm:main-log-monotone} can be shown.
\begin{lemma}
\label{lem:r-round-solution}
Suppose that an algorithm uses $s-1$ rounds of adaptivity, $s < r$. Then with high probability, the only solution it can find  $(x_1, \ldots, x_r, y_1, \ldots, y_{\ell'})$ satisfies the properties that $(1+\delta)x_{i+1}-x_i\le\epsilon$ for any $s\le i\le r-1$ , $|y_i-\bar{y}|\le\epsilon/2$ for any $1\le i\le \ell'$ and $|x_r-\ell'\bar{y}|>\epsilon$. 
\end{lemma}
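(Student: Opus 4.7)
The plan is to closely follow the template of Lemma~\ref{lemma:r-round-solution} and adapt it to the Section~\ref{sec:poly-lower-bound} construction, whose block sizes shrink geometrically by $1+\delta$ rather than by $2$. I would fix the algorithm's random bits so that it is deterministic (randomization is handled at the end by convex combination) and treat the uniform random partition of $E$ into the prescribed sets $X_i$ and $Y_j$ as the only source of randomness.

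The induction is on the round number $t = 1, \ldots, s-1$, with hypothesis: conditional on no ``bad event'' having occurred in the first $t-1$ rounds, the algorithm's state at the start of round $t$ --- and hence the query set $\cQ_t$ it issues in round $t$ --- is a deterministic function of $X_1, \ldots, X_{t-1}$. For the induction step, I fix any $Q \in \cQ_t$ and condition on $X_1, \ldots, X_{t-1}$; then $E \setminus (X_1 \cup \cdots \cup X_{t-1})$ is partitioned uniformly at random into $X_t, \ldots, X_r, Y_1, \ldots, Y_{\ell'}$. Applying Hoeffding's inequality in the negatively-correlated setting of sampling without replacement (as used in the proof of Lemma~\ref{lemma:r-round-solution}, citing \cite{PS97}), I would conclude that each of the rescaled counts $x_j = |Q \cap X_j|/k$ for $j \ge t$ and $y_i = |Q \cap Y_i|/k$ deviates from its conditional expectation by more than a prescribed tolerance of order $\eps/\ell'$ with probability at most $e^{-\Omega(n^{c'})}$ for some $c' > 0$, thanks to the calibration $k = \Omega(n^{2/3})$, $\ell' = \Theta(n^{1/5})$, and $\eps = n^{-1/10}$.

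From these pointwise estimates I read off each of the three structural properties listed in the lemma. The shrinkage relation $|X_{j+1}| = |X_j|/(1+\delta)$ forces $(1+\delta)\E[x_{j+1}] = \E[x_j]$ exactly, so once the $x_j$'s are concentrated I obtain $(1+\delta) x_{j+1} - x_j \le \eps$ for every $s \le j \le r-1$. Because every $|Y_i| = k$, all $y_i$ share a common expectation $\mu$, and by the triangle inequality $|y_i - \bar{y}| \le 2 \max_{i'} |y_{i'} - \mu|$, giving $|y_i - \bar{y}| \le \eps/2$. For the third estimate on $|x_r - \ell'\bar{y}|$, I use $|X_r| = \ell' k = |Y|$, which gives $\E[x_r] = \E[\ell'\bar{y}]$ exactly, and then combine the deviation of $x_r$ with the aggregated deviation of $\ell'\bar{y} = \sum_i y_i$ --- the latter controlled by summing the per-variable deviations, which is why I take the per-variable slack to be of order $\eps/\ell'$ rather than $\eps$. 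Under all three estimates, every evaluation of $h((1+\delta)x_{j+1} - x_j)$ for $j \ge t$ has argument in the flat region of $h$, and by Lemma~\ref{lemma:g-properties}, $g(y_1, \ldots, y_{\ell'})$ lies in its symmetric region, evaluating to $\min\{1 - \exp(-\sum_j y_j), 1 - \eps\}$. Thus $f(Q) = \min\{f_1(Q), 1 - \eps\}$ reduces to an expression depending only on $Q$ and $X_1, \ldots, X_t$, completing the inductive step.

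A union bound over the $\text{poly}(n)$ queries per round and over all $s \le r$ rounds keeps the overall failure probability at $e^{-\Omega(n^{c'})}$; the returned solution, being determined by the transcript, inherits the same three structural estimates, and allowing the algorithm to randomize simply convex-combines the high-probability bounds. The main technical obstacle is the usual one for this template: verifying that the calibration of $k$, $\ell'$, and $\eps$ is tight enough that all three concentration-based structural estimates hold \emph{simultaneously} for every query across all rounds, and confirming that the negative correlation in the uniform partition process supports Hoeffding-type tail bounds with the stated per-variable slack of $\eps/\ell'$. Once those are in place, the rest is the same combinatorial bookkeeping as in Lemma~\ref{lemma:r-round-solution}.
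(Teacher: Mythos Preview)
Your overall template matches the paper's proof closely: fix randomness by Yao, induct on rounds, define the ``atypical'' event that some query falls outside the symmetric region, and control it by Hoeffding-type concentration on the uniformly random partition of $E\setminus(X_1\cup\cdots\cup X_{t-1})$. The only substantive discrepancy is quantitative, and it does break the argument as written.

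You commit to a uniform per-variable tolerance of order $\eps/\ell'$ for \emph{all} of the rescaled counts $x_j$ ($j\ge t$) and $y_i$. This is fine for the $y_i$, since $|Y_i|=k$ and the Hoeffding exponent $(\eps/\ell')^2 k^2/k = \eps^2 k/\ell'^2 = n^{-1/5}\cdot n^{2/3}/n^{2/5}=n^{1/15}$ is a positive power of $n$. But it fails for the $x_j$: the relevant block sizes $|X_j|$ can be as large as $\Theta(n)$ (indeed $|X_1|=\Theta(n)$), and then the Hoeffding exponent is $(\eps/\ell')^2 k^2/|X_j|=\Theta(n^{-3/5}\cdot n^{4/3}/n)=\Theta(n^{-4/15})$, which is a \emph{negative} power of $n$ --- so you get no concentration at all at that tolerance. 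The claim ``$e^{-\Omega(n^{c'})}$ for some $c'>0$'' simply does not hold for the $x_j$ with slack $\eps/\ell'$.

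The paper sidesteps this by never controlling the individual $x_j,y_i$ to that precision. Instead it applies Hoeffding directly to the three linear combinations that actually appear in the lemma: $(1+\delta)x_{i+1}-x_i$, $y_j-\bar y$, and $x_r-\ell'\bar y$, each at their natural tolerance $\eps$ (or $\eps/2$). For instance
\[
\Pr\bigl[(1+\delta)x_{i+1}-x_i>\eps\bigr]\le \exp\!\left(-\frac{2\eps^2 k^2}{|X_i|+(1+\delta)^2|X_{i+1}|}\right)\le \exp(-n^{1/10}),
\]
and similarly for the other two, with denominators $2|Y|$ and $|Y|+|X_r|$. Your triangle-inequality route can be salvaged by using \emph{different} tolerances---$\Theta(\eps)$ for the $x_j$ and $\Theta(\eps/\ell')$ for the $y_i$---but as stated the uniform $\eps/\ell'$ slack is the gap. (As a side remark, you have correctly interpreted the third condition of the lemma as $|x_r-\ell'\bar y|\le\eps$; the ``$>$'' in the statement is a typo, as the paper's own proof and the use of the lemma in Theorem~\ref{thm:monontone-lower-bound} confirm.)
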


The proof of this lemma will mimic the proof of Lemma~\ref{lemma:r-round-solution}.

\begin{proof}
We first remind the reader that the parameters of our construction are $\delta = \frac{2}{15} \frac{\log n}{r}$ and $\ell' = \Theta(n^{1/5})$. This means that $k = \frac{n}{\ell' (1+\delta)^r} = \Omega(n^{2/3})$.

Continuing with the proof, we can assume by Yao's principle that the algorithm is deterministic.

We prove the following by induction: With high probability, the computation path of the algorithm and the queries it issues in the $s$-th round are determined by $X_1,X_2,\ldots,X_{s-1}$ (and do not depend on the way that $X_{s} \cup X_{s+1} \cup \ldots \cup X_r \cup Y_1 \cup \ldots \cup Y_{\ell'}$ is partitioned into $X_{s}, \ldots, X_r, Y_1,\ldots, Y_{\ell'}$). 


To prove the inductive claim, let $\cE_s$ denote the ``atypical event" that the algorithm issues any query $Q$ in round $s$ such that the answer is {\em not} in the form $\tilde{f} = \min \{\tilde{f}_1, 1-\epsilon\}$,

\begin{equation} 
\label{eq-symmetry}
\tilde{f}_1 = 1-\exp\left(-\sum_{i=1}^{r}x_i-\sum_{i=1}^{\ell'} y_i+\sum_{i=0}^{s-1}h((1+\delta)x_{i+1}-x_i)\right),
\end{equation}
where $x_0 = 0$, $x_i = \frac{1}{k} |Q \cap X_i|$, and $y_j = \frac{1}{k} |Q \cap Y_j|$.
Assuming that $\cE_s$ does not occur, all answers to queries in round $s$ are in this form, and in particular they depend only on $Q$ and the sets $X_1,\ldots,X_s$. (The summation $\sum_{i=s+1}^{r} x_i + \sum_{j=1}^{\ell'} y_j$ is determined by $|Q \setminus (X_1 \cup \ldots \cup X_s)|$.) Assuming that the queries in round $s$ depend only on $X_1,\ldots,X_{s-1}$, and $\cE_s$ does not occur, this means that the entire computation path in round $s$ is determined by $X_1,\ldots,X_s$. By induction, we conclude that if none of $\cE_1,\ldots,\cE_s$ occurs, the computation path in round $s$ is determined by $X_1,\ldots,X_s$.

In the following we focus on the analysis of the event $\cE_s$.
Let $\cQ_s$ denote the queries in round $s$, assuming that none of $\cE_1,\ldots,\cE_{s-1}$ occurred so far. $\cQ_s$ is determined by $X_1,\ldots,X_{s-1}$. Conditioned on $X_1,\ldots,X_{s-1}$, the partitioning of $E \setminus \cup_{i=1}^{s-1} X_i$ is uniformly random.
This implies that for each query $Q$, the set $Q \setminus (X_1 \cup \ldots \cup X_{s-1})$ is partitioned randomly into $Q \cap X_s, \ldots, Q \cap X_r, Q \cap Y_1, \ldots, Q \cap Y_{\ell'}$ and the cardinalities $|Q \cap X_i|, |Q \cap Y_j|$ are concentrated around their expectations. We have $(1+\delta)\E[|Q \cap X_{i+1}|] =  \E[ |Q \cap X_{i}|]$ and $\E[ |Q \cap Y_j| ] = \frac{1}{\ell'} \E[|Q \cap Y|] = \frac{1}{\ell'} \E[|Q \cap X_r|]$ for any $s \leq i < r$ and $1 \leq j \leq \ell'$.
By Hoeffding's bound,
for $x_i = \frac{1}{k} |Q \cap X_i|$, $i \geq s$,  and conditioned on the choice of $X_1,\ldots,X_{s-1}$,
\begin{equation}
\begin{split}
    \prob{(1+\delta) x_{i+1} - x_i  > \epsilon} & \leq \exp\left(-\frac{2\epsilon^2 k^2}{|X_i|+(1+\delta)^2 |X_{i+1}|}\right) \\
    & \leq  \exp\left(- n^{1/10}\right),
\end{split}
\end{equation}
where we use $k = \Omega(n^{2/3})$, $|X_i|+(1+\delta)^2|X_{i+1}| \leq 2n$ and $\epsilon=n^{-1/10}$.
Similarly, we can prove the same bound for $|y_j - \bar{y}| > \epsilon/2$ and $|x_r-\ell'\bar{y}|>\epsilon$,
$$\Pr[|y_j-\bar{y}|>\epsilon/2]\le 2\exp\left(-\frac{-\epsilon^2k^2}{2|Y|}\right)\le \exp(-n^{1/10}),$$
$$\Pr[|x_r-\ell'\bar{y}|>\epsilon]\le 2\exp\left(-\frac{-2\epsilon^2k^2}{|Y|+|X_r|}\right)\le \exp(-n^{1/10}),$$

For $s\le i\le r-1$, $h((1+\delta)x_{i+1}-x_i)=0$ because $(1+\delta)x_{i+1}-x_i\le\epsilon$, $g(y_1, \ldots, y_{\ell'})=\min\{1-\epsilon, 1-\exp\left(\sum_{i=1}^{\ell'}y_i\right)\}$ because $|y_i-\bar{y}|\le\epsilon/2$ for any $1\le i\le \ell'$. If $g(y_1, \ldots, y_{\ell'}) = 1-\epsilon$, then $f_1(x_1, \ldots, x_r, y_1, \ldots, y_\ell')\ge1-\epsilon$. Therefore, by construction the answer will be the form Equation~\ref{eq-symmetry}, which only depends on $Q$ and $X_1,\ldots,X_s$. 

Let us bound the probability of $\cE_r \cap \overline{\cE_1} \cap \overline{\cE_2} \cap \ldots \cap \overline{\cE_{s-1}}$. If we condition on $X_1,\ldots,X_{s-1}$, assuming that none of $\cE_1,\ldots,\cE_{s-1}$ occurred, the query set $\cQ_s$ in round $s$ is fixed. By a union bound over $\cQ_s$, the probability that any of them violates Equation~\ref{eq-symmetry} is $e^{-\Omega(n^{1/10})}$.
Hence, 
$$ \Pr[\cE_s \setminus \cup_{i=1}^{s-1} {\cE_i} \mid X_1,\ldots,X_{s-1}] = \text{poly}(n) \ e^{-\Omega(n^{1/10})}.$$
Now we can average over the choices of $X_1,\ldots,X_{s-1}$ and still obtain
$$ \Pr[\cE_s \setminus ({\cE_1} \cup {\cE_2} \cup \ldots \cup {\cE_{s-1}})] = e^{-\Omega(n^{1/10})}.$$
Therefore, by induction,
\begin{equation*}
\begin{split}
\Pr\left[\bigcup_{i=1}^{s} \cE_i \right] & = \Pr[\cE_1] + \Pr[\cE_2 \setminus \cE_1] + \ldots + \Pr[\cE_s \setminus \cup_{i=1}^{s-1} \cE_i] \\
& = s e^{-\Omega(n^{1/10})} = e^{-\Omega(n^{1/10})}.
\end{split}
\end{equation*}
This implies that with high probability, the computation path in round $s$ is determined by $X_1,\ldots,X_{s-1}$.

Consequently, a solution returned after $s-1$ rounds is determined by $X_1,\ldots,X_{s-1}$ with high probability.
By the same Chernoff-Hoeffding bounds, the solution with high probability satisfies the properties that $(1+\delta)x_{i+1}-x_i\le\epsilon$ for any $s\le i\le r-1$ , $|y_i-\bar{y}|\le\epsilon/2$ for any $1\le i\le \ell'$ and $|x_r-\ell'\bar{y}|>\epsilon$.
\end{proof}

Next we bound the value that an algorithm can possibly achieve in $r-1$ rounds. 

\begin{lemma}
\label{lem:polyround-opt}
When $\omega(\epsilon) = \delta/r = o(1)$, the optimum of the following optimization problem has value $\Omega(\delta^2 / r)$:
\begin{equation}
\begin{split}
    \min_{\mathbf{x}} & \, \sum_{i=0}^{r-1}h((1+\delta)x_{i+1}-x_{i})  : \\
     & \, \sum_{i=1}^{r-1} x_i + x_r \ge \frac13,  \, x_i \ge 0, x_0 = 0 .
\end{split}
\end{equation}
\end{lemma}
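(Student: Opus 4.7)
The plan is to mirror the Cauchy-Schwarz argument from Lemma~\ref{lemma:quadr-opt}, adapted to the thresholded penalty $h$. Writing $a_i := (1+\delta)x_{i+1} - x_i$ for $0 \le i \le r-1$, the objective becomes $\sum_{i=0}^{r-1} h(a_i)$, and a direct telescoping computation using $x_0 = 0$ gives
\[
\sum_{i=0}^{r-1} a_i \;=\; (1+\delta)\sum_{i=1}^{r} x_i \;-\; \sum_{i=1}^{r-1} x_i \;=\; x_r + \delta\!\left(\sum_{i=1}^{r-1} x_i + x_r\right) \;\ge\; \frac{\delta}{3},
\]
using the feasibility constraint $\sum_{i=1}^{r-1} x_i + x_r \ge 1/3$.

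Next I would case-split on whether we ever leave the quadratic regime of $h$. If some $a_i > 2+\epsilon$, then by definition $h(a_i) = 4\alpha(a_i - 1 - \epsilon) > 4\alpha$, and since the hypothesis $\delta/r = o(1)$ forces $\delta^2/r = o(1)$, a single such term already dominates the target $\Omega(\delta^2/r)$. Otherwise every $a_i \le 2+\epsilon$, in which case $h(a_i) = \alpha(a_i - \epsilon)_+^2$, where the positive-part notation uniformly subsumes both the flat piece ($a_i \le \epsilon$, where $h=0$) and the quadratic piece as a pointwise lower bound.

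In this regime, Cauchy-Schwarz applied to the $r$ nonnegative quantities $(a_i - \epsilon)_+$, combined with the trivial estimate $(a_i - \epsilon)_+ \ge a_i - \epsilon$, yields
\[
\sum_{i=0}^{r-1} (a_i - \epsilon)_+^2 \;\ge\; \frac{1}{r}\!\left(\sum_{i=0}^{r-1} (a_i - \epsilon)_+\right)^{\!2} \;\ge\; \frac{1}{r}\!\left(\sum_{i=0}^{r-1} a_i - r\epsilon\right)^{\!2}.
\]
Finally, invoking the hypothesis $\delta/r = \omega(\epsilon)$, equivalently $r\epsilon = o(\delta)$, the inner expression is at least $\delta/3 - o(\delta) \ge \delta/6$ for all sufficiently large instances, so $\sum h(a_i) \ge \alpha \delta^2 / (36 r) = \Omega(\delta^2/r)$, as required.

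The main obstacle is largely bookkeeping rather than substance. The genuine subtlety is the interaction between the $\epsilon$-deadband in $h$ and the lower bound on $\sum a_i$: one must verify that $r\epsilon$ does not eat up the entire $\delta/3$ margin, which is exactly what the asymptotic assumption $\delta/r = \omega(\epsilon)$ is designed to guarantee. A secondary point worth checking is that the reformulation $h(a_i) \ge \alpha(a_i - \epsilon)_+^2$ holds pointwise on the quadratic regime (where it is an equality) and remains a lower bound in the flat regime where $h=0$ and the right-hand side also vanishes; the linear-tail regime is sidestepped entirely by the case split since $4\alpha$ alone already exceeds $\Omega(\delta^2/r)$.
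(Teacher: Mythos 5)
Your argument is correct and essentially reproduces the paper's proof: you use the same telescoping identity to get $\sum_i a_i \ge \delta/3$ (with $a_i = (1+\delta)x_{i+1}-x_i$), the same hypothesis $r\epsilon = o(\delta)$ to absorb the deadband, and your Cauchy--Schwarz step is just Jensen for the square --- the paper instead applies Jensen directly to the convex function $h$, which handles the linear tail automatically and makes your case split unnecessary. One small caveat: your claim that $\delta/r = o(1)$ forces $\delta^2/r = o(1)$ (so that a single linear-tail term of size $4\alpha$ dominates) implicitly assumes $\delta = O(1)$; this holds in the paper's setting, where $\delta \in (0,1)$, so nothing breaks.
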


\begin{proof}

By the convexity and monotonicity of $h$, 

\begin{align*}
& \frac{1}{r} \sum_{i=0}^{r-1}h((1+\delta)x_{i+1}-x_{i}) \\
\, & \ge h\left(\frac1r(1+\delta)x_1+\frac1r\sum_{i=2}^{r}((1+\delta)x_i-x_{i-1})\right) \\
\, & = h\left(\frac1r\delta\sum_{i=1}^r x_i + \frac1r(1-\delta) x_r\right) \\
\, & \ge h\left(\frac{\delta}{3r}\right).
\end{align*}

Recall that when $x\in [\epsilon, 2+\epsilon]$, $h(x) = \alpha(x-\epsilon)^2$ where $\alpha>0$ is a constant. By the assumption in this lemma, $\frac{\delta}{3r} \in [\omega(\epsilon), o(1)]$, so the optimum of this program is at least $$rh\left(\frac{\delta}{3r}\right) = r\alpha \left(\frac{\delta}{3r}-\epsilon \right)^2 = \Omega\left(\frac{\delta^2}{r}\right).$$

\end{proof}

Finally, we prove the main theorem of this section, 

\begin{proof}[Proof of Theorem~\ref{thm:monontone-lower-bound}]
We construct the function $f = \min \{1-\epsilon, f_1\}$ as before. $f$ is non-decreasing, and without loss of generality we can assume $\sum_{i=1}^rx_i+\sum_{i=1}^{\ell'} y_i=1$ for the solution. By Lemma~\ref{lem:r-round-solution},  we know that, with high probability, the answer output by an $(r-1)$-round algorithm will be in the form $(x_1, \ldots, x_r, y_1, \ldots, y_{\ell'})$ such that $|x_r-\ell'\bar{y}|<\epsilon$.  Thus we have 
$$\sum_{i=1}^r x_i\ge \frac12\left(\sum_{i=1}^r x_i+\sum_{i=1}^{\ell'}    y_i-\ell'\epsilon\right)=\frac12(1-\epsilon)\ge1/3.$$

And recall that we define our parameters in the following way, $\epsilon = n^{-1/10}$, $\Omega(\log n) = r = O(n^c)$ and $\delta = \frac{\log n}{r}$. Thus it is easy to verify that $\omega(\epsilon) = \delta/r = o(1)$ if we choose $c>0$ as a small enough constant.  As a result,

\begin{align*}
p(x_1, \ldots, x_r, y_1, \ldots, y_{\ell'}) &
= \sum_{i=1}^{r}x_i-\sum_{i=0}^{r-1}h((1+\delta)x_{i+1}-x_{i}) \\
& \le \sum_{i=1}^{r}x_i - \Omega\left(\frac{\delta^2}{r}\right)
\end{align*}
where $x_0 = 0$ and we use Lemma~\ref{lem:polyround-opt} in the inequality. We can further bound $f(x_1, \ldots, x_r, y_1, \ldots, y_{\ell'})$,
\begin{align*}
f&=1-\exp(-p(x_1,\ldots,x_r)) (1-g(y_1,\ldots,y_{\ell'}))\\
&=1-\exp(-p(x_1,\ldots,x_r)) \exp\left(-\sum_{i=1}^{\ell'}y_i\right) \\
&\le 1-\exp\left(-\sum_{i=1}^{r}x_i-\sum_{i=1}^{\ell'}y_i +\Omega\left(\frac{\delta^2}{r}\right)\right)\\
&= 1-1/e-\Omega\left(\frac{\delta^2}{r}\right)
\end{align*}
where $\delta = O\left(\frac{\log n}{r}\right)$. By Lemma~\ref{lemma:1-1/e}, we know that $OPT\ge 1-\epsilon$. Since $\epsilon = n^{-1/10} = o\left(\frac{\log^2n}{r^3}\right)$,  no algorithm can achieve approximation ratio better than $1-1/e-\Omega\left(\frac{\log^2n}{r^3}\right)$ in $r-1$ rounds with high probability.
\end{proof}
\section{Improved analysis for unconstrained non-monotone maximization}
\label{sec:non-monotone-alg}
In this section we show the following result:

\begin{theorem}
\label{thm:dg-gains}
Let $R\subseteq E$ be a uniformly random subset and $f:2^E \rightarrow \RR_+$ be a non-monotone submodular function with maximum value $\textrm{OPT}$. If $\E[f(R)] \leq (1/2 - \delta)OPT$, then the low-adaptivity continuous double greedy algorithm (Algorithm~\ref{alg:cont-dg}) achieves value at least $(1/2 + \Omega(\delta^2))OPT$. Furthermore, the algorithm achieves this value in  $O(1/\delta^2)$ rounds.
\end{theorem}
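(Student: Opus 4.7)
The plan is to run the low-adaptivity continuous double greedy (Algorithm~\ref{alg:cont-dg}) and sharpen its standard $\tfrac12$-approximation guarantee by extracting an additional $\Omega(\delta^2)\cdot OPT$ from the hypothesis $\E[f(R)] \leq (\tfrac12-\delta)\,OPT$. The driving idea is that the same gradient asymmetries that force $\E[f(R)]$ below $\tfrac12 OPT$ are exactly what double greedy exploits; Cauchy--Schwarz then converts a linear deficit into a quadratic gain.

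First I would recall and refine the standard continuous double greedy analysis. Let $F$ denote the multilinear extension of $f$, and let $x(t),y(t)\in [0,1]^E$ be the trajectories with $x(0)=\mathbf{0}$, $y(0)=\mathbf{1}$, meeting at $z(1):=x(1)=y(1)$. Writing $a_i(t)=\partial_i F(x(t))$ and $b_i(t)=-\partial_i F(y(t))$, the known pointwise inequality takes the form
\[
\frac{d}{dt}\bigl[F(x(t))+F(y(t))\bigr] \;\geq\; -2\,\frac{d}{dt} F(O(t)) \;+\; S(t),
\]
where $O(t)=(x(t)\vee \mathbf{1}_{OPT})\wedge y(t)$ interpolates between $\mathbf{1}_{OPT}$ and $z(1)$, and $S(t)\geq 0$ is a slack term that measures the ``asymmetry'' between $a_i(t)$ and $b_i(t)$ (vanishing in the tight case $a_i=b_i$ for all $i$). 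Integrating from $0$ to $1$ and using $F(O(0))\geq OPT$, $F(O(1))=F(z(1))$ yields
\[
2F(z(1)) \;\geq\; OPT + \int_0^1 S(t)\,dt,
\]
so it suffices to prove $\int_0^1 S(t)\,dt \geq \Omega(\delta^2)\cdot OPT$.

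The technical heart of the argument is to build a companion bound linking $\tfrac12 OPT-\E[f(R)]$ to the same slack. Write $\E[f(R)] = F(\tfrac12 \mathbf{1})$ and decompose $OPT - 2F(\tfrac12\mathbf{1})$ along the straight-line paths from $\tfrac12 \mathbf{1}$ toward $\mathbf{1}_{OPT}$ and $\mathbf{1}_{E\setminus OPT}$. Using the fundamental theorem of calculus and submodularity (non-increasing partial derivatives), I would show that the deficit can be bounded by a one-dimensional integral along the double greedy trajectory,
\[
\delta\cdot OPT \;\leq\; \int_0^1 g(t)\,dt,
\]
for an asymmetry functional $g(t)$ (morally $\sum_i |a_i(t)-b_i(t)|$ weighted appropriately) satisfying a pointwise comparison $g(t)^2 \leq C\cdot OPT \cdot S(t)$. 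The Cauchy--Schwarz inequality then gives
\[
\delta^2\, OPT^2 \;\leq\; \Bigl(\int_0^1 g(t)\,dt\Bigr)^{\!2} \;\leq\; \int_0^1 g(t)^2\,dt \;\leq\; C\cdot OPT \cdot \int_0^1 S(t)\,dt,
\]
hence $\int_0^1 S(t)\,dt \geq \Omega(\delta^2)\cdot OPT$, and combined with the previous display we conclude $F(z(1))\geq (\tfrac12+\Omega(\delta^2))\,OPT$.

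The main obstacle I anticipate is designing the functional $g(t)$ carefully enough that it simultaneously dominates the $\tfrac12 OPT - \E[f(R)]$ deficit and is pointwise Cauchy--Schwarz-comparable to the double greedy slack $S(t)$: both quantities are naturally written in terms of gradient differences, but aligning the two expressions so that the inequality $g(t)^2 \leq C\cdot OPT \cdot S(t)$ holds along the actual trajectory (rather than only at a fixed point like $\tfrac12\mathbf{1}$) requires a careful path coupling. For the adaptive complexity, we discretize the continuous trajectory into $T$ rounds; since the error incurred per round of discretization of continuous double greedy is $O(1/T)$, choosing $T = \Theta(1/\delta^2)$ keeps the error $o(\delta^2)\cdot OPT$ and preserves the $\Omega(\delta^2)\cdot OPT$ gain, giving the claimed $O(1/\delta^2)$-round algorithm.
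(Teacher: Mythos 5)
Your overall architecture is the right one, and in fact it is the same strategy the paper uses: the double greedy gain over $OPT/2$ is a weighted $L_2$-type quantity of the gradient asymmetries $\nabla_i f(\vx)_+ + \nabla_i f(\vy)_-$ (the completing-the-square slack, your $S(t)$), the random-set deficit is controlled by an $L_1$-type quantity of the same asymmetries, and Cauchy--Schwarz converts the linear deficit into a quadratic gain; your pointwise comparison $g(t)^2 \leq C\cdot OPT\cdot S(t)$ is also salvageable, since $\sum_i \left(\nabla_i f(\vx)_+ - \nabla_i f(\vy)_-\right) \leq \iprod{\nabla f(\vx) - \nabla f(\vy)}{\mathds{1}} \leq 2\,OPT$ after the initial line search (the paper instead bounds the time-integral of this quantity by $(4+\gamma)DG$). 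The discretization/round count via $\gamma = \Theta(\delta^2)$ is likewise consistent with the paper.

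However, the step you yourself flag as the main obstacle is a genuine gap, and it is exactly the one idea the argument cannot do without: you never establish $\delta\cdot OPT \lesssim \int_0^1 g(t)\,dt$ with $g$ evaluated \emph{along the double greedy trajectory}. Your proposed route --- decomposing $OPT - 2F(\tfrac12\mathds{1})$ along straight-line paths from $\tfrac12\mathds{1}$ toward $\mathds{1}_{OPT}$ and $\mathds{1}_{E\setminus OPT}$ --- produces gradients at points on those segments, which are unrelated to the algorithm's points $\vx(t),\vy(t)$ (and depend on the unknown optimum), so there is no submodularity-based comparison tying them to $S(t)$; without that coupling the chain $\delta^2 OPT^2 \leq (\int g)^2 \leq \int g^2 \leq C\cdot OPT \int S$ has no starting point. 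The paper's resolution is a midpoint-trajectory identity: the point $\tfrac{\vx(t)+\vy(t)}{2}$ starts at $\tfrac12\mathds{1}$, whose value is exactly $\E[f(R)]$, and ends at the returned solution, so $DG - \E[f(R)] = \int_0^1 \iprod{\nabla f\left(\tfrac{\vx+\vy}{2}\right)}{\tfrac{\dot\vx+\dot\vy}{2}}\,\dt$; since $\vx \leq \tfrac{\vx+\vy}{2} \leq \vy$, submodularity sandwiches each coordinate of $\nabla f\left(\tfrac{\vx+\vy}{2}\right)$ between $\nabla_i f(\vy)$ and $\nabla_i f(\vx)$, and with the update direction $\tfrac{\dot x_i + \dot y_i}{2} = \tfrac12\,\tfrac{\nabla_i f(\vx)_+ + \nabla_i f(\vy)_-}{\nabla_i f(\vx)_+ - \nabla_i f(\vy)_-}$ each term is at most $\tfrac12\left\lvert \nabla_i f(\vx)_+ + \nabla_i f(\vy)_- \right\rvert$, which is precisely the $L_1$ integrand needed. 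Combined with $DG \geq (1-\gamma/2)OPT/2$ this yields your desired deficit bound; you would need this identity (or an equivalent coupling of $\E[f(R)]$ to the trajectory) to complete the proof.
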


As previously mentioned, the current state-of-the-art algorithm for unconstrained non-monotone maximization takes ${O}(1/\eps)$ rounds to get a $(1/2-\eps)$-approximation~\cite{ENV19, CFK19}. While we do not improve this result, we show that for an instance with given $\E[f(R)] = (1/2-\delta) OPT$,
the number of rounds doesn't blow up arbitrarily as we approach the factor of $1/2$ (in contrast to the monotone cardinality-constrained problem). It takes $O(\min\{\frac{1}{\delta^2}, \frac{1}{\epsilon} \})$ rounds to achieve a $(1/2-\epsilon)$-approximation, and in fact a strictly {\em better} than $1/2$ approximation in $O(1/\delta^2)$ rounds.

The main intuition for this result is that in some sense, the worst possible sequence of steps in double greedy returns exactly the $\frac{1}{2}\mathds{1}$ point. In this case, the analysis of existing algorithms show no gain over $OPT/2$~\cite{ENV19}. However, this is also exactly the value of a random set, which we can evaluate in just one adaptive round.

As a counterpart to our positive result, we show that there are instances where the random solution has value $(1/2 - \Theta(\delta)) OPT$ and it is difficult for any polynomial round algorithm to get better than $(1/2 + \delta) OPT$.

\begin{lemma}
Let $R\subseteq E$ be a uniformly random subset For any $\delta > 0$, there exist submodular functions $f$ such that $\E[f(R)] = (1/2-\Theta(\delta)) OPT$ and no algorithm can achieve better than a $1/2 + O(\delta)$-approximation in a polynomial number of rounds. 
\end{lemma}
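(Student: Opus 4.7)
The plan is to invoke the symmetry-gap hardness machinery of Vondr\'ak~\cite{V13}, applied to a carefully chosen non-monotone submodular template with parameters matching those in the lemma. I would begin by constructing a small base function $f_0: 2^{E_0}\to \RR_{\ge0}$ on a ground set $E_0$ of size $\Theta(1/\delta)$, invariant under a transitive permutation group $G\le \mathrm{Sym}(E_0)$, whose multilinear extension $F_0$ satisfies
\begin{equation*}
F_0\bigl(\tfrac12,\ldots,\tfrac12\bigr) = \bigl(\tfrac12 - \Theta(\delta)\bigr)\max f_0
\quad\text{and}\quad
\max_{G\text{-sym } x} F_0(x) \le \bigl(\tfrac12 + O(\delta)\bigr)\max f_0 .
\end{equation*}

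Next, I would apply the standard V13 blow-up: for large $M$ and a uniformly random bijection $\sigma$ of the clusters $E_0\times [M]$, the lift $f_\sigma$ is a non-monotone submodular function on $n = |E_0|\cdot M$ elements. Chernoff-Hoeffding concentration of the same flavor as in Lemma~\ref{lemma:r-round-solution} yields $\E[f_\sigma(R)] = F_0(\tfrac12,\ldots,\tfrac12) + o(1) = (\tfrac12-\Theta(\delta))\, OPT$ for a uniformly random $R\subseteq E$, establishing the first claim. For the hardness claim, the V13 indistinguishability argument---structurally identical to the ``symmetric-region'' induction used in our Lemma~\ref{lemma:r-round-solution}---shows that, with probability $1-o(1)$ over $\sigma$, every polynomial-query algorithm produces a set whose value is at most $\max_{G\text{-sym } x} F_0(x) + o(1) \le (\tfrac12 + O(\delta))\, OPT$. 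Since any $r$-round adaptive algorithm with $r\le \mathrm{poly}(n)$ issues $\mathrm{poly}(n)$ queries in total, this bound transfers directly to the adaptive model.

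The main obstacle is constructing the base $f_0$ so that the $G$-symmetric maximum of $F_0$ exceeds $F_0(\tfrac12,\ldots,\tfrac12)$ by a \emph{linear} (rather than quadratic) amount in $\delta$. Naive smooth perturbations of the FMV symmetric-cut template yield only quadratic-in-$\delta$ gaps, since that template has a critical point at the uniform fractional solution. To obtain a linear gap, $E_0$ must grow with $1/\delta$ and the $G$-invariant optimum of $F_0$ must be attained at a fractional point a constant distance from the uniform one. One plausible route is to combine $m=\Theta(1/\delta)$ directed-cut sub-templates with asymmetric weights so that the aggregate $G$-invariant optimum is shifted by $\Theta(1)$ in each coordinate; this yields a $\Theta(1/m) = \Theta(\delta)$ separation between the uniform-point value and the $G$-symmetric maximum, while preserving submodularity and enabling a clean application of the V13 hardness theorem to the blown-up instance.
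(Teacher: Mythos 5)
The heart of the lemma is the existence of the instance, and that is exactly the step you do not supply. Your plan reduces everything to a base function $f_0$ on $\Theta(1/\delta)$ elements, invariant under a transitive group $G$, with uniform-point value $(1/2-\Theta(\delta))\max f_0$ and $G$-symmetric maximum at most $(1/2+O(\delta))\max f_0$; you then flag this construction as "the main obstacle" and offer only a speculative route (aggregating $\Theta(1/\delta)$ directed-cut sub-templates). But directedness is precisely what a nontrivial invariance group tends to wash out, so it is not clear that route yields a linear-in-$\delta$ deficit at all. For instance, on two elements every nonnegative swap-invariant submodular $f_0$ has $F_0(1/2,1/2)\geq \mathrm{OPT}/2$, so no deficit is possible; and the natural four-element fix --- an orbit pair of weight-$\delta$ directed edges invariant under the double swap, on top of a unit undirected edge --- contributes $\delta/2$ at the uniform point against only $\delta$ at the optimum, leaving the ratio at exactly $1/2$. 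Without an explicit $f_0$ (with verified nonnegativity, submodularity, invariance under a group compatible with the blow-up, and the claimed value profile) the argument does not go through; the parts you do carry out (concentration of the random set at the uniform point, the indistinguishability induction, and the observation that polynomially many adaptive rounds issue polynomially many queries) are the routine ingredients.

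The paper avoids this construction problem entirely and is far lighter. It takes the known $1/2$-hardness instance $f_{1/2}$ of \cite{FMV11} as a black box --- so that $\E[f_{1/2}(R)]=(1/2-o(1))\,\mathrm{OPT}$ and no polynomial-query algorithm finds a set of $f_{1/2}$-value above $(1/2+o(1))\,\mathrm{OPT}$ --- and simply adds a small directed-cut perturbation $f_\delta(S)=\delta\,x_1(1-x_2)\,\mathrm{OPT}$ on a bisection $(X_1,X_2)$, setting $f=f_{1/2}+f_\delta$. The perturbation raises the optimum of $f$ to $(1+\delta)\,\mathrm{OPT}$, while a uniformly random set gains only $\delta\,\mathrm{OPT}/4$ from it and any output set gains at most $\delta\,\mathrm{OPT}$ (a trivial pointwise bound). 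Dividing by the boosted optimum gives $\E[f(R)]/\max f=(1/2-\Theta(\delta))$ and an achievability cap of $(1/2+O(\delta))$ --- no new symmetry-gap base, no fresh concentration or indistinguishability analysis, and no dependence of the ground-set size on $\delta$. In short, the gap in your proposal is that the one genuinely new object it needs is left unconstructed, whereas the paper shows that no new object is needed: a $\delta$-sized asymmetric perturbation of the existing hard instance already realizes the lemma.
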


\begin{proof}
One can construct such instances quite easily using the hardness instances of Vondrak et al.~\cite{FMV11}. Let $f_{1/2}$ be a $1/2$-hardness instance for non-monotone submodular optimization in the value query model, as defined in Section 4.2 of ~\cite{FMV11}. The main properties of $f_{1/2}$ is that (1) a uniformly random set $R$ has value $\E[f(R)] = \left(\frac{1}{2}-o(1)\right) OPT$, and (2) no algorithm using polynomially many queries to $f$ can find a solution better than $(1/2 + o(1)) OPT$. 

To construct our hardness instance, we combine this hard instance with a directed cut instance where a random set has value $\frac14 OPT$.
We first partition the ground set $E$ randomly into two halves $X_1$ and $X_2$, and let $f_{\delta}(S) = \delta x_1 (1-x_2) OPT$, where $x_i = |S\cap X_i| / |X_i|$. Our hardness instance is then simply 
$$f(S) = f_{1/2}(S) + f_{\delta}(S).$$
For a uniformly random subset $R$, we have $\E[f(R)] = (1/2 + \delta/4) OPT$. On the other hand, the optimal solution to $f$ has value $(1 + \delta) OPT$. However, the properties of $f_{1/2}$ guarantee that no algorithm using polynomially many queries can obtain a value better than $(1/2 + \delta + o(1)) OPT$. Thus, relative to the optimum of $f$, the random set obtains a $\frac{1/2 + \delta/4}{1+\delta} = (1/2 - \Theta(\delta))$-approximation, and no polynomial-round adaptive algorithm can achieve better than a $(1/2 + O(\delta))$-approximation.
\end{proof}

\subsection{Continuous double greedy}
As a preliminary, we review the low-adaptivity continuous double greedy (Algorithm~\ref{alg:cont-dg}) of Ene et al.~\cite{ENV19} with some modifications. The algorithm assumes access to the multilinear extension of $f$ (this assumption can be removed through standard sampling techniques). We use the notation $\mathbf{v}_+$ and $\mathbf{v}_-$ to denote coordinate-wise $\max(\mathbf{v}, \mathbf{0})$ and $\min(\mathbf{v}, \mathbf{0})$ respectively. Other arithmetic operations in Algorithm~\ref{alg:cont-dg} are also assumed to be done coordinate-wise when applicable. Our presentation differs from that of Ene et al. in two ways. (1) we use a different update rule in the line search (simplifying and removing a logarithmic factor from the round complexity), and (2) we simplify the special cases for when $\nabla_i f(\vx) \leq 0$ or $\nabla_i f(\vx) \geq 0$ with the $\max$ and $\min$ operators. The update rule is derived from the double greedy algorithm of Chen et al.~\cite{CFK19}. 

\paragraph{Discretization and implementation details}
When the line searches are inexact, the analysis of Ene et al. can be applied to show that the errors incurred are at most $O(\gamma OPT)$ in total. Further discretization error is incurred by the termination condition of the while loop. This causes $\vx$ and $\vy$ in the discretized version to not meet exactly at the end of the algorithm. The error for this is at most $\iprod{\nabla f(\vy) - \nabla f(\vx)}{\mathds{1}} \leq \gamma OPT$. Thus at the cost of an $O(\gamma OPT)$ additive error, we can assume in our analysis that the line searches are exact and the points $\vx$ and $\vy$ meet exactly. The algorithm also requires the exact value of $OPT$. This estimate can be obtained to accuracy $(1+o(\gamma))OPT$ via $\log(1/\gamma)$ parallel runs of the algorithm. First, a constant factor approximation of $OPT$ is obtained by sampling a random set. Then, this approximation is multiplied by successive powers of $(1+o(\gamma))$ and the algorithm is run with all guesses in parallel. More details can be found in Ene et al.~\cite{ENV19}. This incurs error at most $o(\gamma OPT)$, so we assume the algorithm knows $OPT$ exactly. 

\begin{algorithm}[ht]
  \DontPrintSemicolon
  Line search for the smallest $\eta_0 \in [0, 1/2)$ such that $\iprod{\nabla f(\eta_0 \mathds{1}) - \nabla f((1 -\eta_0) \mathds{1})}{\mathds{1}} \leq 2 OPT$ \;
  \If{$\eta_0$ does not exist}{
    \KwRet{$f(\frac{1}{2}\mathds{1})$}
  }
  $\vx = \eta_0 \mathds{1}, \vy = (1-\eta_0) \mathds{1}$ \;
  \While{ $\iprod{\nabla f(\vy) - \nabla f(\vx)}{\mathds{1}} \geq \gamma OPT$ }{
    $\mathbf{\Delta}_i^{(x)} = \frac{\nabla_i f(\vx)_+}{\nabla_i f(\vx)_+ - \nabla_i f(\vy)_-}$, $\forall i \in E$ \;
    $\mathbf{\Delta}_i^{(y)} = \frac{\nabla_i f(\vy)_-}{\nabla_i f(\vx)_+ - \nabla_i f(\vy)_-}$, $\forall i \in E$ \;
    
    Line search for the smallest $\eta > 0$ such that  
    \begin{equation*}
    \begin{split}
    \iprod{\nabla f(\vx + \eta \mathbf{\Delta}^{(x)})}{\mathbf{\Delta}^{(x)}} + \iprod{\nabla f(\vy + \eta \mathbf{\Delta}^{(y)})}{\mathbf{\Delta}^{(y)}} \\ 
    \leq \iprod{\nabla f(\vx)}{\mathbf{\Delta}^{(x)}} + \iprod{\nabla f(\vy)}{\mathbf{\Delta}^{(y)}} - \gamma OPT
    \end{split}
    \end{equation*}
    
    $\vx = \vx + \eta \mathbf{\Delta}^{(x)}, \vy = \vy + \eta \mathbf{\Delta}^{(y)}$ \;
  }
  \KwRet{ $\max (f(\vx), f(\vy)) $}  
  \caption{Low adaptivity continuous greedy algorithm. Input is a submodular function $f$ and a parameter $\gamma$. \label{alg:cont-dg}}
\end{algorithm}

Let $\eta_i$ be the step size returned by the line search on iteration $i$ of the while loop and $t_i = \sum_{j=0}^{i-1} \eta_i$ with $t_0 = 0$. Let $\eta_0$ be the result of the line search before the while loop. Through analysis similar to Chen et al.~\cite{CFK19} and Ene et al.~\cite{ENV19}, one can show that Algorithm~\ref{alg:cont-dg} has the following properties:
\begin{enumerate}
    \item The algorithm terminates with a solution $DG$ in $2/\gamma+1$ rounds.
    \item The returned solution $DG$ satisfies \label{eq:dg-bound}
    \begin{equation*}
        DG \geq (1-\gamma/2) \frac{OPT}{2} + \frac{1}{4} \sum_{s>0} \eta_{s} \sum_{i\in E} \frac{ (\nabla_i f(\vx(t_s))_+ + \nabla_i f(\vy(t_s))_-)^2 }{\nabla_i f(\vx(t_s))_+ - \nabla_i f(\vy(t_s))_-}.
    \end{equation*}
\end{enumerate}

For the analysis to be as self-contained as possible, we give proofs of both properties below.

\begin{lemma}
\label{lem:round-analysis}
Algorithm~\ref{alg:cont-dg} terminates in $2/\gamma+1$ rounds.
\end{lemma}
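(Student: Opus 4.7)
The plan is to bound the number of while-loop iterations by $2/\gamma$; combined with the one round used by the pre-loop line search, this yields the claimed $2/\gamma+1$ round complexity. The key potential to track is the gradient gap
\[
D(\vx,\vy) \;:=\; \iprod{\nabla f(\vx)-\nabla f(\vy)}{\mathds{1}},
\]
which is nonnegative throughout the algorithm. Indeed, by construction $\mathbf{\Delta}^{(x)}_i - \mathbf{\Delta}^{(y)}_i \equiv 1$ for every coordinate, so each update sends $\vy-\vx$ to $(\vy-\vx)-\eta_s\mathds{1}$, preserving the coordinatewise invariant $\vx\leq\vy$ so long as the step size respects the natural cap $\eta_s\leq\min_i(y_{s,i}-x_{s,i})$; submodularity then forces $D(\vx,\vy)\geq 0$. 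Moreover, the pre-loop line search is tuned precisely so that $D(\vx_0,\vy_0)\leq 2\,OPT$.

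The central claim is that each iteration of the while loop decreases $D$ by at least $\gamma\,OPT$. Writing $L(\eta):=\iprod{\nabla f(\vx_s+\eta\mathbf{\Delta}^{(x)}_s)}{\mathbf{\Delta}^{(x)}_s}+\iprod{\nabla f(\vy_s+\eta\mathbf{\Delta}^{(y)}_s)}{\mathbf{\Delta}^{(y)}_s}$, the line search guarantees $L(0)-L(\eta_s)\geq \gamma\,OPT$. I would bound the left-hand side using that $\mathbf{\Delta}^{(x)}_s\in[0,1]^n$ is paired with the coordinatewise nonnegative quantity $\nabla f(\vx_s)-\nabla f(\vx_{s+1})$ (submodularity), while $\mathbf{\Delta}^{(y)}_s\in[-1,0]^n$ is paired with the coordinatewise nonpositive quantity $\nabla f(\vy_s)-\nabla f(\vy_{s+1})$; flipping signs on the latter pair and then replacing $\mathbf{\Delta}^{(x)}_s$ and $-\mathbf{\Delta}^{(y)}_s$ by $\mathds{1}$ only enlarges each term, giving
\begin{align*}
\gamma\,OPT \;\leq\; L(0)-L(\eta_s)
&= \iprod{\nabla f(\vx_s)-\nabla f(\vx_{s+1})}{\mathbf{\Delta}^{(x)}_s}+\iprod{\nabla f(\vy_s)-\nabla f(\vy_{s+1})}{\mathbf{\Delta}^{(y)}_s}\\
&\leq \iprod{\nabla f(\vx_s)-\nabla f(\vx_{s+1})}{\mathds{1}}+\iprod{\nabla f(\vy_{s+1})-\nabla f(\vy_s)}{\mathds{1}}\\
&= D(\vx_s,\vy_s)-D(\vx_{s+1},\vy_{s+1}).
\end{align*}

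Telescoping over $r$ iterations yields $r\gamma\,OPT\leq D(\vx_0,\vy_0)-D(\vx_r,\vy_r)\leq 2\,OPT$, so $r\leq 2/\gamma$; adding the single pre-loop round then closes the argument. The main obstacle is the sign bookkeeping in the display above---specifically, flipping the $\vy$-term so that both pieces have nonnegative integrands before replacing the $\mathbf{\Delta}$'s by $\mathds{1}$---after which the two $\mathds{1}$-inner-products telescope into $D(\vx_s,\vy_s)-D(\vx_{s+1},\vy_{s+1})$ and everything collapses.
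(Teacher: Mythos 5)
Your proof is correct and follows essentially the same route as the paper's: the same potential $\iprod{\nabla f(\vx)-\nabla f(\vy)}{\mathds{1}}$, the same per-iteration decrease of $\gamma\,OPT$ derived from the line-search guarantee plus submodularity (the paper rearranges using $\mathbf{\Delta}^{(x)}-\mathbf{\Delta}^{(y)}=\mathds{1}$, while you bound the two difference terms directly via $\mathbf{\Delta}^{(x)}\in[0,1]^n$ and $\mathbf{\Delta}^{(y)}\in[-1,0]^n$), and the same telescoping against the initial bound of $2\,OPT$. One small remark: you do not need the invariant $\vx\le\vy$ (nor a step-size cap) to ensure the potential is nonnegative at termination --- the while-loop condition already guarantees the potential is at least $\gamma\,OPT$ at the start of every executed iteration, which suffices for the iteration count.
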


\begin{proof}
Let $\Phi_n =\iprod{\nabla f(\vx(t_n)) - \nabla f(\vy(t_n))}{\mathds{1}}$. We show that each iteration of the while loop decreases $\Phi_n$ by at least $\gamma OPT$.


Since $\mathbf{\Delta}^{(x)} - \mathbf{\Delta}^{(y)} = \mathds{1}$, the line search condition 
can be rewritten as
\begin{equation*}
\begin{split}
& \iprod{\nabla f(\vx + \eta \mathbf{\Delta}^{(x)}) - \nabla f(\vy + \eta \mathbf{\Delta}^{(y)})}{\mathds{1}} \\
& + \iprod{\nabla f(\vy + \eta \mathbf{\Delta}^{(y)})}{\mathbf{\Delta}^{(x)}} + \iprod{\nabla f(\vx + \eta \mathbf{\Delta}^{(x)})}{\mathbf{\Delta}^{(y)}} \leq \\
& \iprod{\nabla f(\vx) - \nabla f(\vy)}{\mathds{1}} + \iprod{\nabla f(\vy)}{\mathbf{\Delta}^{(x)}} + \iprod{\nabla f(\vx)}{\mathbf{\Delta}^{(y)}} - \gamma OPT.
\end{split}
\end{equation*}

By submodularity, $\nabla f(\vx) \geq \nabla f(\vx + \eta \mathbf{\Delta}^{(x)})$. Since $\mathbf{\Delta}^{(y)} \leq 0$, this implies that $\iprod{\nabla f(\vx + \eta \mathbf{\Delta}^{(x)})}{\mathbf{\Delta}^{(y)}} \geq \iprod{\nabla f(\vx)}{\mathbf{\Delta}^{(y)}}$. 
Similarly, $\iprod{\nabla f(\vy + \eta \mathbf{\Delta}^{(y)})}{\mathbf{\Delta}^{(x)}} \geq \iprod{\nabla f(\vy)}{\mathbf{\Delta}^{(x)}}$. Thus $\Phi_{n+1} \leq \Phi_n-\gamma OPT$.

Since $\Phi_0 \leq 2OPT$, the algorithm terminates in at most $2/\gamma$ iterations of the while loop (and one iteration of the initial line search).
\end{proof}

Now we prove the second property of Algorithm~\ref{alg:cont-dg}. Let $\vx^*$ be the optimal solution and let $\mathbf{p}(t) = \mathrm{Proj}_{[\vx(t), \vy(t)]}\vx^*$ be the projection of $x^*$ into the box defined by $\vx$ and $\vy$.

\begin{lemma}
\label{lem:apx-usm}
The returned solution $DG$ satisfies $$DG \geq (1-\gamma/2)\frac{OPT}{2} + \frac{1}{4} \sum_s \eta_{s} \sum_{i\in E} \frac{ (\nabla_i f(\vx(t_s))_+ + \nabla_i f(\vy(t_s))_-)^2 }{\nabla_i f(\vx(t_s))_+ - \nabla_i f(\vy(t_s))_-}.$$
\end{lemma}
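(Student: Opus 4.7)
I follow the classical Buchbinder--Feldman--Naor--Schwartz double greedy analysis, adapted to the continuous setting of Ene--Nguyen--Vondr\'ak and Chen--Feldman--Karbasi, and refined to isolate a quantitative excess term. Let $\vx^{*} \in \{0,1\}^{E}$ denote the indicator of an optimum set, and define the potential
\[
\Psi(t) \;=\; f(\vx(t)) + f(\vy(t)) + 2\, f(\mathbf{p}(t)),
\]
where $\mathbf{p}(t) = \mathrm{Proj}_{[\vx(t),\vy(t)]} \vx^{*}$ as in the setup preceding the lemma. The classical analysis shows $\Psi$ is non-decreasing along the run; I want to sharpen this to a per-iteration increase of at least $\eta_{s} \sum_{i}(a_{i} - b_{i})^{2}/(a_{i} + b_{i})$, writing $a_{i} = \nabla_{i} f(\vx(t_{s}))_{+}$ and $b_{i} = -\nabla_{i} f(\vy(t_{s}))_{-}$.

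\textbf{Per-step inequality.} In iteration $s$, the update directions are $\Delta_{i}^{(x)} = a_{i}/(a_{i}+b_{i})$ and $\Delta_{i}^{(y)} = -b_{i}/(a_{i}+b_{i})$. For coordinates with $a_{i}, b_{i} > 0$, the contribution to $\tfrac{d}{dt}[f(\vx) + f(\vy)]$ is $(a_{i}^{2} + b_{i}^{2})/(a_{i}+b_{i})$. The contribution to $2\,\tfrac{d}{dt} f(\mathbf{p})$ is analysed by a three-way case split on where $x_{i}^{*}$ lies relative to $[x_{i}(t_{s}), y_{i}(t_{s})]$: if $x_{i}^{*}$ is inside the interval, then $dp_{i}/dt = 0$; if $x_{i}^{*} < x_{i}(t_{s})$, then $p_{i} = x_{i}$ with $dp_{i}/dt = a_{i}/(a_{i}+b_{i})$, and using the submodular sandwich $\nabla_{i} f(\vy) \leq \nabla_{i} f(\mathbf{p}) \leq \nabla_{i} f(\vx)$ (since $\vx \leq \mathbf{p} \leq \vy$) one obtains $\nabla_{i} f(\mathbf{p})\,dp_{i}/dt \geq -a_{i}b_{i}/(a_{i}+b_{i})$; the case $x_{i}^{*} > y_{i}(t_{s})$ is symmetric. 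Combining, the per-coordinate contribution to $\tfrac{d\Psi}{dt}$ is at least $(a_{i}^{2} + b_{i}^{2} - 2 a_{i} b_{i})/(a_{i}+b_{i}) = (a_{i}-b_{i})^{2}/(a_{i}+b_{i})$. The degenerate cases $a_{i} = 0$ or $b_{i} = 0$ reduce similarly, since the corresponding $\Delta_{i}^{(\cdot)}$ is zero and a sign constraint on $\nabla_{i} f(\mathbf{p})$ from submodularity handles the projection term.

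\textbf{Integration and endpoints.} Integrating over the while-loop iterations,
\[
\Psi(T) - \Psi(0^{+}) \;\geq\; \sum_{s > 0} \eta_{s} \sum_{i \in E} \frac{(a_{i}-b_{i})^{2}}{a_{i} + b_{i}}.
\]
At termination, the stopping condition $\iprod{\nabla f(\vy) - \nabla f(\vx)}{\mathds{1}} < \gamma\,\mathrm{OPT}$ together with the discretization discussion preceding the algorithm gives $\vx(T) = \vy(T) = \mathbf{p}(T)$ up to an additive function-value error $O(\gamma\,\mathrm{OPT})$, so $\Psi(T) \leq 4\,f(\mathbf{p}(T)) + O(\gamma\,\mathrm{OPT})$. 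For $\Psi(0^{+})$ I would apply the same $\geq 0$ version of the per-step inequality to a hypothetical DG prelude running from $(\mathbf{0}, \mathds{1})$; the prelude's endpoint is not literally $(\eta_{0} \mathds{1}, (1-\eta_{0}) \mathds{1})$, but combined with the submodularity-based monotonicity of $\eta \mapsto f(\eta\mathds{1}) + f((1-\eta)\mathds{1})$ on $[0,1/2]$ and an adaptation of the argument from Ene--Nguyen--Vondr\'ak, one obtains $\Psi(0^{+}) \geq f(\mathbf{0}) + f(\mathds{1}) + 2\,f(\vx^{*}) \geq 2\,\mathrm{OPT}$ via non-negativity of $f$. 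Rearranging yields
\[
4\,f(\mathbf{p}(T)) \;\geq\; 2\,\mathrm{OPT} - O(\gamma\,\mathrm{OPT}) + \sum_{s > 0} \eta_{s} \sum_{i} \frac{(a_{i}-b_{i})^{2}}{a_{i}+b_{i}},
\]
so $DG = \max(f(\vx(T)), f(\vy(T))) \geq f(\mathbf{p}(T)) \geq (1-\gamma/2)\,\mathrm{OPT}/2 + \tfrac{1}{4}\,\sum$, absorbing the $O(\gamma\,\mathrm{OPT})$ into the $(1 - \gamma/2)$ factor.

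\textbf{Main obstacle.} The delicate step is the per-coordinate inequality with the exact excess $(a_{i} - b_{i})^{2}/(a_{i}+b_{i})$. Classical DG yields only the $\geq 0$ version via a single submodularity step; extracting the quadratic remainder requires the specific update rule of Algorithm~\ref{alg:cont-dg} (so the algebra collapses into $a_{i}^{2} + b_{i}^{2} - 2 a_{i} b_{i}$) combined with the three-way projection case split. A secondary subtlety is the initialization bound for $\Psi(0^{+})$, since the algorithm starts at $(\eta_{0}\mathds{1}, (1-\eta_{0})\mathds{1})$ rather than $(\mathbf{0}, \mathds{1})$; the role of the initial line search is precisely to ensure that a prelude-style argument produces a valid $2\,\mathrm{OPT}$ lower bound without blowing up the round complexity.
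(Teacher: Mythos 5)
Your overall route is the same as the paper's: you track the potential $\tfrac12(f(\vx)+f(\vy))+f(\mathbf{p})$ (you use twice it), extract the quadratic gain by completing the square coordinate-wise to get $(a_i-b_i)^2/(a_i+b_i)$, bound the projection term via the submodular sandwich $\nabla_i f(\vy)\le\nabla_i f(\mathbf{p})\le\nabla_i f(\vx)$ (the paper imports this as Corollary 3.19 of Chen et al.; your three-way case split rederives it correctly, and it does hold throughout each step since $\vx(t_s)\le\mathbf{p}(t)\le\vy(t_s)$), handle the prelude on $[0,\eta_0]$ by the Ene--Nguyen--Vondr\'ak argument exactly as the paper does by citation, and finish with $\Psi(T)\le 4\,DG$ at the meeting point together with $\Psi(0)\ge 2\,OPT$.

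The genuine gap is in the integration step. Your claimed clean per-iteration bound $\Psi(T)-\Psi(0^+)\ge\sum_{s>0}\eta_s\sum_i (a_i-b_i)^2/(a_i+b_i)$ is false as stated: the directions $\mathbf{\Delta}^{(x)},\mathbf{\Delta}^{(y)}$ are frozen at $t_s$, but for $t\in(t_s,t_s+\eta_s)$ submodularity gives $\nabla_i f(\vx(t))\le\nabla_i f(\vx(t_s))$ and $\nabla_i f(\vy(t))\ge\nabla_i f(\vy(t_s))$, so the rate of $f(\vx)+f(\vy)$ along the step can drop strictly below $\sum_i(a_i^2+b_i^2)/(a_i+b_i)$; that expression is only the instantaneous rate at $t=t_s$. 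The missing ingredient is the line-search stopping rule: as in the computation of Lemma~\ref{lem:round-analysis}, within a step the quantity $\iprod{\nabla f(\vx+\eta\mathbf{\Delta}^{(x)})}{\mathbf{\Delta}^{(x)}}+\iprod{\nabla f(\vy+\eta\mathbf{\Delta}^{(y)})}{\mathbf{\Delta}^{(y)}}$ stays within $\gamma\,OPT$ of its value at $t_s$, so the integrated degradation over all steps is at most $\gamma\,OPT$ (total time at most $1$), and this is precisely the source of the $(1-\gamma/2)$ factor, i.e.\ the $\gamma\,OPT/4$ slack in $DG$. Your write-up instead asserts the exact per-step inequality and attributes the $O(\gamma\,OPT)$ loss to the termination/meeting discretization, which the paper disposes of separately in the preamble before the algorithm; as written you both claim a bound that fails and, if the termination error is also charged to the same $(1-\gamma/2)$ budget, you overspend it. The fix is mechanical --- insert the $-\gamma\,OPT$ per-unit-time correction coming from the line-search guarantee into your per-step bound --- after which your argument coincides with the paper's proof.
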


To show Theorem~\ref{lem:apx-usm}, we borrow the following lemmas of Ene et al. (proof of Lemma 7)\footnote{The analysis actually sets $\dot{x}=\dot{y}=0$ for $i$ such that $\nabla_i f(x) \leq 0$ or $\nabla_i f(y) \geq 0$, thus differing from our definition slightly for these $i$. However, their analysis is easily extended to all of $E$ for our choice of $\dot{x}$ and $\dot{y}$ with little modifications.} and Chen et al. (Corollary 3.19):
\begin{lemma}
With our choice of $\xdot$ and $\ydot$, we have the following inequalities:
$$\int_{0}^{\eta_0} \frac{1}{2}\left( \iprod{\nabla f(\vx)}{\xdot} + \iprod{\nabla f(\vy)}{\ydot} \right) + \iprod{\nabla f(\mathbf{p})}{\dot{\mathbf{p}}} \dt \geq 0,$$
$$\int_{t_i}^{t_i + \eta_i} \iprod{\nabla f(\mathbf{p})}{\dot{\mathbf{p}}} \dt \geq \eta_i \sum_{i\in S(t)} \frac{ \nabla_i f(\vx(t_i)) \nabla_i f(\vy(t_i)) }{\nabla_i f(\vx(t_i)) - \nabla_i f(\vy(t_i))},$$
for all $i \in S(t_i)$ where $S(t) = \{i \in E \, \vert \, \nabla_i f(\vx(t)) \geq 0 \land \nabla_i f(\vy(t)) \leq 0\}$.
\end{lemma}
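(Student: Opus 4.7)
Both inequalities follow the continuous-time template of Chen et al.~\cite{CFK19} (Corollary~3.19) and Ene et al.~\cite{ENV19} (proof of Lemma~7), with the minor extension that our $\xdot,\ydot$ are nonzero on all of $E$ whereas the cited works zero them out off $S(t)$; this is immaterial because coordinates with $\nabla_i f(\vx)\leq 0$ or $\nabla_i f(\vy)\geq 0$ contribute the same values on both sides under either convention. The template is to interpret each integrand as the time-derivative of a natural potential along the trajectory and then bound pointwise or telescope.

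For the second inequality, I would argue pointwise at each $t\in[t_i,t_i+\eta_i]$ and integrate. On the lower face of the projection (coordinates $j$ with $x^*_j<x_j(t)$, necessarily $x^*_j=0$), $\dot p_j=\Delta^{(x)}_j$; on the upper face ($x^*_j=1$), $\dot p_j=\Delta^{(y)}_j$; in the interior, $\dot p_j=0$. For $j\in S(t)$, submodularity gives $\nabla_j f(\mathbf{p})\geq\nabla_j f(\vy)$ on the lower face and $\nabla_j f(\mathbf{p})\leq\nabla_j f(\vx)$ on the upper face, and combined with the signs $\Delta^{(x)}_j\geq 0\geq\Delta^{(y)}_j$ and the identity $\nabla_j f(\vy)\Delta^{(x)}_j=\nabla_j f(\vx)\Delta^{(y)}_j=\frac{\nabla_j f(\vx)\nabla_j f(\vy)}{\nabla_j f(\vx)-\nabla_j f(\vy)}$, each such $j$ contributes at least $\frac{\nabla_j f(\vx(t))\nabla_j f(\vy(t))}{\nabla_j f(\vx(t))-\nabla_j f(\vy(t))}$ to $\iprod{\nabla f(\mathbf{p})}{\dot{\mathbf{p}}}$; coordinates outside $S(t)$ have $\dot p_j$ and $\nabla_j f(\mathbf{p})$ of the same sign by a submodular sandwich, so they contribute a non-negative amount. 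To replace time-$t$ gradients by time-$t_i$ ones on the right-hand side, I would observe that on $A\geq 0\geq B$ the map $(A,B)\mapsto\frac{AB}{A-B}$ is the negation of a scaled harmonic mean of $|A|,|B|$ and hence monotone in the required direction; submodular monotonicity of $\nabla f(\vx),\nabla f(\vy)$ along the step implies the pointwise integrand at $t\geq t_i$ is no more negative than at $t_i$, and coordinates that leave $S(t)$ mid-step contribute $0$, still dominating the non-positive $t_i$ value. Integrating gives the claim.

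For the first inequality, I would conceptually parametrize $\vx(t)=t\mathds{1}$, $\vy(t)=(1-t)\mathds{1}$, $\mathbf{p}(t)=\mathrm{Proj}_{[t\mathds{1},(1-t)\mathds{1}]}\vx^*$ on $[0,\eta_0]$, giving $\xdot=\mathds{1}$, $\ydot=-\mathds{1}$. Then the integrand equals $\tfrac{d}{dt}\Phi(t)$ with $\Phi(t)=\tfrac12(f(\vx(t))+f(\vy(t)))+f(\mathbf{p}(t))$, so the inequality reduces to the telescope $\Phi(\eta_0)\geq\Phi(0)$; since $\mathbf{p}(0)=\vx^*$, this is $\tfrac12(f(\eta_0\mathds{1})+f((1-\eta_0)\mathds{1}))+f(\mathbf{p}(\eta_0))\geq\tfrac12(f(\mathbf{0})+f(\mathds{1}))+f(\vx^*)$. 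The defining property of $\eta_0$ gives $\iprod{\nabla f(t\mathds{1})-\nabla f((1-t)\mathds{1})}{\mathds{1}}\geq 2\,\textrm{OPT}$ on $[0,\eta_0)$, which integrates to $\tfrac12(f(\eta_0\mathds{1})+f((1-\eta_0)\mathds{1}))-\tfrac12(f(\mathbf{0})+f(\mathds{1}))\geq \eta_0\,\textrm{OPT}$; pairing this with the bound $f(\vx^*)-f(\mathbf{p}(\eta_0))\leq\eta_0\,\textrm{OPT}$, obtained via the submodular lattice inequality $f(a\vee b)+f(a\wedge b)\leq f(a)+f(b)$ on suitably chosen pairs in the box $[\eta_0\mathds{1},(1-\eta_0)\mathds{1}]$, closes the telescope.

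The main obstacle is this last bound $f(\vx^*)-f(\mathbf{p}(\eta_0))\leq\eta_0\,\textrm{OPT}$: the straight-line path from $\vx^*$ to $\mathbf{p}(\eta_0)$ is not a monotone direction, since coordinates in $\{i:x^*_i=0\}$ move up while those in $\{i:x^*_i=1\}$ move down, so one cannot directly invoke concavity of $f$ along the path. The clean resolution in the referenced papers decomposes the difference into telescoping sub-paths each of whose contributions is controlled by the same line-search criterion that drives the first piece, and adapting this decomposition to our extended $\xdot,\ydot$ (and in particular verifying the compatibility of one-sided limits of $\dot{\mathbf{p}}$ at the finitely many transition times where a coordinate of $\mathbf{p}$ moves between the box interior and a face) is where the technical care is required.
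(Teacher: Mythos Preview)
The paper does not prove this lemma: it is quoted directly from Ene et al.\ (proof of Lemma~7) and Chen et al.\ (Corollary~3.19), with only a footnote remarking that the extension from their $\xdot,\ydot$ (zeroed outside $S(t)$) to the present ones ``is easily extended \ldots\ with little modifications.'' There is no in-paper argument to compare against; your sketch already goes further than the paper does.

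For the second inequality, your pointwise-then-integrate structure is right, and your handling of coordinates outside $S(t)$ (the only place the present $\xdot,\ydot$ differ from the cited works) is the correct observation. One small slip: the identity $\nabla_j f(\vy)\,\Delta^{(x)}_j=\nabla_j f(\vx)\,\Delta^{(y)}_j=\tfrac{\nabla_j f(\vx)\nabla_j f(\vy)}{\nabla_j f(\vx)-\nabla_j f(\vy)}$ holds only when the $\Delta$'s and the gradients are taken at the \emph{same} time, whereas within a step the $\Delta$'s are frozen at $t_i$ while the gradients vary with $t$. So your intermediate claim that each $j\in S(t)$ contributes at least $\tfrac{\nabla_j f(\vx(t))\nabla_j f(\vy(t))}{\nabla_j f(\vx(t))-\nabla_j f(\vy(t))}$ is not justified, and the harmonic-mean monotonicity detour is then aimed at the wrong target. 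The fix is simpler: on the lower face, $\nabla_j f(\mathbf{p}(t))\,\Delta^{(x)}_j(t_i)\geq\nabla_j f(\vy(t))\,\Delta^{(x)}_j(t_i)\geq\nabla_j f(\vy(t_i))\,\Delta^{(x)}_j(t_i)$, the last step because $\vy$ is coordinatewise decreasing and $\Delta^{(x)}_j\geq 0$; the upper face is symmetric. This lands directly on the $t_i$-expression without any intermediate time-$t$ right-hand side.

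For the first inequality, you telescope correctly and isolate the bound $f(\vx^*)-f(\mathbf{p}(\eta_0))\leq\eta_0\,\textrm{OPT}$ as the crux, but then you explicitly leave it open and point back to the same references the paper cites. That is honest, but it means neither your write-up nor the paper actually contains a proof of the first inequality; both defer to \cite{ENV19,CFK19}.
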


We are now ready to prove Theorem~\ref{lem:apx-usm}.
\begin{proof}
Let $$\zeta_{i,s} = \iprod{\nabla f(\vx(t_s))}{\mathbf{\Delta}^{(x(t_s))}} + \iprod{\nabla f(\vy(t_s))}{\mathbf{\Delta}^{(y(t_s))}}.$$
We have the following sequence of inequalities:
\begin{equation*}
\begin{split}
    & 2DG - OPT \\
    \, & \geq \int_0^1 \frac{1}{2}\diff{f(\vx) + f(\vy)}{t} + \diff{f(\mathbf{p})}{t} \dt \\
    \, & = \sum_s \int_{t_s}^{t_s + \eta_s} \frac12 \iprod{\nabla f(\vx + \eta \mathbf{\Delta}^{(x)})}{\mathbf{\Delta}^{(x)}} \\
    \, & + \sum_s \int_{t_s}^{t_s + \eta_s}\frac12 \iprod{\nabla f(\vy + \eta \mathbf{\Delta}^{(y)})}{\mathbf{\Delta}^{(y)}} + \iprod{\nabla f(\mathbf{p})}{\dot{\mathbf{p}}} \dt \\
    \, & \geq \sum_{s>0} \eta_s \left( \frac{1}{2}\left(-\gamma OPT + \zeta_{i,s}\right) + \sum_{i\in S(t)} \frac{ \nabla_i f(\vx(t_s)) \nabla_i f(\vy(t_s)) }{\nabla_i f(\vx(t_s)) - \nabla_i f(\vy(t_s))} \right) \\
    \, & \geq -\gamma OPT/2 + \sum_{s>0} \eta_s \sum_{i\in E} \frac{1}{2} \frac{ (\nabla_i f(\vx(t_s))_+ + \nabla_i f(\vy(t_s))_-)^2 }{\nabla_i f(\vx(t_s))_+ - \nabla_i f(\vy(t_s))_-}\\
\end{split}
\end{equation*}
where line 3 comes from the analysis of the line search in Lemma~\ref{lem:round-analysis}, and line 5 comes from completing the square for terms summed over $S(t)$.
\end{proof}

Now we have all the ingredients to prove Theorem~\ref{thm:dg-gains}. To get some intuition for Theorem~\ref{thm:dg-gains}, note that whenever $\nabla f(x)_+ + \nabla f(y)_-  \neq 0$, we get some gain over $OPT/2$ in $DG$ (Property~\ref{eq:dg-bound} of Algorithm~\ref{alg:cont-dg}). More precisely, the gain over $OPT/2$ is proportional to the $L_2$ norm of $\nabla f(x)_+ + \nabla f(y)_-$ under a  certain non-uniform scaling. The norm of the scaling vector can be related again to the performance of the double greedy algorithm. And, we show that the gap between $DG$ and the expected value of a uniformly random subset $R$ is bounded by the $L_1$ norm of $\nabla f(x)_+ + \nabla f(y)_-$. The Cauchy-Schwarz inequality connects these three quantities.

\begin{proof}
Let $RND := \E[f(R)]$ and suppose $RND \leq (1/2 - \delta)OPT$. For the sake of brevity, we omit the $t$ argument on $\vx$ and $\vy$ when it is clear from context.

First we note that if the algorithm terminates on line 3, then $f(\frac{1}{2}\mathds{1})$ has value at least $OPT/2$ and we are done. Suppose for the remainder of the proof that the algorithm has progressed past line 2.

We first estimate the gap between $DG$ and $RND$, by considering the evolution of the point $\frac{\vx(t)+\vy(t)}{2}$ from $\frac12 {\mathds 1}$ to the output of $DG$ (up to an additive $O(\gamma OPT)$):
\begin{equation*}
\label{eq:dg-rnd-gap}
\begin{split}
    DG-RND  & = \int_0^1 \diff{}{t} f\left(\frac{\vx+\vy}{2}\right) \dt \\
            & = \int_0^1 \left\langle \nabla f\left(\frac{\vx(t)+\vy(t)}{2}\right) , \diff{}{t} \left(\frac{\vx(t) + \vy(t)}{2} \right) \right\rangle \dt \\
            & \leq \frac{1}{2} \int_0^1 \sum_{i\in E} \left \lvert \nabla_i f\left(\frac{\vx+\vy}{2}\right) \right \rvert \left \lvert \frac{ \nabla_i f(\vx)_+ + \nabla_i f(\vy)_- }{\nabla_i f(\vx)_+ - \nabla_i f(\vy)_-} \right \rvert \dt\\
            & \leq \frac{1}{2} \int_0^1 \sum_{i\in E} \left \lvert \nabla_i f(\vx)_+ - \nabla_i f(\vy)_- \right \rvert \left \lvert \frac{ \nabla_i f(\vx)_+ + \nabla_i f(\vy)_- }{\nabla_i f(\vx)_+ - \nabla_i f(\vy)_-} \right \rvert \dt \\
             & = \frac{1}{2}  \int_0^1 \sum_{i\in E} \left \lvert \nabla_i f(\vx)_+ + \nabla_i f(\vy)_- \right \rvert \dt
\end{split}
\end{equation*}
where the bound on line 4 is due to the fact that the positive coordinates of $\nabla f\left(\frac{\vx+\vy}{2}\right)$ are upper bounded by $\nabla f(\vx)_+$ and the negative coordinates are lower bounded by $\nabla f(\vy)_-$.

Next we focus on the discretization of the integral. Let 
\begin{equation*}
\begin{split}
\delta_i & = \iprod{\nabla f(\vx) - \nabla f(\vx(t_i) + \eta \mathbf{\Delta}^{(x(t_i))}) }{\mathds{1}} \\
& - \iprod{\nabla f(\vy) - \nabla f(\vy(t_i) + \eta \mathbf{\Delta}^{(y(t_i))})}{\mathds{1}}.
\end{split}
\end{equation*}
For $\eta \in [0, \eta_i)$,
\begin{equation*}
\begin{split}
& \sum_{i\in E} \left \lvert \nabla_i f(\vx(t_i) + \eta \mathbf{\Delta}^{(x(t_i))})_+ + \nabla_i f(\vy(t_i) + \eta \mathbf{\Delta}^{(y(t_i))})_- \right \rvert \\
& - \sum_{i\in E} \left \lvert \nabla_i f(\vx(t_i))_+ + \nabla_i f(\vy(t_i))_- \right \rvert \\
&\leq  \sum_{i\in E} \left(\nabla_i f(\vx(t_i))_+ - \nabla_i f(\vx(t_i) + \eta \mathbf{\Delta}^{(x(t_i))})_+\right) \\
& - \sum_{i\in E} \left(\nabla_i f(\vy(t_i)))_- - \nabla_i f(\vy(t_i) + \eta \mathbf{\Delta}^{(y(t_i))})_-\right) \leq  \delta_i.
\end{split}
\end{equation*}
where line 2 uses the fact that $\nabla f(\vx) \geq \nabla f(\vx + \eta \mathbf{\Delta}^{(x)}) \geq \nabla f(\vy + \eta \mathbf{\Delta}^{(y)}) \geq \nabla f(\vy)$ by submodularity.

From the line search and the proof of Lemma~\ref{lem:round-analysis}, we have for $\eta < \eta_i$:
\begin{equation*}
\begin{split}
\iprod{\nabla f(\vx(t_i) + \eta \mathbf{\Delta}^{(x(t_i))}) - \nabla f(\vy(t_i) + \eta \mathbf{\Delta}^{(y(t_i))})}{\mathds{1}} \\
\geq \iprod{\nabla f(\vx(t_i)) - \nabla f(\vy(t_i))}{\mathds{1}} - \gamma OPT.
\end{split}
\end{equation*}
Rearranging, we have $\delta_i \leq \gamma OPT.$

Let $\alpha_i(t) = \frac{ ( \nabla_i f(\vx(t))_+ + \nabla_i f(\vy(t))_- )^2 }{\nabla_i f(\vx(t))_+ - \nabla_i f(\vy(t))_-}$ and $\beta_i(t) = \nabla_i f(\vx(t))_+ - \nabla_i f(\vy(t))_-$.
Combining the inequalities from above, we have 
\begin{equation*}
\begin{split}
    & \int_0^1 \sum_{i\in E} \left \lvert \nabla_i f(\vx)_+ + \nabla_i f(\vy)_- \right \rvert \dt \\
    & \leq \sum_{s > 0} \int_{t_s}^{t_s + \eta} \left( \gamma OPT  + \sum_{i\in E} \left \lvert \nabla_i f(\vx(t_s))_+ + \nabla_i f(\vy(t_s))_- \right \rvert \right) \dt \\
    & \leq \gamma OPT + \sum_{s > 0} \eta_s \sum_{i\in E} \left \lvert \nabla_i f(\vx(t_s))_+ + \nabla_i f(\vy(t_s))_- \right \rvert \\
    & \leq \gamma OPT + \sqrt{ \sum_{s > 0, i\in E} \eta_s \alpha_i(t_s) }  \sqrt{\sum_{s>0, i\in E} \eta_s \beta_i(t_s) }.
\end{split}
\end{equation*}
where the third line comes from applying Cauchy-Schwarz in the Euclidean norm (over the combined sum over $s$ and $i$).

Next, we show a bound relating $DG$ and the scaling vector $\nabla f(\vx)_+ - \nabla f(\vy)_-$. 
\begin{equation*}
\begin{split}
    DG 
       & \geq \int_0^1 \frac{1}{2} \left(\iprod{\nabla f(\vx)}{\xdot} + \iprod{\nabla f(\vy)}{\ydot}\right)  \dt \\
       & = \frac{1}{4} \int_0^1 \sum_{i\in E} \left( \alpha_i(t) + \beta_i(t)\right)\dt \\ 
       & \geq \frac{1}{4} \sum_{s>0} \int_{t_s}^{t_s+\eta_s} \sum_{i\in E} \beta_i(t) \dt \\
       & \geq \frac{1}{4} \sum_{s>0} \int_{t_s}^{t_s+\eta_s} \left( -\gamma OPT + \sum_{i\in E} \beta_i(t_s) \right) \dt \\
       & \geq -\frac{\gamma}{4}OPT + \frac{1}{4} \sum_{s>0} \eta_s \sum_{i\in E} \beta_i(t_s)
\end{split}
\end{equation*}
Consequently, $$\sum_{s>0, i\in E} \eta_s \beta_i(t_s) \leq (4+\gamma)DG \leq (4+\gamma)OPT.$$
Combining everything together, we have
\begin{equation*}
\begin{split}
DG-RND & \leq \frac{1}{2}  \int_0^1 \sum_{i\in E} \left \lvert \nabla_i f(\vx)_+ + \nabla_i f(\vy)_- \right \rvert \dt \\
       & \leq \frac{\gamma}{2}OPT + \frac{1}{2} \sqrt{ \sum_{s > 0, i\in E} \eta_s \alpha_i(t_s) }  \sqrt{\sum_{s>0, i\in E} \eta_s \beta_i(t_s) } \\
       & \leq \frac{\gamma}{2}OPT + \frac{1}{2} \sqrt{4DG - 2(1-\gamma/2)OPT}\sqrt{(4+\gamma)OPT}.
\end{split}
\end{equation*}

Thus if $RND= (1/2 - \delta)OPT$, then $$(DG - RND)^2 \leq (4+\gamma)(DG - (1-\gamma/2)OPT/2)OPT.$$ This inequality is quadratic in $DG$, and solving for a lower bound on $DG$ yields $DG = (1/2+\Omega(\delta^2)-O(\gamma))OPT.$

Finally, the result on the number of rounds follows from setting $\gamma = O(\delta^2)$.
\end{proof}

\appendix

\section{Some basics on submodular functions}
\label{app:basics}

We use the following facts in our constructions of instances of submodular maximization.
These properties have been used in previous work, see e.g. \cite{V09}.

\begin{lemma}
\label{lem:discretize}
Suppose that $F:\RR_+^n \rightarrow \RR_+$ is a continuous function, such that 
\begin{itemize}
    \item for every $\bx \in \RR_+^n$ and $i \in [n]$, the partial derivative $\partdiff{F}{x_i}\Big|_{\bx + t \be_i}$ is defined and continuous almost everywhere as a function of $t \in \RR_+$,
    \item the partial derivative $\partdiff{F}{x_i}$ (wherever defined) is non-negative and non-increasing in all coordinates.
\end{itemize}

Let $k \geq 1$ and $[n] = E_1 \cup \ldots \cup E_n$ be any partition of the ground set. Then
$$ f(S) = F\left(\frac{|S \cap E_1|}{k}, \frac{|S \cap E_2|}{k}, \ldots, \frac{|S \cap E_n|}{k} \right) $$
is a monotone submodular function.
\end{lemma}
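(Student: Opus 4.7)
The plan is to verify the two defining properties of a monotone submodular function, namely $f(S) \le f(T)$ for $S \subseteq T$ and the diminishing marginals condition, by writing each of them as an integral over an axis-aligned line segment in $\RR_+^n$ and invoking the hypotheses on $F$ pointwise under the integral sign. For a set $S \subseteq [n]$, let $\bx^S := (|S \cap E_1|/k, \ldots, |S \cap E_n|/k)$, so that $f(S) = F(\bx^S)$. Note that adding a single element $e \in E_i$ to $S$ shifts $\bx^S$ by $\be_i/k$, and that $S \subseteq T$ implies $\bx^S \le \bx^T$ coordinatewise.

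First I would handle monotonicity. For $S \subseteq T$, I connect $\bx^S$ to $\bx^T$ by an axis-aligned polygonal path, changing one coordinate at a time. Each segment of the path runs along the $i$-th axis from some point $\bz$ to $\bz + s\be_i$ with $s \ge 0$. By the hypothesis on $F$, the partial derivative $\partdiff{F}{x_i}\big|_{\bz + t\be_i}$ exists and is non-negative for almost every $t \in [0,s]$. Since this derivative is also non-increasing in each coordinate (and in particular as a function of $t$), it is monotone and hence Riemann integrable on $[0,s]$, so the fundamental theorem of calculus applies and yields $F(\bz + s\be_i) - F(\bz) = \int_0^s \partdiff{F}{x_i}\big|_{\bz + t\be_i}\, dt \ge 0$. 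Summing over the segments of the path gives $f(T) - f(S) = F(\bx^T) - F(\bx^S) \ge 0$.

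Next I would establish submodularity. Fix $S \subseteq T \subseteq [n] \setminus \{e\}$ and let $i$ be the unique index with $e \in E_i$. By the same integral representation,
\begin{equation*}
f(S \cup \{e\}) - f(S) = \int_0^{1/k} \partdiff{F}{x_i}\bigg|_{\bx^S + t\be_i}\, dt, \quad f(T \cup \{e\}) - f(T) = \int_0^{1/k} \partdiff{F}{x_i}\bigg|_{\bx^T + t\be_i}\, dt .
\end{equation*}
Since $\bx^S \le \bx^T$ and $e \notin S,T$ so the $i$-th coordinates also satisfy $\bx^S_i \le \bx^T_i$, we have $\bx^S + t\be_i \le \bx^T + t\be_i$ for every $t$. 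The hypothesis that $\partdiff{F}{x_i}$ is non-increasing in every coordinate therefore implies that the integrand at $\bx^S + t\be_i$ dominates the integrand at $\bx^T + t\be_i$ for almost every $t$, and integrating over $[0,1/k]$ gives the desired inequality $f(S \cup \{e\}) - f(S) \ge f(T \cup \{e\}) - f(T)$.

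The main subtlety to address carefully is the use of the fundamental theorem of calculus when $\partdiff{F}{x_i}$ is only defined almost everywhere. This is where the ``continuous almost everywhere'' clause of the hypothesis is used: combined with monotonicity of the partial derivative along an axis-aligned line, it makes $F$ absolutely continuous along each such line, which is precisely what is needed for the two integral identities above. Once these identities are in hand, the proof reduces to the pointwise sign comparisons described in the two preceding paragraphs.
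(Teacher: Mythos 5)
Your argument is essentially the paper's own proof (given there only as a sketch): represent each marginal value as $\int_0^{1/k}\partdiff{F}{x_i}\big|_{\bx+t\be_i}\,dt$, then deduce monotonicity from non-negativity of the integrand and submodularity from its coordinatewise non-increasing property. Your added remarks on the path decomposition for monotonicity and on why the fundamental theorem of calculus applies along axis-aligned lines are fine elaborations of the same approach, at least as careful as the paper's sketch.
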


\begin{proof}[Sketch of proof]
If we denote $x_i = \frac{1}{k} |S \cap X_i|$, the marginal values of $f$ are 
\begin{eqnarray*}
f(S+i) - f(S) & = & \int_{0}^{1/k} \partdiff{F}{x_i} \Big|_{\bx+t \be_i} dt
\end{eqnarray*}
considering that $\partdiff{F}{x_i}$ is defined almost everywhere as a function of $t$.
The non-negativity of $\partdiff{F}{x_i}$ implies that $f$ in monotone, and the non-increasing property of $\partdiff{F}{x_i}$ implies that $f$ is submodular.
\end{proof}

\begin{lemma}
\label{lem:monotone-combine}
If $F_1:\RR_+^n \rightarrow [0,1]$ and $F_2:\RR_+^n \rightarrow [0,1]$ satisfy the assumptions of Lemma~\ref{lem:discretize}, then so does $F:\RR_+^n \rightarrow [0,1]$,
$$F(\bx) = 1 - (1-F_1(\bx))(1-F_2(\bx)).$$
\end{lemma}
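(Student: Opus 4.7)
The plan is to write $F(\bx) = F_1(\bx) + F_2(\bx) - F_1(\bx)F_2(\bx)$, verify the range and continuity directly, and then compute the partial derivatives explicitly so that monotonicity properties can be read off from those of $F_1,F_2$.

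First I would check the range: since $F_1(\bx), F_2(\bx) \in [0,1]$, we have $1-F_1(\bx), 1-F_2(\bx) \in [0,1]$, so their product lies in $[0,1]$ and hence $F(\bx) \in [0,1]$. Continuity of $F$ is immediate from continuity of $F_1, F_2$ and the algebra of continuous functions.

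Next I would compute, along any axis-aligned line $\bx + t \be_i$ and at any $t$ where both $\partial F_1/\partial x_i$ and $\partial F_2/\partial x_i$ exist (which is almost every $t$, since the union of two null sets is null),
\begin{equation*}
\partdiff{F}{x_i} \;=\; (1-F_2(\bx))\,\partdiff{F_1}{x_i} \;+\; (1-F_1(\bx))\,\partdiff{F_2}{x_i}.
\end{equation*}
Each factor $1-F_j(\bx)$ lies in $[0,1]$ and each $\partial F_j/\partial x_i$ is non-negative by assumption, so the sum is non-negative. Moreover, the expression is continuous in $\bx$ wherever the partial derivatives are, again by continuity assumptions on $F_1, F_2$ and their partials.

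Finally I would verify the non-increasing property coordinate by coordinate. Fix a coordinate $j$. Since $F_1, F_2$ are non-decreasing (each partial derivative being non-negative), the factors $1-F_1(\bx)$ and $1-F_2(\bx)$ are non-increasing in $x_j$; by hypothesis, $\partial F_1/\partial x_i$ and $\partial F_2/\partial x_i$ are also non-increasing in $x_j$. Each is non-negative, so both products $(1-F_2)\,\partial F_1/\partial x_i$ and $(1-F_1)\,\partial F_2/\partial x_i$ are non-increasing in $x_j$ (a product of non-negative non-increasing functions is non-increasing), and hence so is their sum. This establishes the final property and completes the proof. The only subtlety is the almost-everywhere nature of the partial derivatives of $F_1, F_2$, but since the exceptional set on any axis-aligned line is a null set and we are combining just two such functions, the exceptional set for $F$ on any such line is also null, matching the hypothesis of Lemma~\ref{lem:discretize}. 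No step here presents a real obstacle; the lemma is essentially an exercise in the chain/product rule combined with the observation that the class of non-negative non-increasing functions is closed under sums and products.
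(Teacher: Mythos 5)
Your proof is correct and follows essentially the same route as the paper's own sketch: apply the product rule to get $\partdiff{F}{x_i} = (1-F_2)\partdiff{F_1}{x_i} + (1-F_1)\partdiff{F_2}{x_i}$ and observe that each factor is non-negative and non-increasing. The extra care you take with the almost-everywhere exceptional sets and the closure of non-negative non-increasing functions under products is a fine (if slightly more detailed) elaboration of the same argument.
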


\begin{proof}[Sketch of proof]
Assuming that the values of $F_i$ are in $[0,1]$, the same holds true for $F$.
The partial derivatives of $F$ are
$$ \partdiff{F}{x_i} = (1-F_1(\bx)) \partdiff{F_2}{x_i} + (1-F_2(\bx)) \partdiff{F_1}{x_i}.$$
Assuming that $\partdiff{F_i}{x_i}$ are non-negative and non-increasing, and hence $F_1$ and $F_2$ are non-decreasing, $\partdiff{F}{x_i}$ is non-negative and non-increasing as well.
\end{proof}
\section{Hard instance for \texorpdfstring{$1-1/e$}{1-1/e}}
\label{sec:1-1/e}

Here we prove Lemma~\ref{lemma:g-properties}, which describes a function implying the hardness of achieving any factor better than $1-1/e$ (which is embedded in the set $Y$ as discussed above). Although such instances are well known by now, we need a new variant, which proves the hardness of achieving a factor better than $1 - 1/e + 1/n^c$ for some constant $c>0$, as opposed to $1 - 1/e + 1/\log^c n$, which would follow for example from \cite{V13}. Also, we need the property that in the symmetric region it takes the form $g(y_1,\ldots,y_{\ell'}) = 1 - e^{-\sum y_j}$ (as opposed to $g(y_1,\ldots,y_{\ell'}) = 1 - (1-\frac{1}{\ell'} \sum y_j)^{\ell'}$). We provide a self-contained construction here.

\begin{lemma}
\label{lemma:1-1/e}
For any $0<\eps<1$ and $\ell' \geq 2/\eps^2$, there is a function $g:[0,1]^{\ell'} \rightarrow [0,1]$ such that
\begin{itemize}
    \item $g$ is continuous, non-decreasing in each coordinate, and it has first partial derivatives $\partdiff{g}{y_i}$ almost everywhere which are non-increasing in every coordinate.
    \item We have $g(1,0,0,\ldots,0) = 1-\eps$.
    \item If $|y_i- \frac{1}{\ell'} \sum_{i=1}^{\ell'} y_i| \leq \eps/2$ for all $i \in [\ell']$,
     then $g(\by) = \min \{ 1 - e^{-\sum_{j=1}^{\ell'} y_j},  1-\eps \}.$
\end{itemize}
\end{lemma}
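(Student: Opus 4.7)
The plan is to build $g$ in the product-cap form
\[ g(\by) = \min\left\{\,1 - \prod_{j=1}^{\ell'}\bigl(1-h(y_j)\bigr),\ 1-\eps\,\right\}, \]
where $h:[0,1]\to[0,1]$ is a single-variable monotone concave function chosen so that the symmetric-region identity holds exactly. The virtue of this form is that the structural properties required of $g$ follow from the structural properties of $h$ by the two lemmas already developed: by Lemma~\ref{lem:discretize}, each coordinate function $\by \mapsto h(y_j)$ has non-negative and non-increasing partial derivatives whenever $h$ is monotone concave; by iterated application of Lemma~\ref{lem:monotone-combine}, the product-complement $1 - \prod_j(1-h(y_j))$ inherits the same properties; and taking the minimum with a constant $1-\eps$ preserves them (the partial derivatives either equal the original ones or drop to $0$ once the cap binds, which is non-increasing). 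So the problem reduces to choosing $h$ correctly.

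I would define $h$ piecewise. Set $u := \ln(1/\eps)/\ell' + \eps/2$, and let $h(y) = 1-e^{-y}$ on $[0,u]$, while on $[u,1]$ let $h$ be the linear interpolation from $(u,\,1-e^{-u})$ to $(1,\,1-\eps)$. Then $h(1)=1-\eps$ gives $g(1,0,\ldots,0)=\min\{h(1),1-\eps\}=1-\eps$ immediately. Concavity of $h$ at the junction $y=u$ amounts to the right-derivative $(e^{-u}-\eps)/(1-u)$ being at most the left-derivative $e^{-u}$, equivalently $\eps \ge u\,e^{-u}$. This is the only place where the hypothesis $\ell' \ge 2/\eps^2$ enters: it forces $\ln(1/\eps)/\ell' \le \eps^2\ln(1/\eps)/2 = o(\eps)$, hence $u = \eps/2 + o(\eps)$, so $u e^{-u} \le u \le \eps$ as needed.

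For the symmetric-region identity, I would split into two cases. In the easy case $\bar{y} \le \ln(1/\eps)/\ell'$, every $y_j \le \bar{y}+\eps/2 \le u$, so $h(y_j)=1-e^{-y_j}$ and $\prod_j(1-h(y_j)) = e^{-\sum y_j}$, giving $g(\by)=1-e^{-\sum y_j}$ (which is at most $1-\eps$). In the harder case $\bar{y} > \ln(1/\eps)/\ell'$, I need $\prod_j(1-h(y_j))\le \eps$ so the cap is achieved. Writing $1-h(y) = e^{-u}(1-\alpha(y-u))$ on $[u,1]$ with $\alpha := (1-\eps\,e^u)/(1-u)\in(0,1]$, the elementary inequality $1-x \le e^{-x}$ gives $-\ln(1-h(y)) \ge u + \alpha(y-u)$. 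Summing, separating the indices with $y_j \le u$ from those with $y_j > u$, and using $y_j - u \le \bar{y}+\eps/2-u = \bar{y}-\ln(1/\eps)/\ell'$ uniformly, one obtains
\[ \sum_j -\ln(1-h(y_j)) \;\ge\; \ell'\bar{y} - (1-\alpha)\,m\,\bigl(\bar{y} - \ln(1/\eps)/\ell'\bigr), \]
where $m$ is the number of indices with $y_j>u$. Since $1-\alpha \le 1$ and $m \le \ell'$, the right-hand side is at least $\ell' \cdot \ln(1/\eps)/\ell' = \ln(1/\eps)$, delivering the desired product bound.

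The main obstacle, and where I expect the most care to be needed, is exactly this last step in the regime where the symmetric window straddles the junction $y=u$: there the function $y \mapsto -\ln(1-h(y))$ is not convex (it has a concave kink at $u$ produced by our slope constraint $\eps \ge u e^{-u}$), so Jensen's inequality cannot be applied directly to the coordinates. The elementary linear lower bound coming from $1-x \le e^{-x}$, together with the tight accounting of the $y_j > u$ terms sketched above, is what sidesteps this issue; the assumption $\ell' \ge 2/\eps^2$ ensures $u$ is small enough that $\alpha$ lies in $(0,1]$ and all the intermediate estimates are simultaneously consistent.
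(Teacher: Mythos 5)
Your construction is the same scheme the paper itself uses (a concave single-variable profile, a product of complements, and a cap at $1-\eps$), but there is a concrete unchecked condition that fails on part of the stated range of $\eps$. For the linear piece of $h$ on $[u,1]$ to be non-decreasing, and for $\alpha=(1-\eps e^{u})/(1-u)$ to be positive (your final estimate uses $1-\alpha\le 1$), you need $e^{-u}\ge\eps$, i.e.\ $u\le\ln(1/\eps)$. But $u\ge\eps/2$ by definition, so this fails whenever $\eps/2>\ln(1/\eps)$, i.e.\ for every $\eps$ larger than roughly $0.7$, no matter how large $\ell'$ is (and already around $\eps\approx 0.65$ with $\ell'=\lceil 2/\eps^2\rceil$). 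In that regime the interpolation from $(u,1-e^{-u})$ down to $(1,1-\eps)$ has negative slope, so $h$ is not monotone, the appeal to Lemma~\ref{lem:discretize}/Lemma~\ref{lem:monotone-combine} (which needs each coordinate function $\by\mapsto h(y_j)$ to have non-negative partial derivatives) breaks, and in the harder case your lower bound $\ell'\bar{y}-(1-\alpha)m(\bar{y}-\ln(1/\eps)/\ell')$ no longer yields $\ln(1/\eps)$ because $1-\alpha>1$. So the sentence claiming that $\ell'\ge 2/\eps^2$ ensures $\alpha\in(0,1]$ is wrong for $\eps$ near $1$: the hypothesis controls only the $\ln(1/\eps)/\ell'$ part of $u$, not the $\eps/2$ part, while the lemma is asserted for all $0<\eps<1$.

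The gap is easy to repair, and the repair is essentially what the paper does: do not force $h(1)=1-\eps$. The paper's profile $\gamma$ equals $1-e^{-y}$ on $[0,\eps]$ and is extended linearly with slope $e^{-\eps}$ on $[\eps,1]$, so it is automatically non-decreasing and concave for every $\eps\in(0,1)$; then $\gamma(1)=1-\eps e^{-\eps}\ge 1-\eps$ and the outer $\min\{\cdot,1-\eps\}$ already delivers $g(1,0,\ldots,0)=1-\eps$, while in the symmetric region with $\bar{y}>\eps/2$ the cap is shown to bind via AM--GM together with $\gamma(y)\ge y(1-\eps)$, which is where $\ell'\ge 2/\eps^2$ enters. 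Apart from this issue your argument tracks the paper's route; your case split at $\bar{y}=\ln(1/\eps)/\ell'$ and the linearization bound $-\ln(1-h(y))\ge u+\alpha(y-u)$ are a valid alternative to the AM--GM step, and the whole proof is correct in the regime $u\le\ln(1/\eps)$, which in particular covers the small values of $\eps$ actually used in the reduction.
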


\begin{proof}
We start by defining a function of a single variable:
\begin{itemize}
    \item For $x \in [0,\eps]$, $\gamma(x) = 1 - e^{-x}$.
    \item For $x \in [\eps,1]$, $\gamma(x) = 1 - e^{-\eps} (1-x+\eps)$.
\end{itemize}
This function is continuous ($\gamma(\eps) = 1 - e^{-\eps}$ according to both definitions), non-decreasing, and its derivative is continuous and non-increasing ($\gamma'(x) = e^{-x}$ for $x \in [0,\eps]$, and $\gamma'(x) = e^{-\eps}$ for $x \in [\eps,1]$). For $x \in [0,1]$, $\gamma(x) \in [0,1]$.

Then we define
$$ g(y_1,\ldots,y_{\ell'}) = \min \left\{ 1 - \prod_{i=1}^{\ell'} (1 - \gamma(y_i)),  1 - \eps \right\}.$$

Clearly, this is a continuous non-decreasing function. As long as $\prod_{i=1}^{\ell'} (1 - \gamma(y_i)) > \eps$, the partial derivatives are
$$ \partdiff{g}{y_i} = \gamma'(y_i) \prod_{j \neq i} (1 - \gamma(y_j)) $$
which is non-increasing in each coordinate $y_j$ (since $\gamma'(y_i)$ is non-increasing and $\gamma(y_j)$ is non-decreasing). For $\prod_{i=1}^{\ell'} (1 - \gamma(y_i)) < \eps$, the partial derivatives are $0$. ($\partdiff{g}{y_i}$ is discontinuous at $\prod_{i=1}^{\ell'} (1 - \gamma(y_i)) = \eps$ but that is at most one point $t$ on any line $\by(t) = \bx + t \be_i$).)

Consider $y_1=1$ and $y_2=\ldots=y_n=0$. We have $\gamma(0) = 0$ and $\gamma(1) = 1 - e^{-\eps} \eps$.
Therefore $g(1,0,\ldots,0) = \min \{ 1 - e^{-\eps} \eps, 1-\eps \} = 1-\eps$.

Finally, let $\bar{y} = \frac{1}{\ell'} \sum_{i=1}^{\ell'} y_i$ and suppose that $|y_i- \bar{y}| \leq \eps/2$ for all $i \in [\ell']$. Then we distinguish two cases:
\begin{itemize}
    \item If $ \bar{y} \leq \eps/2$, then $y_i \leq \eps$ for all $i$. Therefore, $\gamma(y_i) = 1 - e^{-y_i}$ and we obtain
    $$ g(y_1,\ldots,y_{\ell'}) = \min \left\{ 1 - e^{-\sum_{i=1}^{\ell'} y_i}, 1-\eps \right\}.$$
     \item If $\bar{y} > \eps/2$, then $g(\by)$ must be close to $1$: by the AMGM inequality,
     \begin{equation*}
     \begin{split}
     & 1 - \prod_{i=1}^{\ell'} (1-\gamma(y_i)) \geq 1 - \left( \frac{1}{\ell'} \sum_{i=1}^{\ell'} (1-\gamma(y_i)) \right)^{\ell'} \\ 
     & \geq 1 - \left( \frac{1}{\ell'} \sum_{i=1}^{\ell'} (1 - y_i (1 - \eps)) \right)^{\ell'} = 1 - \left( 1 - \bar{y} (1-\eps) \right)^{\ell'}
     \end{split}
     \end{equation*}
     where we used the fact that $\gamma(y_i) \geq y_i (1-\eps)$ for all $y_i \in [0,1]$.
     Hence, using the assumptions that $\bar{y} > \eps/2$ and $\ell' \geq 2/\eps^2$,
     $$ 1 - \prod_{i=1}^{\ell'} (1-\gamma(y_i)) \geq 1 - e^{-\ell' \bar{y} (1-\eps)} \geq 1 - e^{-(1-\eps) / \eps} \geq 1 - \eps $$
     where we used $e^{-t} \leq \frac{1}{1+t}$ with $t=\frac{1-\eps}{\eps}$. Consequently, $g(y_1,\ldots,y_{\ell'}) = 1-\eps$. \qedhere
\end{itemize}
\end{proof}

As a corollary, we present the following implication for the submodular maximization problem. We are not aware of a prior hardness result showing a hardness factor better than $1-1/e+1/\log^c n$.

\begin{theorem}
\label{thm:1-1/e}
For the monotone submodular maximization problem subject to a cardinality constraint, $\max \{f(S): |S| \leq k \}$, any $(1-1/e+\Omega(1/n^{1/4}))$-approximation algorithm on instances with $n$ elements would require exponentially many value queries.
\end{theorem}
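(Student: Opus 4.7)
The plan is to turn the function $g$ constructed in Lemma~\ref{lemma:1-1/e} into a random instance of cardinality-constrained submodular maximization by a standard symmetry-gap argument. Fix parameters $\ell'$ and $k$ with $k\ell'=n$ and $\ell' \ge 2/\eps^2$, and partition the ground set $E$ uniformly at random into equal-sized blocks $Y_1,\ldots,Y_{\ell'}$ of size $k$. Define the (random) objective function
\[
f(S) \;=\; g\!\left(\tfrac{|S\cap Y_1|}{k},\ldots,\tfrac{|S\cap Y_{\ell'}|}{k}\right),
\]
which is monotone submodular by Lemmas~\ref{lem:discretize} and~\ref{lem:monotone-combine} together with the properties of $g$ from Lemma~\ref{lemma:1-1/e}. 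Take the cardinality bound to be $k$. The optimum is attained by $S=Y_1$, since $f(Y_1)=g(1,0,\ldots,0)=1-\eps$, so $\mathrm{OPT}\ge 1-\eps$.

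The symmetry argument is the standard one. For any fixed query $Q\subseteq E$ with $|Q|\le k$, the randomness of the partition makes the counts $|Q\cap Y_i|$ concentrated around their common mean $|Q|/\ell'$. By Hoeffding's inequality applied to the (negatively correlated) indicator variables $\mathbf 1[j\in Y_i]$,
\[
\Pr\!\left[\,\bigl||Q\cap Y_i|-|Q|/\ell'\bigr|>\tfrac{\eps k}{2}\,\right]\;\le\;2\exp\!\left(-\tfrac{\eps^2 k^2}{2|Q|}\right)\;\le\;2\exp\!\left(-\tfrac{\eps^2 k}{2}\right),
\]
so by a union bound over the $\ell'$ blocks, every coordinate of $(y_1,\ldots,y_{\ell'})$ lies within $\eps/2$ of the average $\bar y$ with probability at least $1-2\ell'\exp(-\eps^2 k/2)$. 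Under this event, the third bullet of Lemma~\ref{lemma:1-1/e} gives
\[
f(Q)\;=\;\min\!\left\{1-e^{-\sum_i y_i},\,1-\eps\right\}\;\le\;1-e^{-|Q|/k}\;\le\;1-1/e.
\]

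Now I apply Yao's principle: fix a deterministic algorithm making at most $N$ queries and take a union bound over those queries. Every answer is, with probability at least $1-2N\ell'\exp(-\eps^2 k/2)$, in the symmetric form above, and therefore at most $1-1/e$. Consequently the algorithm's output has value at most $1-1/e$ on an instance whose optimum is at least $1-\eps$, giving approximation ratio at most $(1-1/e)/(1-\eps)=1-1/e+O(\eps)$.

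The main obstacle, and the only real calculation, is to balance $\ell'$, $k$ and $\eps$ so that $2N\ell'\exp(-\eps^2 k/2)$ is still $o(1)$ for $N$ exponentially large in $n^{c'}$, while keeping $\eps$ as small as $\Theta(n^{-1/4})$. Taking $\ell'=n^{1/2-\delta}$, $k=n^{1/2+\delta}$ and $\eps=\Theta(\ell'^{-1/2})=\Theta(n^{-1/4+\delta/2})$ satisfies $\ell'\ge 2/\eps^2$, and makes $\eps^2 k=\Theta(n^{2\delta})$, so the failure probability is $\exp(-\Omega(n^{2\delta}))$; choosing $\delta>0$ arbitrarily small and absorbing it into the constant pushes $\eps$ arbitrarily close to $n^{-1/4}$ while still tolerating $N=\exp(n^{\Omega(1)})$ queries. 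Converting from deterministic algorithms on a random instance back to randomized algorithms on a worst-case instance is done by Yao's principle in the standard way.
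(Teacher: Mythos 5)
Your overall plan is the same as the paper's: a uniformly random partition into $\ell'$ blocks of size $k$, the objective $f(S)=g\big(|S\cap Y_1|/k,\ldots,|S\cap Y_{\ell'}|/k\big)$ built from Lemma~\ref{lemma:1-1/e}, per-query concentration of the block fractions, a union bound over all queries, and Yao's principle. Two steps, however, do not hold up as written. First, you restrict attention to queries with $|Q|\le k$, but value queries may be arbitrary subsets of $E$, and your chain $2\exp\big(-\eps^2k^2/(2|Q|)\big)\le 2\exp\big(-\eps^2 k/2\big)$ uses $|Q|\le k$ and becomes vacuous when $|Q|$ is close to $n$. This particular issue is repairable: apply Hoeffding's inequality for sampling without replacement with $Y_i$ viewed as the random $k$-subset (which gives $2\exp(-\eps^2k/2)$ for every fixed $Q$, regardless of $|Q|$), or observe that any query with $\sum_i y_i \ge \ln(1/\eps)/(1-\eps)$ has value exactly $1-\eps$ because $\prod_i(1-\gamma(y_i))\le e^{-(1-\eps)\sum_i y_i}$, so very large queries reveal nothing.

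The more serious gap is the final parameter trade-off. Under $k\ell'=n$ and the requirement $\ell'\ge 2/\eps^2$ you have $k\le n\eps^2/2$, hence $\eps^2 k\le n\eps^4/2$; so with an additive Hoeffding bound of the form $\exp(-\Omega(\eps^2k))$, getting a query bound exponential in $n^{\Omega(1)}$ forces $\eps\ge n^{-1/4+\Omega(1)}$. Your choice $\eps=\Theta(n^{-1/4+\delta/2})$ reflects exactly this barrier, and the factor $n^{\delta/2}$ cannot be ``absorbed into the constant'': for each fixed $\delta>0$ you only rule out $(1-1/e+\Omega(n^{-1/4+\delta/2}))$-approximations, which is strictly weaker than the theorem's $\Omega(n^{-1/4})$. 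The paper instead fixes $k=\ell'=\sqrt n$ and $\eps=2n^{-1/4}$, and at this scale the plain additive bound is indeed not enough (the deviation threshold $k\eps/2=n^{1/4}$ is only about one standard deviation for large $|Q|$); what makes the construction work is that the only informative queries have $|Q|=O(k\ln(1/\eps))$ (larger ones are capped at $1-\eps$ as above), and for those $\E[|Q\cap Y_i|]=O(\ln(1/\eps))$ while the threshold is $n^{1/4}$, so a multiplicative Chernoff/Bernstein bound gives failure probability $\exp(-\tilde\Omega(n^{1/4}))$. Without a refinement of this kind your argument establishes only a weaker version of Theorem~\ref{thm:1-1/e}.
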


\begin{proof}
Consider $n = 4^\ell$, $k = \ell' = \sqrt{n} = 2^\ell$ and $\eps = 2/n^{1/4}$.
Let $[n] = Y_1 \cup Y_2 \cup \ldots \cup Y_{\ell'}$ be a uniformly random partition of $[n]$ into $\ell' = \sqrt{n}$ blocks of size $k = \sqrt{n}$. We consider a monotone submodular function based on Lemma~\ref{lemma:1-1/e},
$$ f(S) = g\left( \frac{|S \cap Y_1|}{k}, \frac{|S \cap Y_2|}{k}, \ldots, \frac{|S \cap Y_{\ell'}|}{k} \right).$$ 
The optimization problem $\max \{f(S): |S| \leq k\}$ has the solution $S = Y_1$ (or any other $Y_i$), which has value
$$ f(Y_1) = g(1,0,\ldots,0) = 1-\eps $$
by Lemma~\ref{lemma:1-1/e}. 

We claim that an algorithm cannot find a solution of value better than $1-1/e$, by arguments which are quite standard by now. For any fixed query $Q$, the fractions $y_i = \frac{|Q \cap Y_i|}{k}$ are well concentrated around their expectation which is $\bar{y} = \frac{1}{k \ell'} |Q| = \frac{1}{n} |Q|$. $Q \cap Y_i$ is a binomial random variable in the range $\{0,\ldots,\sqrt{n}\}$ and hence
by Chernoff-Hoeffding bounds, the standard deviation for $Q \cap Y_i$ is $O(n^{1/4})$ and hence $|y_i - \bar{y}| > \frac{1}{n^{1/4}} = \eps/2$ with exponentially small probability. Unless the algorithm issues exponentially many queries, with high probability it will never learn any information about the partition $(Y_1,\ldots,Y_{\ell'})$ and it will return a solution which again satisfies $|y_i - \bar{y}| \leq \eps/2$ for all $i$ with high probability. The value of any such solution, under the constraint that $\sum y_i \leq 1$, is at most $1-1/e$ by Lemma~\ref{lemma:1-1/e}.
\end{proof}

\bibliographystyle{ACM-Reference-Format}
\bibliography{paper/refs}


\begin{thebibliography}{30}


\ifx \showCODEN    \undefined \def \showCODEN     #1{\unskip}     \fi
\ifx \showDOI      \undefined \def \showDOI       #1{#1}\fi
\ifx \showISBNx    \undefined \def \showISBNx     #1{\unskip}     \fi
\ifx \showISBNxiii \undefined \def \showISBNxiii  #1{\unskip}     \fi
\ifx \showISSN     \undefined \def \showISSN      #1{\unskip}     \fi
\ifx \showLCCN     \undefined \def \showLCCN      #1{\unskip}     \fi
\ifx \shownote     \undefined \def \shownote      #1{#1}          \fi
\ifx \showarticletitle \undefined \def \showarticletitle #1{#1}   \fi
\ifx \showURL      \undefined \def \showURL       {\relax}        \fi
\providecommand\bibfield[2]{#2}
\providecommand\bibinfo[2]{#2}
\providecommand\natexlab[1]{#1}
\providecommand\showeprint[2][]{arXiv:#2}

\bibitem[\protect\citeauthoryear{Badanidiyuru, Mirzasoleiman, Karbasi, and
  Krause}{Badanidiyuru et~al\mbox{.}}{2014}]%
        {BMKK14}
\bibfield{author}{\bibinfo{person}{Ashwinkumar Badanidiyuru},
  \bibinfo{person}{Baharan Mirzasoleiman}, \bibinfo{person}{Amin Karbasi},
  {and} \bibinfo{person}{Andreas Krause}.} \bibinfo{year}{2014}\natexlab{}.
\newblock \showarticletitle{Streaming submodular maximization: massive data
  summarization on the fly}. In \bibinfo{booktitle}{\emph{The 20th {ACM}
  {SIGKDD} International Conference on Knowledge Discovery and Data Mining,
  {KDD} '14, New York, NY, {USA} - August 24 - 27, 2014}}.
  \bibinfo{pages}{671--680}.
\newblock
\urldef\tempurl%
\url{https://doi.org/10.1145/2623330.2623637}
\showDOI{\tempurl}


\bibitem[\protect\citeauthoryear{Balkanski, Breuer, and Singer}{Balkanski
  et~al\mbox{.}}{2018a}]%
        {BBS18}
\bibfield{author}{\bibinfo{person}{Eric Balkanski}, \bibinfo{person}{Adam
  Breuer}, {and} \bibinfo{person}{Yaron Singer}.}
  \bibinfo{year}{2018}\natexlab{a}.
\newblock \showarticletitle{Non-monotone Submodular Maximization in
  Exponentially Fewer Iterations}.
\newblock \bibinfo{journal}{\emph{CoRR}}  \bibinfo{volume}{abs/1807.11462}
  (\bibinfo{year}{2018}).
\newblock
\showeprint[arxiv]{1807.11462}
\urldef\tempurl%
\url{http://arxiv.org/abs/1807.11462}
\showURL{%
\tempurl}


\bibitem[\protect\citeauthoryear{Balkanski, Rubinstein, and Singer}{Balkanski
  et~al\mbox{.}}{2018b}]%
        {BRS19}
\bibfield{author}{\bibinfo{person}{Eric Balkanski}, \bibinfo{person}{Aviad
  Rubinstein}, {and} \bibinfo{person}{Yaron Singer}.}
  \bibinfo{year}{2018}\natexlab{b}.
\newblock \showarticletitle{An Exponential Speedup in Parallel Running Time for
  Submodular Maximization without Loss in Approximation}.
\newblock \bibinfo{journal}{\emph{CoRR}}  \bibinfo{volume}{abs/1804.06355}
  (\bibinfo{year}{2018}).
\newblock
\showeprint[arxiv]{1804.06355}
\urldef\tempurl%
\url{http://arxiv.org/abs/1804.06355}
\showURL{%
\tempurl}


\bibitem[\protect\citeauthoryear{Balkanski, Rubinstein, and Singer}{Balkanski
  et~al\mbox{.}}{2019}]%
        {BRS19b}
\bibfield{author}{\bibinfo{person}{Eric Balkanski}, \bibinfo{person}{Aviad
  Rubinstein}, {and} \bibinfo{person}{Yaron Singer}.}
  \bibinfo{year}{2019}\natexlab{}.
\newblock \showarticletitle{An optimal approximation for submodular
  maximization under a matroid constraint in the adaptive complexity model}. In
  \bibinfo{booktitle}{\emph{Proceedings of the 51st Annual {ACM} {SIGACT}
  Symposium on Theory of Computing, {STOC} 2019, Phoenix, AZ, USA, June 23-26,
  2019.}} \bibinfo{pages}{66--77}.
\newblock
\urldef\tempurl%
\url{https://doi.org/10.1145/3313276.3316304}
\showDOI{\tempurl}


\bibitem[\protect\citeauthoryear{Balkanski and Singer}{Balkanski and
  Singer}{2018a}]%
        {BS18}
\bibfield{author}{\bibinfo{person}{Eric Balkanski} {and} \bibinfo{person}{Yaron
  Singer}.} \bibinfo{year}{2018}\natexlab{a}.
\newblock \showarticletitle{The adaptive complexity of maximizing a submodular
  function}. In \bibinfo{booktitle}{\emph{Proceedings of the 50th Annual {ACM}
  {SIGACT} Symposium on Theory of Computing, {STOC} 2018, Los Angeles, CA, USA,
  June 25-29, 2018}}. \bibinfo{pages}{1138--1151}.
\newblock
\urldef\tempurl%
\url{https://doi.org/10.1145/3188745.3188752}
\showDOI{\tempurl}


\bibitem[\protect\citeauthoryear{Balkanski and Singer}{Balkanski and
  Singer}{2018b}]%
        {BS18b}
\bibfield{author}{\bibinfo{person}{Eric Balkanski} {and} \bibinfo{person}{Yaron
  Singer}.} \bibinfo{year}{2018}\natexlab{b}.
\newblock \showarticletitle{Approximation Guarantees for Adaptive Sampling}. In
  \bibinfo{booktitle}{\emph{Proceedings of the 35th International Conference on
  Machine Learning, {ICML} 2018, Stockholmsm{\"{a}}ssan, Stockholm, Sweden,
  July 10-15, 2018}}. \bibinfo{pages}{393--402}.
\newblock
\urldef\tempurl%
\url{http://proceedings.mlr.press/v80/balkanski18a.html}
\showURL{%
\tempurl}


\bibitem[\protect\citeauthoryear{Breuer, Balkanski, and Singer}{Breuer
  et~al\mbox{.}}{2019}]%
        {BBS19a}
\bibfield{author}{\bibinfo{person}{Adam Breuer}, \bibinfo{person}{Eric
  Balkanski}, {and} \bibinfo{person}{Yaron Singer}.}
  \bibinfo{year}{2019}\natexlab{}.
\newblock \showarticletitle{The {FAST} Algorithm for Submodular Maximization}.
\newblock \bibinfo{journal}{\emph{CoRR}}  \bibinfo{volume}{abs/1907.06173}
  (\bibinfo{year}{2019}).
\newblock
\showeprint[arxiv]{1907.06173}
\urldef\tempurl%
\url{http://arxiv.org/abs/1907.06173}
\showURL{%
\tempurl}


\bibitem[\protect\citeauthoryear{Buchbinder, Feldman, Naor, and
  Schwartz}{Buchbinder et~al\mbox{.}}{2015}]%
        {BFNS15}
\bibfield{author}{\bibinfo{person}{Niv Buchbinder}, \bibinfo{person}{Moran
  Feldman}, \bibinfo{person}{Joseph Naor}, {and} \bibinfo{person}{Roy
  Schwartz}.} \bibinfo{year}{2015}\natexlab{}.
\newblock \showarticletitle{A Tight Linear Time (1/2)-Approximation for
  Unconstrained Submodular Maximization}.
\newblock \bibinfo{journal}{\emph{{SIAM} J. Comput.}} \bibinfo{volume}{44},
  \bibinfo{number}{5} (\bibinfo{year}{2015}), \bibinfo{pages}{1384--1402}.
\newblock
\urldef\tempurl%
\url{https://doi.org/10.1137/130929205}
\showDOI{\tempurl}


\bibitem[\protect\citeauthoryear{Chekuri and Quanrud}{Chekuri and
  Quanrud}{2019a}]%
        {CQ19}
\bibfield{author}{\bibinfo{person}{Chandra Chekuri} {and} \bibinfo{person}{Kent
  Quanrud}.} \bibinfo{year}{2019}\natexlab{a}.
\newblock \showarticletitle{Parallelizing greedy for submodular set function
  maximization in matroids and beyond}. In
  \bibinfo{booktitle}{\emph{Proceedings of the 51st Annual {ACM} {SIGACT}
  Symposium on Theory of Computing, {STOC} 2019, Phoenix, AZ, USA, June 23-26,
  2019.}} \bibinfo{pages}{78--89}.
\newblock
\urldef\tempurl%
\url{https://doi.org/10.1145/3313276.3316406}
\showDOI{\tempurl}


\bibitem[\protect\citeauthoryear{Chekuri and Quanrud}{Chekuri and
  Quanrud}{2019b}]%
        {CK19}
\bibfield{author}{\bibinfo{person}{Chandra Chekuri} {and} \bibinfo{person}{Kent
  Quanrud}.} \bibinfo{year}{2019}\natexlab{b}.
\newblock \showarticletitle{Submodular Function Maximization in Parallel via
  the Multilinear Relaxation}. In \bibinfo{booktitle}{\emph{Proceedings of the
  Thirtieth Annual {ACM-SIAM} Symposium on Discrete Algorithms, {SODA} 2019,
  San Diego, California, USA, January 6-9, 2019}}. \bibinfo{pages}{303--322}.
\newblock
\urldef\tempurl%
\url{https://doi.org/10.1137/1.9781611975482.20}
\showDOI{\tempurl}


\bibitem[\protect\citeauthoryear{Chen, Feldman, and Karbasi}{Chen
  et~al\mbox{.}}{2019}]%
        {CFK19}
\bibfield{author}{\bibinfo{person}{Lin Chen}, \bibinfo{person}{Moran Feldman},
  {and} \bibinfo{person}{Amin Karbasi}.} \bibinfo{year}{2019}\natexlab{}.
\newblock \showarticletitle{Unconstrained submodular maximization with constant
  adaptive complexity}. In \bibinfo{booktitle}{\emph{Proceedings of the 51st
  Annual {ACM} {SIGACT} Symposium on Theory of Computing, {STOC} 2019, Phoenix,
  AZ, USA, June 23-26, 2019.}} \bibinfo{pages}{102--113}.
\newblock
\urldef\tempurl%
\url{https://doi.org/10.1145/3313276.3316327}
\showDOI{\tempurl}


\bibitem[\protect\citeauthoryear{da~Ponte~Barbosa, Ene, Nguyen, and
  Ward}{da~Ponte~Barbosa et~al\mbox{.}}{2015}]%
        {BENW15}
\bibfield{author}{\bibinfo{person}{Rafael da Ponte~Barbosa},
  \bibinfo{person}{Alina Ene}, \bibinfo{person}{Huy~L. Nguyen}, {and}
  \bibinfo{person}{Justin Ward}.} \bibinfo{year}{2015}\natexlab{}.
\newblock \showarticletitle{The Power of Randomization: Distributed Submodular
  Maximization on Massive Datasets}. In \bibinfo{booktitle}{\emph{Proceedings
  of the 32nd International Conference on Machine Learning, {ICML} 2015, Lille,
  France, 6-11 July 2015}}. \bibinfo{pages}{1236--1244}.
\newblock
\urldef\tempurl%
\url{http://proceedings.mlr.press/v37/barbosa15.html}
\showURL{%
\tempurl}


\bibitem[\protect\citeauthoryear{da~Ponte~Barbosa, Ene, Nguyen, and
  Ward}{da~Ponte~Barbosa et~al\mbox{.}}{2016}]%
        {BENW16}
\bibfield{author}{\bibinfo{person}{Rafael da Ponte~Barbosa},
  \bibinfo{person}{Alina Ene}, \bibinfo{person}{Huy~L. Nguyen}, {and}
  \bibinfo{person}{Justin Ward}.} \bibinfo{year}{2016}\natexlab{}.
\newblock \showarticletitle{A New Framework for Distributed Submodular
  Maximization}. In \bibinfo{booktitle}{\emph{Proceedings of the IEEE 57th
  Annual Symposium on Foundations of Computer Science}}.
\newblock


\bibitem[\protect\citeauthoryear{Ene and Nguyen}{Ene and Nguyen}{2019}]%
        {EN19}
\bibfield{author}{\bibinfo{person}{Alina Ene} {and} \bibinfo{person}{Huy~L.
  Nguyen}.} \bibinfo{year}{2019}\natexlab{}.
\newblock \showarticletitle{Submodular Maximization with Nearly-optimal
  Approximation and Adaptivity in Nearly-linear Time}. In
  \bibinfo{booktitle}{\emph{Proceedings of the Thirtieth Annual {ACM-SIAM}
  Symposium on Discrete Algorithms, {SODA} 2019, San Diego, California, USA,
  January 6-9, 2019}}. \bibinfo{pages}{274--282}.
\newblock
\urldef\tempurl%
\url{https://doi.org/10.1137/1.9781611975482.18}
\showDOI{\tempurl}


\bibitem[\protect\citeauthoryear{Ene, Nguyen, and Vladu}{Ene
  et~al\mbox{.}}{2018}]%
        {ENV19}
\bibfield{author}{\bibinfo{person}{Alina Ene}, \bibinfo{person}{Huy~L. Nguyen},
  {and} \bibinfo{person}{Adrian Vladu}.} \bibinfo{year}{2018}\natexlab{}.
\newblock \showarticletitle{A Parallel Double Greedy Algorithm for Submodular
  Maximization}.
\newblock \bibinfo{journal}{\emph{CoRR}}  \bibinfo{volume}{abs/1812.01591}
  (\bibinfo{year}{2018}).
\newblock
\showeprint[arxiv]{1812.01591}
\urldef\tempurl%
\url{http://arxiv.org/abs/1812.01591}
\showURL{%
\tempurl}


\bibitem[\protect\citeauthoryear{Ene, Nguyen, and Vladu}{Ene
  et~al\mbox{.}}{2019}]%
        {ENV19b}
\bibfield{author}{\bibinfo{person}{Alina Ene}, \bibinfo{person}{Huy~L. Nguyen},
  {and} \bibinfo{person}{Adrian Vladu}.} \bibinfo{year}{2019}\natexlab{}.
\newblock \showarticletitle{Submodular maximization with matroid and packing
  constraints in parallel}. In \bibinfo{booktitle}{\emph{Proceedings of the
  51st Annual {ACM} {SIGACT} Symposium on Theory of Computing, {STOC} 2019,
  Phoenix, AZ, USA, June 23-26, 2019.}} \bibinfo{pages}{90--101}.
\newblock
\urldef\tempurl%
\url{https://doi.org/10.1145/3313276.3316389}
\showDOI{\tempurl}


\bibitem[\protect\citeauthoryear{Fahrbach, Mirrokni, and
  Zadimoghaddam}{Fahrbach et~al\mbox{.}}{2019a}]%
        {FMZ19b}
\bibfield{author}{\bibinfo{person}{Matthew Fahrbach}, \bibinfo{person}{Vahab~S.
  Mirrokni}, {and} \bibinfo{person}{Morteza Zadimoghaddam}.}
  \bibinfo{year}{2019}\natexlab{a}.
\newblock \showarticletitle{Non-monotone Submodular Maximization with Nearly
  Optimal Adaptivity and Query Complexity}. In
  \bibinfo{booktitle}{\emph{Proceedings of the 36th International Conference on
  Machine Learning, {ICML} 2019, 9-15 June 2019, Long Beach, California,
  {USA}}}. \bibinfo{pages}{1833--1842}.
\newblock
\urldef\tempurl%
\url{http://proceedings.mlr.press/v97/fahrbach19a.html}
\showURL{%
\tempurl}


\bibitem[\protect\citeauthoryear{Fahrbach, Mirrokni, and
  Zadimoghaddam}{Fahrbach et~al\mbox{.}}{2019b}]%
        {FMZ19a}
\bibfield{author}{\bibinfo{person}{Matthew Fahrbach}, \bibinfo{person}{Vahab~S.
  Mirrokni}, {and} \bibinfo{person}{Morteza Zadimoghaddam}.}
  \bibinfo{year}{2019}\natexlab{b}.
\newblock \showarticletitle{Submodular Maximization with Nearly Optimal
  Approximation, Adaptivity and Query Complexity}. In
  \bibinfo{booktitle}{\emph{Proceedings of the Thirtieth Annual {ACM-SIAM}
  Symposium on Discrete Algorithms, {SODA} 2019, San Diego, California, USA,
  January 6-9, 2019}}. \bibinfo{pages}{255--273}.
\newblock
\urldef\tempurl%
\url{https://doi.org/10.1137/1.9781611975482.17}
\showDOI{\tempurl}


\bibitem[\protect\citeauthoryear{Feige}{Feige}{1998}]%
        {Feige98}
\bibfield{author}{\bibinfo{person}{Uriel Feige}.}
  \bibinfo{year}{1998}\natexlab{}.
\newblock \showarticletitle{A Threshold of ln \emph{n} for Approximating Set
  Cover}.
\newblock \bibinfo{journal}{\emph{J. {ACM}}} \bibinfo{volume}{45},
  \bibinfo{number}{4} (\bibinfo{year}{1998}), \bibinfo{pages}{634--652}.
\newblock
\urldef\tempurl%
\url{https://doi.org/10.1145/285055.285059}
\showDOI{\tempurl}


\bibitem[\protect\citeauthoryear{Feige, Mirrokni, and Vondr{\'{a}}k}{Feige
  et~al\mbox{.}}{2011}]%
        {FMV11}
\bibfield{author}{\bibinfo{person}{Uriel Feige}, \bibinfo{person}{Vahab~S.
  Mirrokni}, {and} \bibinfo{person}{Jan Vondr{\'{a}}k}.}
  \bibinfo{year}{2011}\natexlab{}.
\newblock \showarticletitle{Maximizing Non-monotone Submodular Functions}.
\newblock \bibinfo{journal}{\emph{{SIAM} J. Comput.}} \bibinfo{volume}{40},
  \bibinfo{number}{4} (\bibinfo{year}{2011}), \bibinfo{pages}{1133--1153}.
\newblock
\urldef\tempurl%
\url{https://doi.org/10.1137/090779346}
\showDOI{\tempurl}


\bibitem[\protect\citeauthoryear{Kazemi, Mitrovic, Zadimoghaddam, Lattanzi, and
  Karbasi}{Kazemi et~al\mbox{.}}{2019}]%
        {KMZLK19}
\bibfield{author}{\bibinfo{person}{Ehsan Kazemi}, \bibinfo{person}{Marko
  Mitrovic}, \bibinfo{person}{Morteza Zadimoghaddam}, \bibinfo{person}{Silvio
  Lattanzi}, {and} \bibinfo{person}{Amin Karbasi}.}
  \bibinfo{year}{2019}\natexlab{}.
\newblock \showarticletitle{Submodular Streaming in All Its Glory: Tight
  Approximation, Minimum Memory and Low Adaptive Complexity}. In
  \bibinfo{booktitle}{\emph{Proceedings of the 36th International Conference on
  Machine Learning, {ICML} 2019, 9-15 June 2019, Long Beach, California,
  {USA}}}. \bibinfo{pages}{3311--3320}.
\newblock
\urldef\tempurl%
\url{http://proceedings.mlr.press/v97/kazemi19a.html}
\showURL{%
\tempurl}


\bibitem[\protect\citeauthoryear{Krause and Golovin}{Krause and
  Golovin}{2014}]%
        {KG14}
\bibfield{author}{\bibinfo{person}{Andreas Krause} {and}
  \bibinfo{person}{Daniel Golovin}.} \bibinfo{year}{2014}\natexlab{}.
\newblock \showarticletitle{Submodular Function Maximization}.
\newblock \bibinfo{journal}{\emph{Tractability: Practical Approaches to Hard
  Problems}} (\bibinfo{year}{2014}), \bibinfo{pages}{71--104}.
\newblock


\bibitem[\protect\citeauthoryear{Kumar, Moseley, Vassilvitskii, and
  Vattani}{Kumar et~al\mbox{.}}{2015}]%
        {KMVV15}
\bibfield{author}{\bibinfo{person}{Ravi Kumar}, \bibinfo{person}{Benjamin
  Moseley}, \bibinfo{person}{Sergei Vassilvitskii}, {and}
  \bibinfo{person}{Andrea Vattani}.} \bibinfo{year}{2015}\natexlab{}.
\newblock \showarticletitle{Fast Greedy Algorithms in {M}ap{R}educe and
  Streaming}.
\newblock \bibinfo{journal}{\emph{{TOPC}}} \bibinfo{volume}{2},
  \bibinfo{number}{3} (\bibinfo{year}{2015}), \bibinfo{pages}{14:1--14:22}.
\newblock
\urldef\tempurl%
\url{https://doi.org/10.1145/2809814}
\showDOI{\tempurl}


\bibitem[\protect\citeauthoryear{Liu and Vondr{\'{a}}k}{Liu and
  Vondr{\'{a}}k}{2019}]%
        {LV19}
\bibfield{author}{\bibinfo{person}{Paul Liu} {and} \bibinfo{person}{Jan
  Vondr{\'{a}}k}.} \bibinfo{year}{2019}\natexlab{}.
\newblock \showarticletitle{Submodular Optimization in the MapReduce Model}. In
  \bibinfo{booktitle}{\emph{2nd Symposium on Simplicity in Algorithms,
  SOSA@SODA 2019, January 8-9, 2019 - San Diego, CA, {USA}}}.
  \bibinfo{pages}{18:1--18:10}.
\newblock
\urldef\tempurl%
\url{https://doi.org/10.4230/OASIcs.SOSA.2019.18}
\showDOI{\tempurl}


\bibitem[\protect\citeauthoryear{McGregor and Vu}{McGregor and Vu}{2017}]%
        {MV17}
\bibfield{author}{\bibinfo{person}{Andrew McGregor} {and}
  \bibinfo{person}{Hoa~T. Vu}.} \bibinfo{year}{2017}\natexlab{}.
\newblock \showarticletitle{Better Streaming Algorithms for the Maximum
  Coverage Problem}. In \bibinfo{booktitle}{\emph{20th International Conference
  on Database Theory, {ICDT} 2017, March 21-24, 2017, Venice, Italy}}.
  \bibinfo{pages}{22:1--22:18}.
\newblock
\urldef\tempurl%
\url{https://doi.org/10.4230/LIPIcs.ICDT.2017.22}
\showDOI{\tempurl}


\bibitem[\protect\citeauthoryear{Mirrokni and Zadimoghaddam}{Mirrokni and
  Zadimoghaddam}{2015}]%
        {MZ15}
\bibfield{author}{\bibinfo{person}{Vahab~S. Mirrokni} {and}
  \bibinfo{person}{Morteza Zadimoghaddam}.} \bibinfo{year}{2015}\natexlab{}.
\newblock \showarticletitle{Randomized Composable Core-sets for Distributed
  Submodular Maximization}. In \bibinfo{booktitle}{\emph{ACM Symposium on
  Theory of Computing (STOC)}}. \bibinfo{pages}{153--162}.
\newblock


\bibitem[\protect\citeauthoryear{Mirzasoleiman, Karbasi, Sarkar, and
  Krause}{Mirzasoleiman et~al\mbox{.}}{2013}]%
        {MKSK13}
\bibfield{author}{\bibinfo{person}{Baharan Mirzasoleiman},
  \bibinfo{person}{Amin Karbasi}, \bibinfo{person}{Rik Sarkar}, {and}
  \bibinfo{person}{Andreas Krause}.} \bibinfo{year}{2013}\natexlab{}.
\newblock \showarticletitle{Distributed Submodular Maximization: Identifying
  Representative Elements in Massive Data}. In
  \bibinfo{booktitle}{\emph{Advances in Neural Information Processing Systems
  26: 27th Annual Conference on Neural Information Processing Systems 2013.
  Proceedings of a meeting held December 5-8, 2013, Lake Tahoe, Nevada, United
  States.}} \bibinfo{pages}{2049--2057}.
\newblock
\urldef\tempurl%
\url{http://papers.nips.cc/paper/5039-distributed-submodular-maximization-identifying-representative-elements-in-massive-data}
\showURL{%
\tempurl}


\bibitem[\protect\citeauthoryear{Panconesi and Srinivasan}{Panconesi and
  Srinivasan}{1997}]%
        {PS97}
\bibfield{author}{\bibinfo{person}{Alessandro Panconesi} {and}
  \bibinfo{person}{Aravind Srinivasan}.} \bibinfo{year}{1997}\natexlab{}.
\newblock \showarticletitle{Randomized Distributed Edge Coloring via an
  Extension of the Chernoff-Hoeffding Bounds}.
\newblock \bibinfo{journal}{\emph{{SIAM} J. Comput.}} \bibinfo{volume}{26},
  \bibinfo{number}{2} (\bibinfo{year}{1997}), \bibinfo{pages}{350--368}.
\newblock
\urldef\tempurl%
\url{https://doi.org/10.1137/S0097539793250767}
\showDOI{\tempurl}


\bibitem[\protect\citeauthoryear{Vondr{\'{a}}k}{Vondr{\'{a}}k}{2009}]%
        {V09}
\bibfield{author}{\bibinfo{person}{Jan Vondr{\'{a}}k}.}
  \bibinfo{year}{2009}\natexlab{}.
\newblock \showarticletitle{Symmetry and Approximability of Submodular
  Maximization Problems}. In \bibinfo{booktitle}{\emph{50th Annual {IEEE}
  Symposium on Foundations of Computer Science, {FOCS} 2009, October 25-27,
  2009, Atlanta, Georgia, {USA}}}. \bibinfo{pages}{651--670}.
\newblock
\urldef\tempurl%
\url{https://doi.org/10.1109/FOCS.2009.24}
\showDOI{\tempurl}


\bibitem[\protect\citeauthoryear{Vondr{\'{a}}k}{Vondr{\'{a}}k}{2013}]%
        {V13}
\bibfield{author}{\bibinfo{person}{Jan Vondr{\'{a}}k}.}
  \bibinfo{year}{2013}\natexlab{}.
\newblock \showarticletitle{Symmetry and Approximability of Submodular
  Maximization Problems}.
\newblock \bibinfo{journal}{\emph{{SIAM} J. Comput.}} \bibinfo{volume}{42},
  \bibinfo{number}{1} (\bibinfo{year}{2013}), \bibinfo{pages}{265--304}.
\newblock
\urldef\tempurl%
\url{https://doi.org/10.1137/110832318}
\showDOI{\tempurl}


\end{thebibliography}

\end{document}